 \newtheorem{example}{Example}
 \newtheorem{theorem}{Theorem}
 \newtheorem{definition}{Definition}
 \newtheorem{lemma}{Lemma}
 \newtheorem{corollary}{Corollary}
\newif\ifshort
\newif\iflong
\newcommand{\eqCSP}{CSP$(\source_e)$}
\newcommand{\beqCSP}{CSP$(\target_e)$}
\newcommand{\sols}{\ensuremath{\mathrm{Sol}}}
\newcommand{\ar}{\ensuremath{\mathrm{ar}}}
\newcommand{\pbDef}[3]{%
\noindent
\begin{center}
\begin{boxedminipage}{0.98 \columnwidth}
#1\\[5pt]
\begin{tabular}{l p{0.70 \columnwidth}}
Input: & #2\\
Question: & #3
\end{tabular}
\end{boxedminipage}
\end{center}
}
\newcommand{\bigoh}{\mathcal{O}}
\newcommand{\cc}[1]{{\mbox{\textnormal{\textsf{#1}}}}\xspace}  
\newcommand{\NP}{\cc{NP}}
\newcommand{\FPT}{\cc{FPT}}
\newcommand{\Weft}{{\cc{W}}}
\newcommand{\W}[1]{{\Weft}{{[#1]}}}
\newcommand{\source}{\ensuremath{{\mathcal S}}}
\newcommand{\target}{\ensuremath{{\mathcal T}}}
\newcommand{\base}{\ensuremath{{\mathcal R}}}
\begin{document}

\title{Computational Short Cuts in Infinite Domain\\
Constraint Satisfaction}
\thanks{This is a preprint of {\em Computational Short Cuts in Infinite Domain Constraint Satisfaction}~\cite{lagerkvist2022e}}
\author{Peter Jonsson}
\address[P. Jonsson]%
   {Dep. Computer and Information
     Science \\Link\"opings  Universitet, Sweden
   }
\email{peter.jonsson@liu.se}   
\author{Victor Lagerkvist}
\address[V. Lagerkvist]%
   {Dep. Computer and Information
     Science \\Link\"opings  Universitet, Sweden}
   \email{victor.lagerkvist@liu.se}

   \author{Sebastian Ordyniak}
\address[S. Ordyniak]%
   {School of Computing \\ University of Leeds, Leeds, UK}
\email{sordyniak@gmail.com}

\maketitle

\begin{abstract}
    A backdoor in a finite-domain CSP instance is a set of variables where each possible
    instantiation moves the instance into a polynomial-time solvable class. Backdoors have found many applications in artificial intelligence and elsewhere, and the algorithmic problem of finding such backdoors has consequently been intensively studied. Sioutis and Janhunen (Proc. 42nd German Conference on AI (KI-2019)) have proposed a generalised backdoor concept suitable for infinite-domain CSP instances over binary constraints. We generalise their concept into a large class of CSPs that allow for higher-arity constraints. We show that this kind of infinite-domain backdoors have many of the positive computational properties that finite-domain backdoors have: the associated computational problems are fixed-parameter tractable whenever the underlying constraint language is finite.  On the other hand, we show that infinite languages make the problems considerably harder: the general backdoor detection problem is \Weft[2]-hard and fixed-parameter tractability is ruled out under standard complexity-theoretic assumptions. We demonstrate that backdoors may have suboptimal behaviour
    on binary constraints---this is detrimental from an AI perspective where binary constraints are predominant in, for instance, spatiotemporal applications.
    In response to this, we introduce {\em sidedoors} as an alternative to backdoors. The fundamental computational problems for
    sidedoors remain fixed-parameter tractable for finite constraint language (possibly also containing non-binary relations). Moreover, the sidedoor approach
    has appealing computational properties that
    sometimes leads to faster algorithms than the backdoor approach.
\end{abstract}  

\section{Introduction}

The {\em constraint satisfaction problem} (CSP) is the widely studied combinatorial problem of determining whether a set of constraints admits at least one solution. It is common to parameterise this problem by a set of relations (a {\em constraint language}) which determines the allowed types of constraints, and by choosing different languages one can model different types of problems. Finite-domain languages e.g.\ makes it possible to formulate Boolean {\em satisfiability} problems and {\em coloring} problems while infinite-domain languages are frequently used to model classical qualitative reasoning problems such as {\em Allen's interval algebra} and the {\em region-connection calculus} (RCC). Under the lens of classical complexity a substantial amount is known: every finite-domain CSP is either tractable or is \NP-complete~\cite{bulatov2017,zhuk20}, and for infinite domains there exists a wealth of dichotomy results separating tractable from intractable cases~\cite{Bodirsky:InfDom}.

The vast expressibility of infinite-domain CSPs makes the search for efficient solution methods extremely worthwhile.
While worst-case complexity results indicate that many interesting problems should be insurmountably hard to solve, they are nevertheless
solved in practice on a regular basis.
The discrepancy between theory and
practice is often explained by the existence of “hidden structure” in real-world
problems~\cite{gaspersS12}. If such a hidden structure exists, then it may be exploited and offer a way of constructing
improved constraint solvers.
To this end, {\em backdoors} have been proposed as a concrete way of exploiting
this structure. 
A backdoor represents a ``short cut'' to solving a hard problem instance and may be seen as a measurement for how close a problem instance is to being polynomial-time solvable~\cite{backdoors2005}. 
The existence of a backdoor then allows one to solve a hard problem by brute-force enumeration of assignments to the (hopefully small) backdoor and then solving the resulting problems in polynomial time. This approach has been highly successful: applications can be found in e.g.\
(quantified) propositional satisfiability~\cite{Samer:Szeider:jar2009,Samer:Szeider:HoS}, 
abductive reasoning~\cite{Pfandler:etal:ijcai2013}, argumentation~\cite{Dvorak:etal:ai2012}, planning~\cite{Kronegger:etal:ai2019}, logic~\cite{Meier:etal:algo2019}, and answer set programming~\cite{FICHTE201564}.
Williams et al. (2003) argue that backdoors may explain why SAT solvers occasionally fail to solve randomly generated instances with only a handful of variables but succeed in solving real-world instances containing thousands of variables. This argument appears increasingly relevant since modern SAT solvers frequently handle real-world instances with {\em millions} of variables. Might it be possible to make similar headway for infinite-domain CSP solvers? For example, can solvers in qualitative reasoning (see, e.g., the survey~\cite{Dylla:2017:SQS:3058791.3038927}) be analysed in a backdoor setting? Or are the various problems under consideration so different that a general backdoor definition does not make sense?

\paragraph*{Backdoors for infinite-domain CSPs}
We attack the problem from a general angle and propose a backdoor notion applicable to a large class of infinite-domain CSPs 
with pronounced practical and theoretical interest. 
Our departure is a recent paper~\cite{Sioutis:Janhunen:ki2019} where backdoors are studied for qualitative constraint networks (which corresponds to CSPs over certain restricted sets of binary relations).
We begin by demonstrating why the finite-domain definition of backdoors is inapplicable in the infinite-domain setting
and then continue by presenting our alternative definition, based on the idea of defining a backdoor with respect to {\em relationships} between variables rather than individual variables (which is the basis for the finite-domain definition~\cite{GASPERS201738}). We consider 
CSPs with respect to a fixed set of binary\footnote{The generalisation to higher-arity relations is straightforward.} basic relations $\base$,
that we refer to as the {\em base language}. The base language may, for instance, be
some simple mathematical structure like $({\mathbb Q}; <)$ or a more complex structure
such as
the basic relations in Allen's algebra or RCC-5. 
We 
then consider constraint languages that are definable by
first-order formulas over $\base$. 
For example, {\em temporal constraints} are based on relations that are first-order definable over
$({\mathbb Q}; <)$. The binary relations that are first-order definable
in this structure form the {\em point algebra}~\cite{Vilain:Kautz:aaai86}
and the corresponding CSP is tractable. However,
CSPs based on temporal constraints contain relevant problems that
are not expressible via binary relations; examples include AND/OR precedence constraints
in scheduling~\cite{Mohring:etal:sicomp2004}, the ORD-Horn class in temporal reasoning~\cite{Nebel:Burckert:jacm95}, and ordering CSPs~\cite{Goerdt:csr2009,Guruswami:etal:sicomp2011,Guttmann:Maucher:ifiptcs2006} which subsumes some classical \NP-complete problems such as
\textsc{Betweenness} and \textsc{Cyclic Ordering}~\cite[Problem MS1 and MS2]{gj79}.

Let us now consider two constraint languages $\source$ (the {\em source} language) and $\target$ (the {\em target} language) definable by
first-order formulas over $\base$. 
In this setting we then define a backdoor as a set of pairs of variables so that once the relationship between these variables are fixed, a given CSP$(\source)$
instance is transformed into a CSP$(\target)$: this is advantageous if CSP$(\target)$ is tractable and the backdoor is reasonably small.
The connection between CSP$(\source)$ and CSP$(\target)$ is described via a {\em simplification map} and we show that such maps
can be automatically computed whenever $\source$ and $\target$ are finite languages.
The backdoor approach to a problem now consists of two steps: first a backdoor is computed (the {\em backdoor detection} problem), and then the backdoor is used to solve the
given CSP instance (the {\em backdoor evaluation} problem). 
If we contrast our approach with that of Sioutis and Janhunen~\cite{Sioutis:Janhunen:ki2019}, then our method is applicable to CSPs over relations of arbitrarily high arity, and we require only mild, technical assumptions on the set of binary basic relations. Crucially, Sioutis and Janhunen do not consider the computational complexity of any backdoor related problems, and thus do not obtain any algorithmic results.

One of the most important properties of finite-domain backdoors is that they have desirable computational properties.
Unsurprisingly, backdoor detection is \NP-hard under the viewpoint of classical complexity, even for severely restricted cases. However, the situation changes if we adopt a 
{\em parameterized} complexity view. Here, the idea is to approach hard computational problems by characterizing problem parameters that can be expected to be small
in applications, and then design algorithms that are polynomial in the input size combined with a super-polynomial dependence on the parameter.
We say
that a problem is {\em fixed-parameter tractable} (fpt) if its running time is bounded by $f(p) \cdot n^{O(1)}$ where $n$ is the instance size, $p$ the parameter, and
$f$ is a computable function. 
The good news is then that the backdoor detection/evaluation problem for finite-domain CSPs with a fixed finite and tractable constraint language
is fpt when parameterized by the size of the backdoor~\cite{Gaspers:etal:backdoor}.
Thus, if the backdoor size is reasonably small, which we expect for many real-world instances with hidden structure, backdoors can both be found efficiently and be used to simplify the original problem.
However, if the constraint language is not finite, then the basic computational problems may become \Weft[2]-hard~\cite{10.1007/978-3-319-10428-7_18}, and this rules out fpt algorithms under widely accepted
complexity-theoretic assumptions.

While there are profound differences between finite- and infinite-domain CSPs, many important properties of backdoors fortunately remain valid
when switching to the world of infinite domains.
We construct algorithms for backdoor detection and evaluation showing that these problems are fixed-parameter tractable (with respect to the size of the backdoor) for infinite-domain CSPs based on {\em finite} constraint languages. 
Many CSPs studied in practice 
fulfill this condition and our algorithms are directly applicable to such problems. 
Algorithms for the corresponding finite-domain problems are based on enumeration of domain values. This is clearly
not possible when handling infinite-domain CSPs, so our algorithms enumerate other kinds of objects, which introduces certain technical difficulties.
Once we leave the safe confinement of finite languages the situation 
changes drastically. We prove that the backdoor detection problem is \W{2}-hard for infinite
languages, making it unlikely to be fixed-parameter tractable. 
Importantly, 
our \W{2}-hardness result is applicable to {\em all} base languages, meaning that it is not possible to circumvent this difficulty by
simply targeting some other base language. Hence, while some cases of hardness are expected, given earlier results for satisfiability 
and finite-domain CSPs~\cite{gaspersS12}, it is perhaps less obvious that essentially all infinite-domain CSPs exhibit
the same source of hardness.

\paragraph*{Sidedoors}

AI research on infinite-domain CSPs has historically focused on binary relations: it is, for example, well-known that
binary relations can capture many relevant facets of space and time, and thus
constitute a suitable basis for spatiotemporal reasoning.
This represents a possible problem for the backdoor approach outlined in the previous section since the possibilities
for substantially faster algorithms appear to be quite limited.
Recall that a backdoor is defined as a set of pairs of variables so that once the relationship between these variables are fixed, the resulting problem belongs to a given tractable class. If the instance we start with only contains
binary relations, then the backdoor approach merely turns into a straightforward method based on replacing
binary constraints with other binary constraints.
This may sound like a highly primitive approach but it is still valuable: it is reassuring to see that it leads
to substantial speed-ups in certain cases (see, for instance, Example~\ref{ex:rcc5}).
However, this observation gives us the impression that one can do better.
To this end, we suggest a new method: the {\em sidedoor} approach.

The idea behind the sidedoor approach is inspired by the behaviour of the backdoor approach on binary constraints. 
Let $I$ be an CSP instance over a language $\source$. Assume that $R(x,y)$
is a binary constraint in $I$ where
relation $R$ is the disjunction of two relations $R_1$ and $R_2$. Then we can do the following simple branching: remove the constraint $R(x,y)$ from $I$
and replace it with $R_1(x,y)$ in the first branch and with $R_2(x,y)$ in the second, and thereafter solve the resulting instances 
recursively. 
If the branches of this process ends in instances over some language $\target$, then we can solve $I$ with the aid of a solver
for CSP$(\target)$.

We generalise this idea by branching on
subinstances containing a certain number of variables (which we call the {\em radius}) and replacing the constraints within the subinstances with other constraints
in a way that preserves the solutions. This generalisation allows the algorithm to cover several constraints in each branch. We see, for instance, that the number of variable pairs
that are covered by a sidedoor increases quadratically with its radius. 
Hence, the number of binary constraints
that can be covered at once increases quite rapidly with increasing radius.
If this procedure leads to instances within a tractable class and the number of subinstances
that we need to replace is sufficiently small, then it appears
to be a viable approach. One should also note that this approach is not geared towards
infinite domains and it works for general CSPs regardless of domain size.

Just like the backdoor approach, we have two fundamental computational problems: {\em sidedoor detection} and {\em sidedoor evaluation}.
We prove that
the detection and evaluation problems for sidedoors are fixed-parameter tractable (when parameterized by the size of the sidedoor) under mild additional assumptions. 
The detection problem can be solved in $O((rs)^{rs})$ time where $s$ is sidedoor size
and $r$ is the chosen radius---thus, the detection problem is fpt even for infinite constraint languages.
However, there is an important caveat: the radius must satisfy $r \geq a$ where $a$ is the maximum arity of the relations
in the given instance. Thus, even if the detection problem is formally fpt for fixed source and target languages together with a fixed radius, 
the method is restricted to a subset of the
possible instances. 

The main obstacle for applying the sidedoor approach is to identify and describe suitable branchings on subinstances: we do this via
{\em branching maps}. Unlike simplification maps for backdoors, the computation of branching maps
appears to be a difficult problem. Another important difference is that the exact choice of branching map 
strongly affects the complexity of sidedoor evaluation; this is not the case for the backdoor
evaluation problem. We present an
algorithmic method to compute branching maps for languages that satisfy a particular definability condition.
We stress that this method does not produce branching maps with optimal behaviour and that more research is needed into the construction of branching maps.

We conclude with a remark: the sidedoor approach exploits another kind of hidden structure than the backdoor approach. While backdoors
focus on the tuples within the relations, sidedoors focus on how constraints are connected to each other.
Thus, backdoors and sidedoors may be viewed as complementary methods based on the kind of structure they exploit.
This distinction resembles the two main paths that have been followed in the search for tractable CSP fragments~\cite{Carbonnel:Cooper:constraints2016}: either one studies the CSP with a restricted set of allowed relations, or one
studies the CSP where restrictions are imposed on which variables may be mutually constrained.

\paragraph*{Outline}

The article has the following structure.
Section~\ref{sec:prel} presents some technical background concerning constraint satisfaction and parameterised complexity.
We present the generalisation of backdoors to infinite domains in
Section~\ref{sec:fin}
and we introduce sidedoors in Section~\ref{sec:sidedoors}.
Section~\ref{sec:concludingremarks} concludes the article
with a summary and a comprehensive discussion concerning future research directions.  
This article is an extended version of a conference paper~\cite{Jonsson:etal:cp2021} and the main extension is the addition of the sidedoor concept. The reader should be aware of the fact that we have changed some of the underlying assumptions and the terminology in this article
compared to the conference paper. 
For instance, we now assume that the base structure in the backdoor approach is a partition scheme (in the sense of Ligozat and Renz (2004)) in order to streamline
the presentation. Concerning the terminology, we have for example
changed {\em language triple}
into {\em backdoor triple} to emphasise the connection with backdoors.

\section{Preliminaries}
\label{sec:prel}

This section presents the basic definitions that we need in this article.

\subsection{Relations and Formulas}

A {\em relational signature} $\tau$ is a set of relation symbols $R_i$ where each $R_i$ is associated with an {\em arity} $k \in {\mathbb N}$.  A {\em relational structure} $\Gamma$ over the relational signature $\tau$ is a set $D_{\Gamma}$ (the {\em domain}) together with a relation
$R_i^{\Gamma} \subseteq D_{\Gamma}^{k_i}$ for each relation symbol $R_i$ in $\tau$ of arity $k_i$.
For simplicity, we often (1)
do not distinguish between the signature of a relational structure and its relations, and
(2) view a relational structure as a set of relations.

Assume that the relational structure $\base$ is a set of binary relations over the domain $D$. We say 
that the relations in $\base$ are {\em jointly exhaustive} (JE)
if $\bigcup \base = D^2$, and that they are
{\em pairwise disjoint} (PD)
if $R \cap R' = \varnothing$ for all distinct $R,R' \in \base$. Additionally, a relational structure which is both JE and PD is said to be JEPD.
We say that $\base$ is a {\em partition scheme}~\cite{Ligozat:Renz:pricai2004} if it is JEPD, it contains the equality relation
$\{(d,d) \; | \; d \in D\}$, and for every $R \in \base$, the language $\base$ contains the converse relation $R^{\smile}=\{(d',d) \; | \; (d,d') \in R\}$.

We continue by recalling some terminology from logic.  First-order formulas over a signature $\tau$ 
are inductively defined using the logical symbols of universal and existential quantification, disjunction, conjunction, negation, equality, variable symbols and the relation symbols in $\tau$. A $\tau$-formula without free variables is called a $\tau$-{\em sentence}.  
We write $\Gamma \models \phi$ if the relational structure $\Gamma$ (with
signature $\tau$) is a model for the $\tau$-sentence $\phi$.

Let $\varphi(x_1, \ldots, x_n)$ be a first-order formula (with equality) over free variables $x_1, \ldots, x_n$  over a relational structure $\Gamma=(D; R_1, \ldots, R_m)$.  We write $\sols(\varphi(x_1, \ldots, x_n))$ for the set of models of $\varphi(x_1, \ldots, x_n)$ with respect to $x_1, \ldots, x_n$, i.e., \[(d_1, \ldots, d_n) \in \sols(\varphi(x_1, \ldots, x_n))\]
if and only if 
\[(D; R_1, \ldots, R_m) \models \varphi(d_1, \ldots, d_n),\] and we write
\[R(x_1, \ldots, x_n) \equiv \varphi(x_1, \ldots, x_n)\] to indicate that
the relation $R$ equals $\sols(\varphi(x_1, \ldots, x_n))$.
In this case, we say that $R$ is {\em first-order definable} (fo-definable) in $\Gamma$.
Note that our definitions are always parameter-free, i.e. we do not allow the use of domain elements in them.
In addition to first-order logic, we sometimes use the quantifier-free (qffo), the primitive positive (pp), and
the quantifier-free primitive positive (qfpp) fragments. 
The qffo fragment consists of all formulas without quantifiers, the pp fragment
consists of formulas that are built using existential quantifiers, conjunction and equality, and the
qfpp consists of quantifier-free pp formulas. We lift the notion of definability to these fragments in the
obvious way. 

If the structure $\Gamma$ admits quantifier elimination (i.e. every first-order formula has a logically
equivalent formula without quantifiers), then fo-definability coincides with qffo-definability. 
This is sometimes relevant in the sequel since our results are mostly based on qffo-definability. However, it is important to note that if $R$ is pp-definable in $\Gamma$, then $R$ is not necessarily
qfpp-definable in $\Gamma$ even if $\Gamma$ admits quantifier elimination.
There is
a large number of structures admitting quantifier elimination and interesting examples are presented
in every standard textbook on model theory, cf. Hodges (1993)\nocite{HodgesLong}.
Well-known examples include {\em Allen's interval algebra} (under the standard representation via intervals in ${\mathbb Q}$)
and the spatial formalisms RCC-5 and RCC-8 (under the model-theoretically pleasant representation suggested by Bodirsky \& Wölfl (2011)\nocite{Bodirsky:Wolfl:ijcai2011}).
Some general quantifier elimination results that are
highly relevant for computer science and AI are discussed in Bodirsky~\cite[Sec. 4.3.1]{Bodirsky:InfDom}.


\subsection{The Constraint Satisfaction Problem} \label{sec:csp}
Let $\Gamma_D$ denote a relational structure with domain $D$. 
  The
{\em constraint satisfaction problem} over
$\Gamma_D$ (CSP$(\Gamma_D)$) is the computational problem of
determining whether a set of constraints over $\Gamma_D$ admits at
least one satisfying assignment. 

\pbDef{CSP$(\Gamma_D)$}
      {A tuple $(V,C)$ where $V$ is a set of variables and $C$ a set of constraints of the form $R(x_1, \ldots, x_k)$, where $R \in \Gamma_D$ and $x_1, \ldots, x_k \in V$.} {Does there exists a satisfying assignment to $(V,C)$, i.e., a function $f \colon V \rightarrow D$ such that $(f(x_1), \ldots, f(x_k)) \in R$ for each constraint $R(x_1, \ldots, x_K) \in C$?}

The relational structure $\Gamma_D$ is often called a {\em constraint language}.
We slightly abuse notation and write $\sols(I)$ for the set of all satisfying assignments to a CSP$(\Gamma)$ instance $I$. 
Finite-domain constraints admit a simple
representation obtained by explicitly listing all tuples in the
involved relation. 
For infinite domain CSPs, it is frequently assumed that $\Gamma$ is a {\em first-order reduct} of an underlying relational structure ${\mathcal R}$, i.e.,  each $R \in \Gamma$ is 
fo-definable in ${\mathcal R}$. Whenever ${\mathcal R}$ admits quantifier elimination, then we can always work with the
{\em qffo reduct} where each $R \in \Gamma$ is qffo-definable in ${\mathcal R}$. 

\begin{example} \label{ex:eq}
  An {\em equality} language is a first-order reduct of a structure $(D; \emptyset)$ where $D$ is a countably
  infinite domain. Each literal in a first-order formula over this structure is either of the form $x = y$, or 
  $x \neq y \equiv \neg (x =
  y)$. The structure  $(D; \emptyset)$ admits quantifier elimination so
  every first-order reduct can be viewed as a qffo reduct. For example, if we let relation $\delta \subseteq D^3$ be defined via the formula
  $(x = y \land x \neq z) \lor (x \neq y \land y = z)$, then CSP$(\{\delta\})$ is known to be \NP-complete~\cite{bodirskykara2010}. On the other hand, CSP$(\{=, \neq\})$ is well-known to be tractable. 
    
    A {\em
    temporal language} is a first-order reduct of
 $(\mathbb{Q}; <)$. The structure  $(\mathbb{Q}; <)$ admits quantifier elimination so
  it is sufficient to consider qffo reducts.  For example, the {\em betweenness relation}
  $\beta=\{(x,y,z) \in {\mathbb Q}^3 \; | \; \min(x,z) < y < \max(x,z)\}$
 can be defined via the formula $(x<y \land y < z)
 \lor (z<y \land y <x)$, and the resulting CSP is well-known to be \NP-complete.
 
 It will often be useful to assume that the underlying relational structure is JEPD or that it is a partition scheme. Clearly, neither $(\mathbb{N}; \emptyset)$, nor $(\mathbb{Q}; <)$ are JEPD, but they can easily be expanded to satisfy the JEPD condition by (1) adding the converse of each relation, and (2) adding the complement of each relation. Thus, an equality language can be defined as a first-order reduct of the partition scheme $(\mathbb{N}; =, \neq)$, and a temporal language as a first-order reduct of the partition scheme $(\mathbb{Q}; =, <, >)$. More ideas for
 transforming non-JEPD languages into JEPD languages and partition schemes can be found in Bodirsky \& Jonsson~\cite[Sec. 4.2]{DBLP:journals/jair/BodirskyJ17}.
 
\end{example}  


  Constraint languages in this framework capture many problems of
  particular interest in artificial intelligence. For example,
  consider the {\em region connection calculus} with the 5 basic relations
  $\Theta=\{{\sf DR},{\sf PO},{\sf PP},{\sf PP}^{-1},{\sf
    EQ}\}$ (RCC-5). 
    See Figure~\ref{fig:rcc5} for a visualisation of
  these relations. These five relations obviously form a partition scheme.
  In the traditional formulation of this calculus one then
  allows unions of the basic relations, which (for
  two regions $X$ and $Y$) e.g.\ allows us to
  express that $X$ is a proper part of $Y$ or $X$ and $Y$ are equal.
  This relation can easily be defined
  via the (quantifier-free) first-order formula $(x {\sf PP} y) \lor (x {\sf EQ}
  y)$. Hence, if we let $\Theta^{\vee =}$ be the constraint language
  consisting of all unions of basic relations in $\Theta$, then
$\Theta^{\vee =}$ is a qffo reduct of $\Theta$.

\begin{figure}
    \centering
    \captionsetup[subfigure]{labelformat=empty}
    \begin{subfigure}[b]{0.19\textwidth}
        \centering
        \begin{tikzpicture}[scale=0.65]
            \draw (0,0) circle (1);
            \draw[dashed] (0,0) circle (1);
            \draw (-0.33,0) node {$X$};
            \draw (0.33,0) node {$Y$};
        \end{tikzpicture}
        \caption{${\sf EQ}(X,Y)$}
    \end{subfigure}
    \hfill
    \begin{subfigure}[b]{0.19\textwidth}
        \centering
        \begin{tikzpicture}[scale=0.4]
            \draw (0,0.75) circle (1);
            \draw[dashed] (0,-0.75) circle (1);
            \draw (0,0.75) node {$X$};
            \draw (0,-0.75) node {$Y$};
        \end{tikzpicture}
        \caption{${\sf PO}(X,Y)$}
    \end{subfigure}
    \hfill
    \begin{subfigure}[b]{0.19\textwidth}
        \centering
        \begin{tikzpicture}
            \draw (-0.3,0) circle (0.4);
            \draw (-0.3,0) node {$X$};
            \draw[dashed] (0,0) circle (1);
            \draw (0.5,0) node {$Y$};
        \end{tikzpicture}
        \caption{${\sf PP}(X,Y)$}
    \end{subfigure}
    \hfill
    \begin{subfigure}[b]{0.19\textwidth}
        \centering
        \begin{tikzpicture}
            \draw[dashed] (-0.3,0) circle (0.4);
            \draw (-0.3,0) node {$Y$};
            \draw (0,0) circle (1);
            \draw (0.5,0) node {$X$};
        \end{tikzpicture}
        \caption{${\sf PP}^{-1}(X,Y)$}
    \end{subfigure}
    \hfill
    \begin{subfigure}[b]{0.19\textwidth}
        \centering
        \begin{tikzpicture}[scale=0.4]
            \draw (0,1.1) circle (1);
            \draw (0,1.1) node {$X$};
            \draw[dashed] (0,-1.1) circle (1);
            \draw (0,-1.1) node {$Y$};
        \end{tikzpicture}
        \caption{${\sf DR}(X,Y)$}
    \end{subfigure}
    
    \caption{Illustration of the basic relations of RCC-5 with 
    two-dimensional disks.}
    \label{fig:rcc5}
\end{figure}
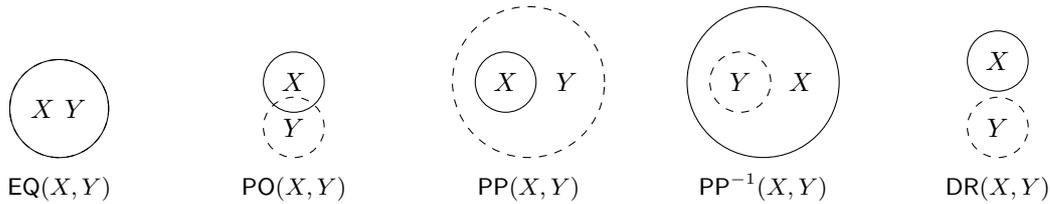

\subsection{Parameterized Complexity}
 
To analyse the complexity of CSPs, we use the framework of 
{\em parameterized complexity}~\cite{rod1999parameterized,book/FlumG06}
where the run-time of an algorithm is studied with respect to a parameter
$p \in {\mathbb N}$ and the input size~$n$.
Given an instance $I$ of some
computational problem, we let $||I||$ denote the bit-size of $I$.
Many important CSPs are \NP-hard on general instances and are regarded
as being intractable. However, realistic problem instances are not chosen arbitrarily and they often
contain structure that can be exploited for solving the instance efficiently.
The idea behind parameterized analysis is that the parameter describes the structure of
the instance in a computationally meaningful way.
The result is a fine-grained complexity analysis
that is more relevant to real-world problems while still admitting a rigorous theoretical treatment including, for instance,
algorithmic performance guarantees.
The most favourable complexity class is \FPT\
(\emph{fixed-parameter tractable})
which contains all problems that can be decided 
in $f(p)\cdot n^{O(1)}$ time, where $f$ is a computable
function.


We will prove that certain problems are not in $\FPT$
and this requires some machinery.
A {\em parameterized problem} is, formally speaking, a subset of $\Sigma^* \times {\mathbb N}$
where $\Sigma$ is the input alphabet. Reductions between parameterized problems need to take
the parameter into account. To this end, we will use {\em parameterized reductions} (or fpt-reductions).
Let $L_1$ and $L_2$ denote parameterized problems with $L_1 \subseteq \Sigma_1^* \times {\mathbb N}$
and $L_2 \subseteq \Sigma_2^* \times {\mathbb N}$. 
A parameterized reduction from $L_1$ to $L_2$ is a
mapping $P: \Sigma_1^* \times {\mathbb N} \rightarrow \Sigma_2^* \times {\mathbb N}$
such that
(1) $(x, k) \in  L_1$ if and only if $P((x, k)) \in L_2$, (2) the mapping can be computed
by an fpt-algorithm with respect to the parameter $k$, and (3) there is a computable function $g : {\mathbb N} \rightarrow {\mathbb N}$ 
such that for all $(x,k) \in L_1$ if $(x', k') = P((x, k))$, then $k' \leq g(k)$.
The class $\Weft[1]$ contains all problems that are fpt-reducible to Independent Set when parameterized
by the size of the solution, i.e. the number of vertices in the independent set.
Showing $\Weft[1]$-hardness (by an fpt-reduction) for a problem rules out the existence of a fixed-parameter
algorithm under the standard assumption $\FPT \neq \Weft[1]$.
$\Weft[1]$ is a complexity class in the {\em weft hierarchy}
$\Weft[1] \subseteq \Weft[2] \subseteq \dots$. We note that
proving $\Weft[k]$-hardness, $k \geq 1$, for a problem
implies $\Weft[1]$-hardness and the nonexistence of fixed-parameter algorithms.

\section{Backdoors}
\label{sec:fin}

This section is devoted to the introduction of a general backdoor concept
for CSPs. The underlying motivation is discussed in Section~\ref{sec:backdoor-motivation}
and the formal definition of backdoors is presented in~\ref{sec:basic-defs}.
Section~\ref{sec:algs} contains algorithms for various backdoor problems over finite constraint languages.
Finally, certain hardness results are proved in Section~\ref{sec:inf} for backdoor problems over infinite constraint languages.

\subsection{Motivation}
\label{sec:backdoor-motivation}

We begin by recapitulating the standard definition of backdoors for finite-domain CSPs.
Let $\alpha \colon X \rightarrow D$ be an assignment. For a $k$-ary
constraint $c = R(x_1, \ldots, x_k)$ 
we denote by $c_{\mid \alpha}$ the constraint over the relation $R_0$ and with scope $X_0$ obtained from $c$ as follows: $R_0$ is obtained from
$R$ by 

\begin{enumerate}
\item
removing $(d_1 ,\ldots, d_k)$ from $R$ if there exists $1 \leq i \leq k$ such that
$x_i \in X$ and $\alpha(x_i) \neq d_i$, and 

\item
removing from all remaining tuples all coordinates $d_i$ with
$x_i \in X$.
\end{enumerate}

The scope $X_0$ is obtained from $x_1, \ldots, x_k$ by removing every $x_i \in X$. For a set $C$ of
constraints we define $C_{\mid \alpha}$ as $\{c_{\mid \alpha} \colon c \in C\}$. We now have everything in place to define the standard notion of a (strong) backdoor, in the context of Boolean satisfiability problems and finite-domain CSPs.

\begin{definition}[See, for instance, \cite{GASPERS201738} or \cite{williams2003}] \label{def:fin_backdoor}
Let ${\mathcal H}$ be a set of CSP instances. A {\em ${\mathcal H}$-backdoor}
for a CSP$(\Gamma_D)$ instance $(V,C)$ is a set $B \subseteq V$ where
$(V \setminus B, C_{\mid \alpha}) \in {\mathcal H}$ for each $\alpha \colon
B \rightarrow D$. 
\end{definition}

\iflong
In practice, ${\mathcal H}$ is typically defined as a
polynomial-time solvable subclass of CSP and one is thus interested in finding a backdoor into the tractable class ${\mathcal H}$.
If the CSP instance $I$ has a backdoor of size $k$, then it can be solved in $|D|^k \cdot {\rm poly}(||I||)$ time.
This is an exponential running time with the advantageous feature that it is
exponential not in the instance size $||I||$, but in the domain size and backdoor set size only.

\fi


\begin{example}
  Let us first see why Definition~\ref{def:fin_backdoor} is less
  impactful for infinite-domain CSPs. Naturally, the most obvious problem is that one, even for a fixed $B \subseteq V$, need to consider infinitely many functions $\alpha \colon V \rightarrow D$, and there is thus no general argument which resolves the backdoor evaluation problem.
  However, even for a fixed assignment $\alpha \colon V \rightarrow D$ we may run into severe problems. Consider a single equality
  constraint of the form $(x = y)$ and an assignment $\alpha$ where
  $\alpha(x) = 0$ but where $\alpha$ is not defined on $y$. Then $(x =
  y)_{\mid \alpha} = \{(0)\}$, i.e., the constant $0$ relation, 
  which is {\em not} an equality relation.
  Similarly, consider a constraint $X r Y$ where $r$ is a basic
  relation in RCC-5. Regardless of $r$, assigning a fixed
  region to $X$ but not to $Y$ results in a CSP instance which is not
  included in {\em any} tractable subclass of RCC-5 (and is not even an RCC-5 instance). 
\end{example}  

Hence, the usual definition of a backdoor fails to compensate for a
fundamental difference between finite and infinite-domain CSPs: that
assignments to variables are typically much less important
than the {\em relation} between variables.  

\subsection{Basic Definitions and Examples}
\label{sec:basic-defs}

Recall that we in the infinite-domain setting are mainly interested in CSP$(\Gamma)$ problems where each relation in $\Gamma$ is qffo-definable over a fixed relational structure $\base$. Hence, in the backdoor setting we obtain three components: a relational structure $\base$ and two qffo reducts $\source$ and $\target$ over $\base$, where CSP$(\source)$ is the (likely \NP-hard) problem which we want to solve by finding a backdoor to the (likely tractable) problem CSP$(\target)$. Additionally, we will assume that $\base$ is a partition scheme.


\begin{definition}
Let $\base=(D; R_1, R_2, \dots)$ be a partition scheme and let $\source,\target$ be
qffo reducts of $\base$. We say that $[\source,\target,\base]$ is a {\em backdoor triple} and we refer to
\begin{itemize}
\item
$\source$ as the {\em source language},

\item
$\target$ as the {\em target language}, and

\item
$\base$ as the {\em base language}.
\end{itemize}
\end{definition}

Note that $\source$ and $\target$ may contain non-binary relations even though $\base$ only contains binary relations. One should note that all concepts work equivalently well for higher arity relations in $\base$ 
but it complicates the presentation. Generalisations of partition schemes to higher-arity relations
is, for instance, described and discussed by Dylla et al.~\cite[Section 3.1]{Dylla:2017:SQS:3058791.3038927}.


We continue by describing how constraints can be simplified in the presence of a partial assignment of relations from $\base$ to
pairs of variables.

\iflong
\begin{definition}
  Let $[\source, \target, \base]$ be a backdoor triple and let $(V,C)$ be an instance of CSP$(\source)$.
Say that a partial mapping $\alpha \colon B \rightarrow \base$ for $B \subseteq V^2$ is \emph{consistent} if the CSP$(\base)$ instance \[(V,\{R(x,y) \; | \; \mbox{$x,y \in V$, $\alpha(x,y)$ is defined, $R=\alpha(x,y)$}\})\] is satisfiable.
We define a {\em reduced constraint} with respect to a consistent $\alpha$ as:
 \[R(x_1, \ldots, x_k)_{\mid \alpha} = R(x_1, \ldots, x_k) \land \bigwedge_{\alpha(x_i, x_j) = S, x_i, x_j \in \{x_1, \ldots, x_k\}}S(x_i, x_j).\]
\end{definition}
\fi
Next, we describe how reduced constraints can be translated to the target language. Let $[\source, \target, \base]$ be a backdoor triple and let $a \in {\mathbb N} \cup \{\infty\}$ 
equal $\sup \{i \; | \; \mbox{$R \in \source$ has arity $i$}\}$. 
Let 
\[{\bf S} = \{\sols(R(x_1, \ldots, x_k)_{\mid \alpha}) \mid R \in \source, \; \alpha \colon \{x_1, \ldots, x_k\}^2 \rightarrow \base\}\] and 
\[{\bf T} = \{\varphi(x_1, \ldots, x_k) \mid k \leq a, \varphi(x_1, \ldots, x_k) \text{ is a qfpp-definition over } \target\}.\] 
We interpret these two sets as follows. Each mapping $\alpha \colon \{x_1, \ldots, x_k\}^2 \rightarrow {\mathcal \base}$ applied to a constraint $R(x_1, \ldots, x_k)$ results in a (potentially) simplified constraint $R(x_1, \ldots, x_k)_{\mid \alpha}$, which might or might not be expressible via a CSP$(\target)$ instance. Then the condition that $\sols(R(x_1, \ldots, x_k)_{\mid \alpha}) \in {\bf S}$ is qfpp-definable over $\source$ simply means that the set of models of the constraint $R(x_1, \ldots, x_k)_{\mid \alpha}$ can be defined as the set of models of a CSP$(\source)$ instance. Thus, the set ${\bf S}$ represents all possible simplifications of constraints (with respect to $\source$) and the set ${\bf T}$ represents all possibilities of expressing constraints (up to a fixed arity) by the language $\target$. Crucially, note that  ${\bf S}$ and ${\bf T}$ are finite whenever $\source$ and $\target$ are finite. With the help of the two sets ${\bf S}$ and ${\bf T}$ we then define the following method for translating (simplified) $\source$-constraints into $\target$-constraints.

\begin{definition} \label{def:simp-map}
A {\em simplification map} is a partial mapping $\Sigma$ from ${\bf S}$ to ${\bf T}$ such that for every $R \in {\bf S}$: $\Sigma(R) = \varphi(x_1, \ldots, x_k)$ if $R(x_1, \ldots, x_k) \equiv \varphi(x_1, \ldots, x_k)$ for some $\varphi(x_1, \ldots, x_k) \in {\bf T}$, and is undefined otherwise.
\end{definition}

We typically say that a simplification map goes from the source language $\source$ to the target language $\target$ even though it technically speaking is a map from ${\bf S}$ to ${\bf T}$.
 Note that if $\source$ and $\target$ are both finite, then there always exists a simplification map of finite size, and one may
 without loss of generality assume that it is possible to access the map in constant time. We will take a closer look at the computation of simplification maps for finite language in Section~\ref{sec:simp-map}. If $\source$ is infinite, then
 the situation changes significantly. First of all, a simplification map has an infinite number of inputs
 and we cannot assume that it is possible to access it in constant time. We need, however, always assume that it can be
 accessed in polynomial time. Another problem is that we a priori have no general way of computing simplification maps
 so they need to be constructed on a case-by-case basis; we return to this problem in Section~\ref{sec:simp-map}.

\begin{definition} \label{def:backdoor}
  Let $[\source, \target, \base]$ be a backdoor triple, and let $\Sigma$ be a simplification map from $\source$ to $\target$.
  For an instance $(V,C)$ of CSP$(\source)$ we say that $B \subseteq V^2$ is a {\em backdoor} if, for every consistent $\alpha \colon B \rightarrow \base$,
  $\Sigma(R(x_1, \ldots, x_k)_{\mid \alpha})$ is defined for every constraint $R(x_1, \ldots, x_k) \in C$.
\end{definition}  


Let us consider a few examples before turning to computational aspects of finding and using backdoors.
We use equality languages as a first illustration of backdoors into infinite-domain CSPs.

%

\begin{example} \label{ex:languages}
Consider equality languages, i.e. languages that are fo-definable over the base structure $(\mathbb{N}; =, \neq)$.
Recall the ternary relation $\delta$ from Example~\ref{ex:eq} and consider a simplification map $\Sigma$ with respect to the tractable target language $\{=,\neq\}$. Note that we {\em cannot} simplify an arbitrary constraint $\delta(x,y,z)$, but that we can simplify $\delta(x,y,z)_{\mid \alpha}$ if (e.g.) $\alpha(x,y)$ is '=', or if $\alpha(x,y)$ is $\neq$. 
 Hence, it is important to realise that we cannot simplify all constraints, but if we end up in a situation where the reduced instance cannot be simplified into $\{=,\neq\}$, then this simply means that the underlying set $B \subseteq V^2$ was not chosen correctly.
  Let $(V,C)$ be an instance of the \NP-hard problem CSP$(\{\delta\})$. Consider the set
  $B = \{(x,y) \mid \delta(x,y,z) \in C\} \subseteq V^2$. We claim that $B$ is a backdoor with respect to $\{=, \neq\}$. Let $\alpha \colon B \rightarrow \{=,\neq\}$, and consider an arbitrary constraint $\delta(x,y,z) \in C$. Clearly, $(x,y) \in B$. Then, regardless of the relation between $x$ and $y$, the constraint can be removed and replaced by $\{=,\neq\}$-constraints.
\end{example}


The next example is particularly important and it will be revisited in Section~\ref{sec:motex} and in Examples~\ref{ex:sidedoor-1}
and \ref{ex:sidedoor-2}.

\begin{example} \label{ex:rcc5}
Recall the definitions of $\Theta$ and $\Theta^{\vee =}$ for RCC-5 from Section~\ref{sec:prel}.
The problem CSP$(\Theta)$ is known to be polynomial-time solvable
while CSP$(\Theta^{\vee =})$ is \NP-complete~\cite{DBLP:journals/ai/RenzN99}.
Consider a reduced constraint $R_{\mid \alpha}(x,y)$ with respect to
an instance $(V,C)$ of CSP$(\Theta^{\vee =})$, a set $B \subseteq
V^2$, and a function $\alpha \colon B \rightarrow \Theta$. If $(x,y) \in B$ (or, symmetrically, $(y,x) \in B$) then
$R(x,y) \land (\alpha(x,y))(x,y)$ is (1) unsatisfiable if
$\alpha(x,y) \cap R = \emptyset$, or (2) equivalent to
$\alpha(x,y)$. Hence, the simplification map in this case either
outputs an unsatisfiable CSP$(\Theta)$ instance or replaces the
constraints with the equivalent constraint over a basic relation. This
results in an $O(5^{|B|})\cdot \mathrm{poly}(||I||)$ time algorithm
for RCC-5, which can slightly be improved to $O(4^{|B|})\cdot
\mathrm{poly}(||I||)$ with the observation that only the trivial
relation $({\sf DR} \cup {\sf PO} \cup {\sf PP} \cup {\sf PP}^{-1}
\cup {\sf EQ})$ contains all the five basic relations.
The currently fastest known algorithm for CSP$(\Theta^{\vee =})$ runs in
$2^{O(n \log n)}$ time where $n$ is the number of variables~\cite{Jonsson:etal:ai2021}.
Hence, backdoors provide an appealing approach for instances having
small backdoors: the backdoor approach is superior for instances 
with backdoor size in $o(n \log n)$.
\end{example}

The languages considered in Example~\ref{ex:languages} and Example~\ref{ex:rcc5} are special cases of so-called {\em homogeneous} and {\em finitely bounded} languages, where a complexity dichotomy has been conjectured (see, e.g.,~\cite{barto2016}). We refrain from stating this conjecture and the aforementioned properties formally, but remark that such languages form natural examples of languages triples where our backdoor approach is applicable. In fact, the following holds:

\begin{example} \label{ex:finbound}
  If $\base$ is a finitely bounded homogeneous structure, then $[\source,\target,\base]$ is a backdoor triple for {\em any} first-order reducts $\source,\target$ of $\base$. 
\end{example}

Last, we illustrate that the finite domain case can be seen a special case of our backdoor notion.

\begin{example} \label{ex:fin_dom}
Let $D= \{1, \ldots, d\}$ for some $d \in {\mathbb N}$ and define the relational structure
${\bf D}=(D; R_{ij} \; | \; 1 \leq i,j \leq d)$ where $R_{ij}=\{(i,j)\}$. This structure clearly consists of binary JEPD relations but is technically not a partition scheme since it does not contain the equality relation over $D$.
We can, however, easily qffo-define equality in ${\bf D}$ via 
\[x =_D y \equiv R_{11}(x,y) \vee R_{22}(x,y) \vee \dots \vee R_{dd}(x,y).\]
Note then that
{\em any} constraint language $\Gamma$ with domain $D$ can be viewed as a qffo reduct over ${\bf D}$. 
Let $t=(t_1,\dots,t_k) \in D^k$. The relation $R_t$ that only contain the tuple $t$ can be defined as
\[R_t(x_1,\dots,x_k) \equiv \bigwedge_{i=1}^{k} R_{t_it_i}(x_i,x_i).\]
Any relation $R \subseteq D^k$ can now be defined as
\[R(x_1,\dots,x_k) \equiv \bigvee_{t \in R} R_t(x_1,\dots,x_k).\]
Hence, a backdoor in the style of Definition~\ref{def:fin_backdoor} is a special case of Definition~\ref{def:backdoor}, meaning that our backdoor notion is not merely an adaptation of the finite-domain concept, but a strict generalisation, since we allow arbitrary binary relations (and not only unary relations) in the underlying relational structure.
Hence, we only care whether each constraint, locally, can be simplified to the target language. 
\end{example}

We now have a working backdoor definition for infinite-domain CSPs, but it remains to show that they actually simplify CSP solving, and that they can be found efficiently. 
We study such computational aspects in the following section.


\subsection{Algorithms for Finite Languages}
\label{sec:algs}

This section is  divided into three parts where we analyse various computational problems associated with
backdoors. 

\subsubsection{Computing Simplification Maps}
\label{sec:simp-map}

We discussed (in Section~\ref{sec:basic-defs}) the fact that a simplification map from $\source$ to $\target$
always exists when $\source$ and $\target$ are finite languages. How to compute such a map
is an interesting question in its own right.
In the finite-domain case, the computation is straightforward (albeit time-consuming) since one can enumerate all
solutions to a CSP instance in finite time. This is clearly not possible when the domain
is infinite. Thus, we introduce a method that circumvents this difficulty by enumerating
other objects than concrete solutions.


Assume that $[\source,\target,\base]$ is a backdoor triple.
We first make the following observation concerning relations that are qffo-defined in a
partition scheme $\base$ (for additional details, see e.g. Sec. 2.2. in Lagerkvist \& Jonsson (2017)\nocite{lagerkvist2017d}). If $R$ is qffo-defined in $\base$ and $\base$ is finite, then
$R$ can be defined by a DNF $\base$-formula that involves only 
positive (i.e. negation-free) atomic formulas of type $R(\bar{x})$, where $R$ is a relation in $\base$: every
atomic formula $\neg R(x,y)$ can be replaced by
\[\bigvee_{S \in \base\setminus \{R\}} S(x,y)\]
and the resulting formula being transformed back to DNF.

Let $I=(V,C)$ denote an arbitrary instance of, for instance, CSP$(\source)$.
An $\base$-{\em certificate} for $I$ is a satisfiable instance 
${\mathcal C} = (V, C')$ of CSP$(\base)$ that {\em implies}
every constraint in $C$, i.e.
for every $R(v_1,\dots,v_k)$ in $C$,
there is a clause in the definition of this constraint
(as a DNF $\base$-formula) such that all literals
in this clause are in $C'$. 
It is not difficult to see that $I$ has a solution if and only if
$I$ admits an $\base$-certificate (see, for instance,
Theorem 6 by Jonsson and Lagerkvist (2017)\nocite{lagerkvist2017d} for
a similar result). Hence, we will sometimes also say that an $\base$-certificate ${\mathcal C}$ of a CSP$(\source)$ instance $I$ {\em satisfies} $I$.
We will additionally use {\em complete certificates}: a CSP$(\cdot)$ instance is {\em complete}
if it contains a constraint over every 2-tuple of (not necessarily distinct) variables, and
a certificate is complete if it is 
a complete instance of CSP$(\cdot)$.

\begin{example}
Consider the partition scheme $({\mathbb Q};<,>,=)$ based on 
the rationals under the natural ordering.
Let us once again consider the betweenness relation 
$\beta=\{(x,y,z) \in {\mathbb Q}^3 \; | \; x<y<z \vee z<y<x\}$
from Example~\ref{ex:eq}. Assume that
\[I=(\{x,y,z,w\} \; | \; \{\beta(x,y,z),\beta(y,z,w)\})\] 
is an instance of
CSP$(\{\beta\})$. The instance $I$ is satisfiable and this is witnessed
by the solution $f(x)=0,f(y)=1,f(z)=2,f(w)=3$.
A certificate for this instance is
$\{x<y,y<z,z<w\}$ and a complete certificate is
\[\begin{array}{lllllll}
x<y, & x < z, & x < w, &
y<z, & y<w, &
z<w, \\
y>x, & z>x, & w>x, &
z>y, & w>y, &
w>z, \\
x=x, & y=y, & z=z, & w=w. \\
\end{array}
\]
\end{example}


We first show that satisfiable CSP instances always have complete certificates under fairly
general conditions. Furthermore, every solution is covered by at least one such certificate.

\begin{lemma}  \label{lem:certificates}
Assume $\base$ is a partition scheme and that $\Gamma$ is qffo-definable in $\base$.
An instance $(V,C)$ of CSP$(\Gamma)$ has a solution $f \colon V \rightarrow D$ if and only if
there exists a complete $\base$-certificate for $I$ that has solution $f$.
\end{lemma}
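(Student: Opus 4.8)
The statement to prove (Lemma~\ref{lem:certificates}) is an ``if and only if'' that relates solutions of a CSP$(\Gamma)$ instance to complete $\base$-certificates that ``carry'' that solution. The plan is to establish the two directions separately, both fairly directly from the definitions, using the fact that $\base$ is a partition scheme (so its relations are JEPD and closed under converse, and $\base$ contains equality).

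For the forward direction, I would start from a solution $f \colon V \to D$ of $(V,C)$ and construct a complete $\base$-certificate from it canonically. Since $\base$ is JEPD, for every ordered pair $(u,v) \in V^2$ there is exactly one relation $S_{(u,v)} \in \base$ with $(f(u),f(v)) \in S_{(u,v)}$; because $\base$ contains equality and is closed under converse, this assignment is internally coherent (in particular $S_{(u,u)}$ is the equality relation and $S_{(v,u)} = S_{(u,v)}^{\smile}$). Let $C' = \{\, S_{(u,v)}(u,v) \mid (u,v)\in V^2 \,\}$. This instance $\mathcal{C}=(V,C')$ is complete by construction, and it is satisfiable because $f$ itself satisfies it. It remains to check that $\mathcal{C}$ implies every constraint $R(v_1,\dots,v_k) \in C$: since $f$ is a solution, $(f(v_1),\dots,f(v_k)) \in R$, so when we write $R$ as a DNF $\base$-formula in the variables $x_1,\dots,x_k$ (using the observation recalled just before the lemma that qffo-definable relations over a finite partition scheme have negation-free DNF $\base$-definitions), the tuple $(f(v_1),\dots,f(v_k))$ satisfies at least one clause; by JEPD-ness that clause must be exactly the conjunction of the literals $S_{(v_i,v_j)}(x_i,x_j)$ matching the pairwise relations realised by $f$, and all of those literals appear in $C'$. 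Hence $\mathcal{C}$ is a complete $\base$-certificate, and it plainly has $f$ as a solution.

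For the reverse direction, suppose there is a complete $\base$-certificate $\mathcal{C}=(V,C')$ of $(V,C)$ that has solution $f$. By definition of certificate, $f$ (being a solution of $\mathcal{C}$) makes every constraint of $C'$ true; and since $\mathcal{C}$ implies every $R(v_1,\dots,v_k)\in C$ via some clause of the DNF $\base$-definition whose literals all lie in $C'$, that clause is satisfied by $f$, hence $(f(v_1),\dots,f(v_k)) \in R$. So $f$ is a solution of $(V,C)$, which is exactly what the ``only if'' (here the ``if'') direction needs. This direction is essentially the content of the remark preceding the lemma that ``$I$ has a solution iff $I$ admits an $\base$-certificate'', specialised to complete certificates; I would cite the analogous Theorem~6 of Jonsson and Lagerkvist (2017) for the pattern and fill in the (short) argument.

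The only mildly delicate point — and the step I would treat most carefully — is the forward direction's claim that the clause of the DNF $\base$-definition of $R$ satisfied by the tuple $(f(v_1),\dots,f(v_k))$ has \emph{all} of its literals present in $C'$. This is where pairwise disjointness is essential: a clause in a negation-free DNF over $\base$ is a conjunction of atoms $S(x_i,x_j)$, at most one per ordered pair of variables (two distinct $\base$-atoms on the same pair would make the clause unsatisfiable by PD-ness, and can be dropped), and any such atom satisfied by $f$ must name precisely the unique relation $S_{(v_i,v_j)}$, which is in $C'$ by completeness. One should also note the minor bookkeeping that a clause may mention fewer than all pairs; completeness of $C'$ makes this harmless. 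Everything else is routine unwinding of the definitions of ``complete'', ``certificate'', and ``implies''.
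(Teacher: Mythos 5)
Your proposal is correct and follows essentially the same route as the paper: the backward direction unwinds the definition of ``implies'' clause-by-clause, and the forward direction builds the canonical complete instance by picking, for each pair $(u,v)$, the unique $\base$-relation containing $(f(u),f(v))$ (using JEPD). In fact you are slightly more thorough than the paper's own proof, which constructs this complete satisfiable instance but does not explicitly verify that it implies every constraint of $C$ (the step you isolate as the delicate point, resting on pairwise disjointness and completeness); that verification is a worthwhile addition rather than a deviation.
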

\begin{proof}
 Let $I=(V,C)$ be an arbitrary instance of CSP$(\Gamma)$.

 \medskip

 Assume ${\mathcal C}=(V,\widehat{C})$ is a complete $\base$-certificate for $I$ with a solution $f \colon V \rightarrow D$.
 The certificate ${\mathcal C}=(V,\widehat{C})$ implies
 every constraint in $C$. Arbitrarily choose a constraint
 $R(v_1,\dots,v_k)$ in $C$.
 There is a clause in the definition of this constraint
 (viewed as a DNF $\base$-formula) such that all literals
 in this clause are in $\widehat{C}$. This implies that $(f(v_1),\dots,f(v_k)) \in R$
 since $f$ is a solution to ${\mathcal C}$. We conclude that $f$ is a solution to $I$
 since $R(v_1,\dots,v_k)$ was chosen arbitrarily.

 \medskip

 Assume $f:V \rightarrow D$ is a solution to $I$. 
 We construct a complete 
$\base$-certificate ${\mathcal C}=(V,\widehat{C})$ for $I$ such that $f$ is a solution
to ${\mathcal C}$.
 We know that $\base$ is JEPD.
 We construct a complete certificate ${\mathcal C}=(V,\widehat{C})$ such that $f$ is a solution to ${\mathcal C}$.
 Consider a 2-tuple of (not necessarily distinct) variables $(v,v')$ where $\{v,v'\} \subseteq V$.
 The tuple $(f(v),f(v'))$ appears in exactly one relation $R$ in $\base$ since $\base$ is JEPD.
 Add the constraint $R(v,v')$ to $\widehat{C}$. Do the same thing for all 2-tuples of variables.
 The resulting instance ${\mathcal C}$ is complete and it is
 satisfiable since $f$ is a valid solution.
 \end{proof}

We use the previous lemma for proving that two instances $I_s$ and $I_t$ have the same solutions if and only
if they admit the same complete certificates.

\begin{lemma} \label{lemma1-help}
Let $[\source,\target,\base]$ be a backdoor triple such that $\source$, $\target$, and $\base$ are finite.
Given instances $I_s=(V,C)$ of CSP$(\source)$ and $I_t=(V,C')$ of CSP$(\target)$, the following are
equivalent:

\begin{enumerate}
\item
$\sols(I_s) =  \sols(I_t)$ and

\item
$I_s$ and $I_t$ have the same set of complete $\base$-certificates.
\end{enumerate}
\end{lemma}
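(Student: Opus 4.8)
The plan is to prove the two directions separately, using Lemma~\ref{lem:certificates} as the main engine. The key observation is that Lemma~\ref{lem:certificates} gives a precise correspondence between solutions of a CSP instance (over any qffo reduct of $\base$) and the complete $\base$-certificates that ``carry'' those solutions: every solution $f$ of an instance is the solution of at least one complete $\base$-certificate, and conversely every solution of a complete $\base$-certificate for the instance is a solution of the instance.

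First I would prove $(2) \Rightarrow (1)$. Suppose $I_s$ and $I_t$ have the same set of complete $\base$-certificates, and let $f \in \sols(I_s)$. By Lemma~\ref{lem:certificates} applied to $I_s$, there is a complete $\base$-certificate $\mathcal{C}$ for $I_s$ with solution $f$. By hypothesis $\mathcal{C}$ is also a complete $\base$-certificate for $I_t$, so by the easy direction of Lemma~\ref{lem:certificates} applied to $I_t$, $f \in \sols(I_t)$. This shows $\sols(I_s) \subseteq \sols(I_t)$, and the reverse inclusion is symmetric.

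For $(1) \Rightarrow (2)$ I would argue contrapositively, or directly as follows. The crucial point is that whether a complete $\base$-instance $\mathcal{C} = (V,\widehat{C})$ is a \emph{certificate} for an instance $I = (V,C)$ depends only on $\widehat{C}$ and on the set of tuples in each relation of $I$ (via the DNF $\base$-definitions of those relations), i.e.\ it is a purely syntactic/combinatorial condition. Moreover, a complete $\base$-certificate $\mathcal{C}$, being itself a satisfiable CSP$(\base)$ instance with $\base$ a partition scheme, has a nonempty solution set, and by the forward direction of Lemma~\ref{lem:certificates}, $\sols(\mathcal{C}) \subseteq \sols(I)$ whenever $\mathcal{C}$ is a certificate for $I$. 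Now suppose $\sols(I_s) = \sols(I_t)$ but, for contradiction, some complete $\base$-certificate $\mathcal{C}$ is a certificate for $I_s$ but not for $I_t$ (the other case being symmetric). Pick any $f \in \sols(\mathcal{C})$; then $f \in \sols(I_s) = \sols(I_t)$. I would then need to derive a contradiction from the fact that $\mathcal{C}$ implies every constraint of $I_s$ but fails to imply some constraint $R(v_1,\dots,v_k)$ of $I_t$. Since $f \in \sols(I_t)$, we have $(f(v_1),\dots,f(v_k)) \in R$, so some clause in the DNF $\base$-definition of $R$ is satisfied by $f$; because $\mathcal{C}$ is complete and $f$ is \emph{its} solution, the literals of that clause, evaluated on the relevant variable pairs, are exactly the (unique, by JEPD) relations assigned by $\mathcal{C}$ to those pairs — hence that clause's literals all lie in $\widehat{C}$, so $\mathcal{C}$ does imply $R(v_1,\dots,v_k)$ after all, a contradiction.

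The main obstacle is the last step: carefully arguing that, because $\mathcal{C}$ is \emph{complete} and $f$ is a solution of $\mathcal{C}$, the JEPD property of $\base$ forces the constraints of $\mathcal{C}$ on the pairs appearing in a clause to coincide exactly with the unique basic relations containing the corresponding pairs of values of $f$. This is where completeness of the certificate is essential — without it, $\mathcal{C}$ might simply omit the needed constraints. Once this alignment between ``the clause satisfied by $f$'' and ``the constraints present in $\widehat{C}$'' is established, the equivalence follows. I would take care to note that the finiteness of $\source$, $\target$, $\base$ is not actually needed for the logical equivalence itself (it matters only for the algorithmic use of the lemma later), but I would keep the hypothesis as stated.
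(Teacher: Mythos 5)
Your proposal is correct and follows essentially the same route as the paper: both directions are driven by Lemma~\ref{lem:certificates}, with the direction from equal solution sets to equal certificate sets argued by contradiction. If anything, your argument is more explicit than the paper's at the one delicate point --- using completeness of $\mathcal{C}$ together with the pairwise-disjointness of $\base$ to align the clause satisfied by $f$ with the constraints actually present in $\widehat{C}$ --- a step the paper leaves implicit.
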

 \begin{proof}
 Arbitrarily choose an instance $I_s=(V,C_s)$ of CSP$(\source)$ and an instance $I_t=(V,C_t)$ of CSP$(\target)$.

 Assume that $I_s$ and $I_t$ have the same set of complete $\base$-certificates.
 Arbitrarily choose a solution $f:V \rightarrow D$ to $I_s$
 that is not a solution to $I_t$ (the other direction is analogous).
 There is a complete $\base$-certificate ${\mathcal C}$ for $I_s$ such that $f$ is a solution to ${\mathcal C}$ by 
 Lemma~\ref{lem:certificates}. We know that
 ${\mathcal C}$ is a certificate for $I_t$ so Lemma~\ref{lem:certificates} implies that $f$ is a solution to $I_t$, too.
 This leads to a contradiction.

 Assume that $I_s$ and $I_t$ have the same set of solutions.
 Assume ${\mathcal C}$ is a complete $\base$-certificate for $I_s$ but not for $I_t$ (the other
 way round is analogous).
 By Lemma~\ref{lem:certificates}, every solution to ${\mathcal C}$ is a solution to $I_s$.
 Since $I_s$ and $I_t$ have the same set of solutions, ${\mathcal C}$ is a complete $\base$-certificate for $I_t$, too,
 which leads to a contradiction.
 \end{proof}

Finally, we present our method to compute simplification maps.

\begin{lemma} \label{lemma1-c}
Let $X$ be the set of backdoor triples $[\source,\target,\base]$ that enjoy the following properties:

\begin{enumerate}
\item
$\source$, $\target$, and $\base$ are finite and

\item
CSP$(\base)$ is decidable.
\end{enumerate}

\noindent
The problem of constructing simplification maps for
members of $X$ is computable.
\end{lemma}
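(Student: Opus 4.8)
The plan is to reduce the construction of a simplification map to finitely many calls to a decision procedure for CSP$(\base)$. Fix a backdoor triple $[\source,\target,\base]\in X$. Since $\source$ is finite, the quantity $a=\sup\{i\mid R\in\source \text{ has arity } i\}$ is finite, so ${\bf S}$ is finite and effectively computable: we let $R$ range over $\source$ and $\alpha$ over the finitely many maps $\{x_1,\dots,x_k\}^2\rightarrow\base$, and we store each resulting relation \emph{symbolically} by the formula $R(x_1,\dots,x_k)\wedge\bigwedge_{\alpha(x_i,x_j)=S}S(x_i,x_j)$, which is qffo-definable over $\base$. The set ${\bf T}$ is finite as well, crucially because a qfpp-definition over $\target$ carries no quantifiers: for each arity $k\le a$ there are only finitely many conjunctions of $\target$-atoms and equalities over $x_1,\dots,x_k$, and enumerating all of them yields a finite list of representatives. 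A simplification map is therefore a finite table that assigns to each $s\in{\bf S}$ either some $t\in{\bf T}$ with $s\equiv t$ or the value ``undefined'', and it remains only to decide, for each of the finitely many pairs $(s,t)\in{\bf S}\times{\bf T}$ of equal arity, whether $s$ and $t$ define the same relation over $D$.

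For the equivalence test I would use the certificate machinery developed above. Both $s$ and $t$ are qffo-definable over the partition scheme $\base$, so, after eliminating negations via $\neg R(x,y)\equiv\bigvee_{S\in\base\setminus\{R\}}S(x,y)$ and re-distributing, each can be written as a DNF over positive $\base$-atoms in the variables $x_1,\dots,x_k$; this is an effective syntactic rewriting. Now enumerate the finitely many complete CSP$(\base)$ instances on $\{x_1,\dots,x_k\}$ (one $\base$-relation per ordered pair of variables). For each such instance $\mathcal C$, use the decision procedure for CSP$(\base)$ to test whether $\mathcal C$ is satisfiable and, by inspecting the DNF definitions, test whether $\mathcal C$ implies $s$ and whether it implies $t$. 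By Lemma~\ref{lem:certificates} together with the (mildly generalised, single-constraint) version of Lemma~\ref{lemma1-help}, $s$ and $t$ define the same relation if and only if they admit exactly the same satisfiable complete $\base$-certificates; since we have listed all candidates, comparing the two resulting finite sets decides $s\equiv t$. Equivalently, one may note that $s\equiv t$ iff both $s\wedge\neg t$ and $\neg s\wedge t$ are unsatisfiable, and each of these qffo-$\base$-formulas, put into positive DNF, is a finite disjunction of CSP$(\base)$ instances whose satisfiability is decidable by hypothesis; either way, the whole construction is computable.

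The routine parts are bookkeeping: bounding and enumerating the $\alpha$'s, the qfpp-conjunctions, and the complete certificates; eliminating negations to reach positive DNF; and interfacing with the assumed decision procedure for CSP$(\base)$. I expect the one genuinely load-bearing step to be the reduction of ``these two relations over the infinite domain $D$ coincide'' to ``these finitely many CSP$(\base)$ instances are unsatisfiable'', and this is exactly where JEPD-ness of $\base$ enters: a solution of a qffo-$\base$-instance is pinned down on every pair of variables by a \emph{unique} $\base$-relation, so solutions correspond to satisfiable \emph{complete} certificates rather than to arbitrary ones, and it is this correspondence (Lemmas~\ref{lem:certificates} and~\ref{lemma1-help}) that lets the finitely many certificates stand in for the infinitely many potential solutions.
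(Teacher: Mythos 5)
Your proposal is correct and follows essentially the same route as the paper's proof: both reduce the equivalence test $s\equiv t$ to comparing the finite sets of complete $\base$-certificates via Lemmas~\ref{lem:certificates} and~\ref{lemma1-help}, using the decidability of CSP$(\base)$ to check each certificate, and then tabulate the map over the finite sets ${\bf S}$ and ${\bf T}$. The extra observations you add (the symbolic representation of ${\bf S}$ and the alternative unsatisfiability-based test) are harmless elaborations rather than a different argument.
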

 \begin{proof}
 Arbitrarily choose $[\source,\target,\base]$ in $X$.
 Recall the definitions of ${\bf S}$ and ${\bf T}$ that were made
 in connection with Definition~\ref{def:simp-map}.
 Arbitrarily choose a relation $R \in {\bf S}$ with arity $k$ and 
 define $I_s=(V,C)=(\{v_1,\dots,v_k\},\{R(v_1,\dots,v_k)\})$. Given a $k$-ary formula $\varphi \in {\bf T}$, let
 $I_t=(V,\varphi(v_1,\dots,v_k))$. Then, the following are
 equivalent
 
 \begin{enumerate}
 \item[(a)]
 $\sols(I_s)=\sols(I_t)$, 

 \item[(b)]
 $I_s$ and $I_t$ have the same set of complete $\base$-certificates
 \end{enumerate}
 
 by Property 1 combined with Lemma~\ref{lemma1-help}.
 Property 2 implies that condition (a) is decidable: there is a straightforward algorithm based on
 enumerating all complete $\base$-certificates.
 Compute every possible complete $\base$-certificate on the variables in $V$ and check whether
 $I_s$ and $I_t$ are equisatisfiable on these certificates. Recall that checking
 if a certificate implies the constraints in $I_s$ and $I_t$ is a decidable problem since
 CSP$(\base)$ is decidable.
 This procedure can be performed in a finite number of steps since
 the number of complete $\base$-certificates on variable set $V$ is finite.

 With this in mind, there is an algorithm that computes a simplification map $\Sigma \colon \source \rightarrow \target$.
 Arbitrarily choose a $k$-ary relation $R$ in ${\bf S}$ and
 let $I_s=(V,C_s)=(\{v_1,\dots,v_k\},\{R(v_1,\dots,v_k)\})$.
 Enumerate all $I_t=(V,\varphi(v_1,\dots,v_k))$ where $\varphi \in {\bf T}$ is $k$-ary.
 If there exists an $I_t$ that satisfies condition (a), then let $\Sigma(R)=\varphi$, and, otherwise, let $\Sigma(R)$ be undefined.
 We know that testing condition (a) is decidable by Property 2 and we know that ${\bf S}$ and ${\bf T}$ are finite sets,
 so $\Sigma$ can be computed in a finite number of steps.
 \end{proof}


\subsubsection{Backdoor Evaluation}
\label{sec:backdoor-eval}

We begin by studying the complexity of the following problem, which intuitively, 
says to which degree the existence of a
  backdoor helps to solve the original problem. 
\ifshort
\begin{definition} ($[\source,\target,\base]$-\textsc{Backdoor Evaluation}) Given $(V,C)$
  of CSP$(\source)$ and a backdoor $B \subseteq V^2$ into CSP$(\target)$, determine if $(V,C)$ is satisfiable.
\end{definition}
\fi
\iflong
\pbDef{$[\source,\target,\base]$-\textsc{Backdoor Evaluation}}{A CSP instance $(V,C)$
  of CSP$(\source)$ and a backdoor $B \subseteq V^2$ into CSP$(\target)$.}
  {Is $(V,C)$ satisfiable?}
\fi

  
  
  Clearly, \textsc{$[\source,\target,\base]$-backdoor evaluation} is in many cases \NP-hard: simply pick a language $\source$ such that CSP$(\source)$ is \NP-hard.
  Note that one, strictly speaking, is not forced to use the backdoor when solving the \textsc{$[\source,\target,\base]$-Backdoor Evaluation} problem, but if the size of the backdoor is sufficiently small then we may be able to solve the instance faster via the backdoor.
  Indeed, as we will now prove, the problem is in $\FPT$ for finite languages when parameterised by the size of the backdoor.

\begin{theorem} \label{thm:backdoor-eval}
Assume that $[\source,\target,\base]$ is a backdoor triple such that $\source$, $\target$ and $\base$ are finite and CSP$(\target)$ and CSP$(\base)$ are polynomial-time solvable.
  Then, \textsc{$[\source,\target,\base]$-Backdoor Evaluation} is solvable in $\bigoh(|\base|^k \cdot {\rm poly}(||I||))$ where $k$ is the size of the backdoor.  Hence, the problem is in $\FPT$ when parameterised by $k$.

\end{theorem}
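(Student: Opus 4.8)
The plan is to run the naive brute-force search over all instantiations of the backdoor $B$ and argue correctness through the partition-scheme structure of $\base$. Write $I=(V,C)$ and $k=|B|$. The algorithm enumerates every mapping $\alpha \colon B \rightarrow \base$ (there are at most $|\base|^k$ of them); for each, it first checks whether $\alpha$ is consistent, i.e.\ whether the CSP$(\base)$ instance $(V,\{R(x,y) \mid \alpha(x,y)=R\})$ is satisfiable, which costs ${\rm poly}(||I||)$ time since CSP$(\base)$ is polynomial-time solvable. For every consistent $\alpha$ it then forms the instance $I_t^\alpha$ of CSP$(\target)$ obtained from $C$ by replacing each constraint $c=R(x_1,\dots,x_k)$ with the $\target$-atoms $\Sigma\!\left(\sols(R(x_1,\dots,x_k)_{\mid\alpha})\right)$ instantiated at $x_1,\dots,x_k$; this is well-defined precisely because $B$ is a backdoor (Definition~\ref{def:backdoor}), so $\Sigma$ is defined on the reduced version of every constraint for every consistent $\alpha$. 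Since $\source$ and $\target$ are finite, the sets ${\bf S}$ and ${\bf T}$ of Definition~\ref{def:simp-map} are finite, so $\Sigma$ is a fixed finite table that can be queried in constant time and whose outputs have bounded size; hence assembling $I_t^\alpha$ takes ${\rm poly}(||I||)$ time (equality atoms produced by the qfpp target formulas are handled in the usual way, by identifying variables). Finally the algorithm solves each $I_t^\alpha$ using the polynomial-time algorithm for CSP$(\target)$, and answers ``yes'' iff at least one $I_t^\alpha$ is satisfiable.

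The correctness claim to establish is $\sols(I)=\bigcup_{\alpha\ \text{consistent}}\sols(I_t^\alpha)$. For ``$\supseteq$'': if $f\in\sols(I_t^\alpha)$, then $f$ satisfies $\Sigma(\sols(c_{\mid\alpha}))$ for every $c\in C$, and by Definition~\ref{def:simp-map} this formula is logically equivalent to the reduced constraint $c_{\mid\alpha}$, which by definition is a conjunction containing $c$ itself; hence $f$ satisfies every $c\in C$ and $f\in\sols(I)$. For ``$\subseteq$'': given $f\in\sols(I)$, use that $\base$ is a partition scheme, hence JEPD: for each $(x,y)\in B$ there is a \emph{unique} $R\in\base$ with $(f(x),f(y))\in R$, and this defines a total mapping $\alpha_f\colon B\rightarrow\base$. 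Then $\alpha_f$ is consistent, as witnessed by $f$ itself, and $f$ satisfies $c_{\mid\alpha_f}=c\land\bigwedge_{\alpha_f(x_i,x_j)=S}S(x_i,x_j)$ for each $c\in C$ --- it satisfies $c$ because $f\in\sols(I)$, and it satisfies each conjoined basic atom by the choice of $\alpha_f$. Applying Definition~\ref{def:simp-map} once more, $f$ satisfies $\Sigma(\sols(c_{\mid\alpha_f}))$, so $f\in\sols(I_t^{\alpha_f})$. Thus $I$ is satisfiable iff some consistent $\alpha$ gives a satisfiable $I_t^\alpha$, which is exactly the algorithm's output condition.

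For the running time, the algorithm runs at most $|\base|^k$ iterations; each iteration performs one ${\rm poly}(||I||)$-time CSP$(\base)$ consistency check, at most $|C|$ constant-time queries to $\Sigma$ to build $I_t^\alpha$, and one ${\rm poly}(||I||)$-time CSP$(\target)$ solve. Summing gives the claimed bound $\bigoh(|\base|^k\cdot{\rm poly}(||I||))$. As $\base$ is fixed and finite, $|\base|$ is a constant and this is of the form $f(k)\cdot{\rm poly}(||I||)$, so the problem lies in $\FPT$ parameterised by $k$.

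The algorithm and the complexity bookkeeping are routine; the one genuinely delicate step is the ``$\subseteq$'' direction of the correctness identity, and the hypothesis doing the work there is that $\base$ is a partition scheme (in particular JEPD). JEPD is what lets us attach to each solution $f$ exactly one assignment $\alpha_f$ of basic relations to the pairs in $B$, so that the union over consistent $\alpha$ recovers $\sols(I)$ without omissions. The second point that must be spelled out is that finiteness of $\source$ and $\target$ is precisely what makes $\Sigma$ a constant-time oracle with bounded-size outputs; dropping this assumption breaks the per-iteration polynomial bound, in line with the hardness phenomena that appear for infinite languages.
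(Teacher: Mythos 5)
Your proof is correct and follows the same strategy as the paper: enumerate the $|\base|^k$ assignments $\alpha\colon B\rightarrow\base$, check consistency via the CSP$(\base)$ solver, translate through $\Sigma$, and solve the resulting CSP$(\target)$ instance, with correctness resting on the JEPD property to extract a unique $\alpha_f$ from any solution $f$ and on the logical equivalence built into Definition~\ref{def:simp-map}. The only difference is presentational: the paper routes both directions of the correctness argument through complete $\base$-certificates (Lemma~\ref{lem:certificates}), whereas you inline the same JEPD argument directly on solutions, which is a legitimate and slightly more streamlined way to reach the identical conclusion.
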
  

\begin{proof}
  Let $\Sigma \colon \source \rightarrow \target$ be a simplification map that has been computed off-line, let $I = (V,C)$ be an instance of CSP$(\source)$, let
  $B \subseteq V^2$ be a backdoor of size $k$, and let $m = |\base|$. 
  Then, we claim that $I$ is satisfiable if and only if there is a consistent assignment $\alpha \colon B \rightarrow \base$
  such that the CSP$(\target)$ instance $I_{\mid \alpha} = (V, \{\Sigma(c_{\mid \alpha}) \mid c \in C\})$ is satisfiable.
  
  \smallskip
  
  \noindent
  {\em Forward direction.}
  Assume that $I$ is satisfiable. Since $\base$ is JEPD, and since $\source$ is qffo-definable in $\base$, we know from Lemma~\ref{lem:certificates} that $I$ admits a complete certificate $(V, \widehat{C})$. For every pair $(x,y) \in B$ then define $\alpha$ to agree with the complete certificate $(V, \widehat{C})$, i.e., $\alpha(x,y) = S$ for $S(x,y) \in \widehat{C}$. Naturally, $\alpha$ is consistent since $(V, \widehat{C})$ is a complete certificate for $I$, and since $B$ is a backdoor set it also follows that the CSP$(\target)$ instance $(V, \{\Sigma(c_{\mid \alpha}) \mid c \in C)$ is well-defined. Pick an arbitrary constraint $\Sigma(R(x_1, \ldots, x_{\ar(R)})_{\mid \alpha})$. It follows (1) that $(V, \widehat{C})$ satisfies $R(x_1, \ldots, x_{\ar(R)})$, and (2) that if $\alpha(x_i, x_j) = S$ for $x_i, x_j \in \{x_1, \ldots, x_{\ar(R)}\}$ then $(V, \widehat{C})$ satisfies $S(x_i, x_j)$, meaning that $(V, \widehat{C})$ satisfies 
  \[R(x_1, \ldots, x_{\ar(R)}) \land \bigwedge_{\alpha(x_i, x_j) = S, x_i, x_j \in \{x_1, \ldots, x_{\ar(R)}\}}S(x_i, x_j),\]
  and hence also $\Sigma(R(x_1, \ldots, x_{\ar(R)})_{\mid \alpha})$, since $\Sigma$ is a simplification map. 
  
  \smallskip
  
  \noindent
  {\em Backward direction.}
  Assume that there exists a consistent $\alpha \colon B \rightarrow \base$ such that $(V, \{\Sigma(c_{\mid \alpha}) \mid c \in C\})$ is satisfiable, and let $(V, \widehat{C})$ be a complete certificate witnessing this. Naturally, for any pair $(x,y) \in B$ it must then hold that $S(x,y) \in \widehat{C}$ for $\alpha(x,y) = S$, since $(V, \widehat{C})$ could not be a complete certificate otherwise. Pick a constraint $R(x_1, \ldots, x_{\ar(R)}) \in C$, and let $\Sigma(R(x_1, \ldots, x_{\ar(R)})_{\mid \alpha}) \equiv \varphi(x_1, \ldots x_{\ar(R)})$ for some $\varphi(x_1, \ldots, x_{\ar(R)}) \in {\bf T}$. It follows that $(V, \widehat{C})$ satisfies 
  \[\varphi(x_1, \ldots, x_{\ar(R)})\]
  and since 
  \[\sols(\varphi(x_1, \ldots, x_{\ar(R)})) = \sols(R(x_1, \ldots, x_{\ar(R)})_{\mid \alpha})\]
  it furthermore follows that $R(x_1, \ldots, x_{\ar(R)})_{\mid \alpha}$ must be satisfied, too. However, since \[R(x_1, \ldots, x_{\ar(R)})_{\mid \alpha} \equiv R(x_1, \ldots, x_{\ar(R)}) \land \bigwedge_{\alpha(x_i, x_j) = S, x_i, x_j \in \{x_1, \ldots, x_{\ar(R)}\}}S(x_i, x_j),\] and since every constraint $S(x_i, x_j)$ is clearly satisfied, it must also be the case that $(V, \widehat{C})$ is a complete certificate of $I$.

\smallskip

  Put together, it thus suffices to enumerate all
  $m^k$ choices for $\alpha$ and to check whether $\alpha$ is consistent and whether $I_{\mid \alpha}$ is satisfiable. CSP$(\base)$ is tractable so checking whether $\alpha$ is consistent can be done in polynomial time. Moreover, using the simplification map $\Sigma$ (which is constant-time accessible since $\source$ and $\target$ are finite), we can reduce
  $I_{\mid \alpha}$ to an instance of CSP$(\target)$, which can be solved in polynomial-time.
  The total running
  time is $O(m^k \cdot \mathrm{poly}(||I|||)$.
\end{proof}

We remark that the assumption that CSP$(\base)$ is in P is only used to verify whether $\alpha$ is consistent, and this can sometimes be achieved even when CSP$(\base)$ is NP-hard; one such case is when $\base$ is finitely bounded and homogeneous (see Example~\ref{ex:finbound}).


\subsubsection{Backdoor Detection}
\label{sec:backdoordetection}

Theorem~\ref{thm:backdoor-eval} implies that
small backdoors are desirable since they can be used to solve CSP problems faster. Therefore, let us now turn to the problem of finding backdoors. The basic backdoor detection problem is defined as follows. 
\ifshort
\begin{definition}
(\textsc{$[\source,\target,\base]$-Backdoor Detection}) Given an instance $(V,C)$ of CSP$(\source)$ and an integer $k$, determine whether $(V,C)$ has a backdoor $B$ into $\target$ of size at most $k$. 
\end{definition}
\fi
\iflong
\pbDef{\textsc{$[\source,\target,\base]$-Backdoor Detection}}{A CSP instance $(V,C)$ of CSP$(\source)$ and an integer $k$.}{Does $(V,C)$ have a backdoor $B$ into $\target$ of size at most $k$? (and if so output such a backdoor).}
\fi

The problem is easily seen to be \NP-hard even when $\source$ and $\target$ are finite; we will provide a proof of this 
in Corollary~\ref{cor:finNP}.
We will now prove that the problem can be solved efficiently if the size of the backdoor is sufficiently small.

\begin{theorem}
  
  
  Assume that $[\source,\target,\base]$ is a backdoor triple such that $\source$, $\target$, $\base$ are finite and CSP$(\target)$ and CSP$(\base)$ are polynomial-time solvable.
  Then, \textsc{$[\source,\target,\base]$-Backdoor Detection} is solvable in $\bigoh(\binom{a}{2}^{k+1} \cdot |\base|^k \cdot {\rm poly}(||I||)) $ time where $a$ is the maximum arity of the relations in $\source$ and $k$ is the size of the backdoor.  Hence, the problem is in $\FPT$ when parameterised by $k$.
\end{theorem}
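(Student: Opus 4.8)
The plan is to solve \textsc{$[\source,\target,\base]$-Backdoor Detection} by a bounded-depth branching algorithm, analogous in spirit to standard backdoor-detection arguments, but branching on \emph{pairs} of variables rather than single variables. The key structural observation is that being a backdoor is a ``local'' property: a set $B \subseteq V^2$ is a backdoor if and only if for every constraint $R(x_1,\dots,x_k) \in C$ and every consistent $\alpha \colon B \to \base$, the map $\Sigma(R(x_1,\dots,x_k)_{\mid \alpha})$ is defined. Since only the pairs of $B$ that lie within the scope $\{x_1,\dots,x_k\}$ can affect the reduced constraint $R(x_1,\dots,x_k)_{\mid \alpha}$, whether a constraint is ``handled'' depends only on $B \cap \{x_1,\dots,x_k\}^2$ and on the restriction of $\alpha$ to that set. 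This is what will give us a bounded branching factor.

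The algorithm proceeds recursively, maintaining a partial backdoor $B$ (initially empty). At each step we look for a constraint $R(x_1,\dots,x_k) \in C$ that is not yet handled by $B$, i.e., there is some consistent $\alpha \colon B \to \base$ under which $\Sigma(R(x_1,\dots,x_k)_{\mid \alpha})$ is undefined; checking this for a single constraint takes $O(|\base|^k \cdot \mathrm{poly}(\|I\|))$ time since we enumerate all at most $|\base|^{|B|} \le |\base|^k$ assignments $\alpha$, test consistency via the polynomial-time algorithm for CSP$(\base)$, and consult the precomputed simplification map. If every constraint is handled, $B$ is a backdoor and we output it. Otherwise, since $B$ alone does not handle $R(x_1,\dots,x_k)$, any backdoor extending $B$ of size at most $k$ must contain at least one pair from $\{x_1,\dots,x_k\}^2$ that is not already in $B$ — for if it contained no such pair, the problematic assignment $\alpha$ extends to a consistent assignment on the larger set witnessing that the larger set is not a backdoor either (extend $\alpha$ arbitrarily to a complete consistent certificate, which exists by Lemma~\ref{lem:certificates} applied to the CSP$(\base)$ instance defined by $\alpha$, restricted as needed). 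Hence we branch on which of the at most $\binom{a}{2}$ pairs from the scope to add to $B$, recursing with $k$ decremented. (Here $\binom{a}{2}$ should be read as an upper bound on the number of relevant variable pairs in a scope of size at most $a$, absorbing the handful of reflexive pairs into the constant.)

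The recursion tree has depth at most $k+1$ and branching factor at most $\binom{a}{2}$, so it has $O(\binom{a}{2}^{k+1})$ nodes; at each node we spend $O(|\base|^k \cdot \mathrm{poly}(\|I\|))$ time scanning constraints and testing assignments. Multiplying gives the claimed $O(\binom{a}{2}^{k+1} \cdot |\base|^k \cdot \mathrm{poly}(\|I\|))$ bound, which is of the form $f(k)\cdot\mathrm{poly}(\|I\|)$ with $f$ computable since $a$, $|\base|$ are fixed constants, so the problem is in $\FPT$ parameterised by $k$. The step I expect to require the most care is the correctness of the branching rule, specifically the claim that an unhandled constraint forces a fresh pair from its scope into \emph{every} small backdoor: one must argue that the consistent partial assignment $\alpha \colon B \to \base$ witnessing the failure can always be extended to a consistent partial assignment on $B \cup B'$, for any candidate extension $B'$ disjoint in pairs from the scope of the offending constraint, so that $\Sigma(R(x_1,\dots,x_k)_{\mid \alpha'})$ remains undefined for the extension $\alpha'$. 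This follows because $R(x_1,\dots,x_k)_{\mid \alpha'}$ depends only on $\alpha'$ restricted to $\{x_1,\dots,x_k\}^2$, which equals $\alpha$ restricted there, and consistency of $\alpha'$ is obtained by completing the associated CSP$(\base)$ instance to a full certificate via Lemma~\ref{lem:certificates} — the one subtlety being to ensure the completion agrees with $\alpha$ on $B$, which holds since $\alpha$ is itself consistent and the certificate can be chosen to extend any satisfying assignment of the CSP$(\base)$ instance induced by $\alpha$.
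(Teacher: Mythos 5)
Your proposal is correct and follows essentially the same route as the paper: a depth-$(k+1)$ search tree that branches on the at most $\binom{a}{2}$ variable pairs in the scope of an unhandled constraint, with correctness resting on the locality observation that $c_{\mid\alpha}$ depends only on $\alpha$ restricted to pairs inside the scope of $c$, and with the same per-node cost of enumerating the at most $|\base|^k$ consistent assignments. Your explicit justification that the witnessing partial assignment extends consistently (via completion to a certificate using the JEPD property) is in fact spelled out slightly more carefully than in the paper's own proof.
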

\begin{proof}
    Let $I=((V,C),k)$ be an instance of \textsc{$[\source,\target,\base]$-Backdoor Detection}, let $a$ be the maximum arity of any relation in $\source$, and let $\Sigma$ be a simplification map from $\source$ to $\target$ which we assume has been computed off-line. We solve $((V,C),k)$
    using a bounded depth search tree algorithm as follows. 
    
    We construct a search tree $T$, for which
    every node is labeled by a set $B \subseteq V^2$ of size at most $k$. 
    Additionally, every leaf node has a second label,
    which is either \textsc{Yes} or \textsc{No}. 
    $T$ is defined inductively as follows. The root of $T$ is labeled by
    the empty set. Furthermore, if $t$ is a node of $T$,
    whose first label is $B$, then the
    children of $t$ in $T$ are obtained as follows. If for every consistent 
    assignment $\alpha \colon B \rightarrow \base$, where $\base=\{R_1, \ldots, R_m\}$, and every $c \in C$, we have that $\Sigma(c_{\mid \alpha})$ is defined, then $B$ is a backdoor into $\target$
    of size at most $k$ and therefore $t$ becomes a leaf node, whose second label is \textsc{Yes}.
    Otherwise, i.e., if there is a consistent assignment $\alpha \colon B \rightarrow \base$ and a constraint $c \in C$ such that 
    $\Sigma(c_{\mid \alpha})$ is not defined, we distinguish two cases:
    (1) $|B|=k$, then $t$ becomes a leaf node, whose second label is \textsc{No}, and (2) $|B|<k$, then
    for every pair $p$ of variables in the scope of $c$ with $p \notin B$, $t$ has a child whose
    first label is $B\cup \{p\}$.
    
    If $T$ has a leaf node, whose second label is \textsc{Yes}, then the
    algorithm returns the first label of that leaf node. Otherwise the
    algorithm returns \textsc{No}. This completes the description of the
    algorithm.

    We now show the correctness of the algorithm. First, suppose 
    the search tree $T$ built by the algorithm has a leaf node $t$ whose
    second label is \textsc{Yes}. Here, the algorithm returns the first label, say
    $B$ of $t$. By definition, we obtain that $B$ is a backdoor into $\target$ of size at most $k$.
    
    Now consider the case where the algorithm returns
    \textsc{No}. We need to show that there is no backdoor set $B$ into $\target$ with $|B|\leq k$.
    Assume, for the sake of contradiction that such a set $B$ exists.

    Observe that if $T$ has a leaf node
    $t$ whose first label is a set $B'$ with $B' \subseteq B$,
    then the second label of $t$ must be \textsc{Yes}. This is because,
    either $|B'|<k$ in which case the second label of $t$ must be
    \textsc{Yes}, or $|B'|=k$ in which case $B'=B$ and by the definition
    of $B$ it follows that the second label of $t$ must be \textsc{Yes}.

    It hence remains to show that $T$ has a leaf node whose first label is a set $B'$
    with $B' \subseteq B$. This will complete the proof about the
    correctness of the algorithm. We will show a slightly stronger
    statement, namely, that for every natural number $\ell$, either $T$
    has a leaf whose first label is contained in
    $B$ or $T$ has an inner node of distance exactly $\ell$ from the root
    whose first label is contained in $B$. We show the latter by
    induction on $\ell$.

    The claim obviously holds for $\ell=0$. So assume that $T$ contains a
    node $t$ at distance $\ell$ from the root of $T$ whose first label, say
    $B'$, is a subset of $B$. 
    If $t$ is a leaf node of $T$, then the
    claim is shown. Otherwise, there is a consistent assignment $\alpha' \colon B' \rightarrow \base$ 
    and a constraint $c \in C$ such that 
    $\Sigma(c_{\mid \alpha'})$ is not defined.
    
    Let $\alpha \colon B \rightarrow \base$ be any
    consistent assignment of the pairs in $B$ that agrees with $\alpha'$ on the pairs in $B'$.
    Then, $\Sigma(c_{\mid \alpha})$ is defined because $B$ is a backdoor set into $\target$.
    By definition of the search
    tree $T$, $t$ has a child $t'$ for every pair $p$ of variables in the scope
    of some constraint $c \in C$ such that $\Sigma(c_{\mid \alpha'})$ is not defined. 
    We claim that $B$ contains at least one pair of variables within the scope of c. Indeed,
    suppose not. Then $\Sigma(c_{\mid \alpha})=\Sigma(c_{\mid \alpha'})$ and this contradicts our assumption
    that $\Sigma(c_{\mid \alpha})$ is defined.
    This concludes our proof concerning the correctness of the algorithm.

  The running time of the algorithm is obtained as
  follows. Let $T$ be a search tree obtained by the
  algorithm. Then the running time of the depth-bounded search tree
  algorithm is $O(|V(T)|)$ times the maximum time that is spent on any
  node of $T$. Since the number of children of any node of $T$ is
  bounded by $\binom{a}{2}$ (recall that $a$ is the maximum 
  arity of the relations in of $\source$) and the
  longest path from the root of $T$ to some
  leaf of $T$ is bounded by $k+1$, we obtain that $|V(T)| \leq
  \bigoh({(\binom{a}{2})}^{k+1})$. Furthermore, the time
  required for any node $t$ of $T$
  is at most
  $\bigoh(m^k|C| \cdot \mathrm{poly}(||I||)$ (where the polynomial factors stems from checking whether $\alpha$ is consistent---keep in mind that $\Sigma$ is constant-time accessible since $\source$ and $\target$ are finite). Therefore we obtain $\bigoh((\binom{a}{2})^{k+1}m^k|C|)$ as the total run-time of the algorithm showing
  that \textsc{$[\source,\target,\base]$-Backdoor Detection} is fpt when parameterized by $k$.
\end{proof}

Again, we remark that the assumption that CSP$(\base)$ is in P is only used to verify whether $\alpha$ is consistent (see the comment after Theorem~\ref{thm:backdoor-eval}).

\subsection{Hardness Results for Infinite Languages}
\label{sec:inf}

We will now see that the positive fpt results from the two previous sections are, unfortunately, restricted to finite languages: finiteness is not merely a simplifying assumption, but in many cases absolutely crucial for tractability. 
We remind the reader that if the source language is infinite, then there are an infinite number of possible inputs for
the simplificaton map, and this implies that it is not necessarily accessible in polynomial time.
However, we will see that simplification maps with good computational properties
do exist in certain cases. Even under this assumption, we prove that
the backdoor detection problem is in general \Weft[2]-hard. We do not
study the backdoor evaluation problem since the hardness of backdoor detection makes
the evaluation problem less interesting. We note that analysing CSPs with infinite constraint languages is often problematic since the way
in which the input is encoded may influence
the complexity. We will therefore consider concrete, infinite languages in the sequel, and will for all involved relations explicitly state their defining formulas, making it easy to represent the associated computational problems.

We begin by establishing the existence of a relation which turns out to be useful as a gadget in the forthcoming hardness reduction.
For every $k \geq 2$, we let the $k$-ary equality relation $R_k$ be defined as follows:
\[R_k(x_1,\dots,x_k) \equiv \bigwedge_{\mbox{\scriptsize{$i,j,l,m \in [k]$ with $i\neq j$ and $l \neq m$}}}   (x_i\neq x_j \lor x_l=x_m)   \]
We illustrate this construction with $R_3(x_1,x_2,x_3)$:
\[(x_1 \neq x_2 \vee x_1 = x_2) \wedge (x_1 \neq x_2 \vee x_1=x_3) \wedge (x_1 \neq x_2 \vee x_2=x_3) \; \wedge\]
\[(x_1 \neq x_3 \vee x_1 = x_2) \wedge (x_1 \neq x_3 \vee x_1=x_3) \wedge (x_1 \neq x_3 \vee x_2=x_3) \; \wedge\]
\[(x_2 \neq x_3 \vee x_1 = x_2) \wedge (x_2 \neq x_3 \vee x_1=x_3) \wedge (x_2 \neq x_3 \vee x_2=x_3)\]
or, slightly simplifed,
\[ (x_1 \neq x_2 \vee x_1=x_3) \wedge (x_1 \neq x_2 \vee x_2=x_3) \; \wedge\]
\[(x_1 \neq x_3 \vee x_1 = x_2) \wedge (x_1 \neq x_3 \vee x_2=x_3) \; \wedge\]
\[(x_2 \neq x_3 \vee x_1 = x_2) \wedge (x_2 \neq x_3 \vee x_1=x_3).\]
Now, note that $R_3(x_1,x_2,x_3) \wedge (x_1=x_2)$ is equivalent to
\[(x_1=x_3) \wedge (x_2=x_3)\]
and $R_3(x_1,x_2,x_3) \wedge (x_1 \neq x_2)$ is equivalent to
\[(x_1 \neq x_2) \wedge (x_1 \neq x_3) \wedge (x_2 \neq x_3).\]
The important properties of $R_k$ are summarised in the next lemma.

\begin{lemma}  \label{lem:eqrel}
    Let $\base$ be a partition scheme with infinite domain $D$. Then the following holds for every $k \geq 2$.
    \begin{enumerate}
        \item $R_k(x_1,\ldots,x_k)$ cannot be qfpp-defined over  $\base \cup \{\neq\}$, and
        \item for every pair $i$, $j$ with $1 \leq i < j \leq k$ and every assignment $\alpha$ of $(x_i,x_j)$ to 
        $\base$, it holds that $R'_k(x_1, \ldots, x_k) \equiv R_k \land \alpha((x_i,x_j))(x_i, x_j)$ can be qfpp-defined over $\base \cup \{\neq\}$.
    \end{enumerate}
    Moreover, the definition of $R_k$ can be computed in time $k^4$.
\end{lemma}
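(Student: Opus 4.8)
The plan is to first unpack $R_k$ combinatorially: a tuple $(d_1,\dots,d_k)$ lies in $R_k$ exactly when either $d_1=\cdots=d_k$ (a \emph{constant} tuple) or the $d_i$ are pairwise distinct (an \emph{injective} tuple). This is because the clause $(x_i\neq x_j\lor x_l=x_m)$ reads $x_i=x_j\Rightarrow x_l=x_m$, so $R_k$ says ``if some two coordinates coincide then all coordinates coincide'', which is precisely the stated dichotomy. Granting this, the ``moreover'' statement is immediate: the defining conjunction ranges over the tuples $(i,j,l,m)\in[k]^4$ with $i\neq j$ and $l\neq m$, of which there are fewer than $k^4$, each clause has constant size, so the formula is written down in $O(k^4)$ time. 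I would then prove (2) by a two-case analysis on $S:=\alpha((x_i,x_j))\in\base$ and (1) by a short substitution argument.

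For (2), fix $1\le i<j\le k$. If $S$ is the equality relation, then $R_k\land S(x_i,x_j)$ has a repeated coordinate, hence (by the dichotomy) is a constant tuple, so $R'_k(x_1,\dots,x_k)\equiv\bigwedge_{1\le l<m\le k}(x_l=x_m)$, a qfpp formula. If $S$ is any other relation of $\base$, then since $\base$ is JEPD and contains the equality relation, $S\cap{=}=\varnothing$, so $S(x_i,x_j)$ entails $x_i\neq x_j$; hence $R_k\land S(x_i,x_j)$ has an unequal coordinate pair, is therefore an injective tuple, and $R'_k(x_1,\dots,x_k)\equiv S(x_i,x_j)\land\bigwedge_{1\le l<m\le k}(x_l\neq x_m)$, a qfpp formula over $\base\cup\{\neq\}$.

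For (1) — the only assertion requiring real work — suppose for contradiction that $\psi(x_1,\dots,x_k)$ is a qfpp formula over $\base\cup\{\neq\}$ with $\sols(\psi)=R_k$; note that here we need $k\ge 3$, since $R_2=D^2$ is trivially qfpp-definable. Since qfpp formulas carry no quantifiers, every variable of $\psi$ lies in $\{x_1,\dots,x_k\}$, so $\psi$ is a (possibly empty) conjunction of atoms $S_t(x_{a_t},x_{b_t})$ with $S_t\in\base\cup\{\neq\}$ (equality atoms being atoms over ${=}\in\base$). Substituting a constant tuple $(d,\dots,d)$, which lies in $R_k$ for every $d\in D$, forces $(d,d)\in S_t$ for all $t$ and all $d$; as $D$ is infinite and the relations of $\base$ are pairwise disjoint with $=$ among them — while $\neq$ contains no diagonal pair — each $S_t$ must be the equality relation. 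Thus $\psi$ is a conjunction of equalities $x_{a_t}=x_{b_t}$. Now substituting an injective tuple $(d_1,\dots,d_k)$ (available since $D$ is infinite), which also lies in $R_k$, forces $a_t=b_t$ for every $t$, so every atom is trivial and $\sols(\psi)=D^k$. But for $k\ge 3$ the tuple with first two coordinates $d$ and all remaining coordinates $d'\neq d$ lies in $D^k\setminus R_k$, contradicting $\sols(\psi)=R_k$. The main obstacle is simply the bookkeeping here — ensuring no auxiliary variables, disjunctions, or negations can sneak into a qfpp definition, and invoking the partition-scheme axioms correctly to pin down each $S_t$; once that is in place, the two substitutions (constant, then injective) close the argument.
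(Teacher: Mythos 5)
Your proof is correct and follows essentially the same strategy as the paper's: establish that $R_k$ consists exactly of the constant and the injective tuples, then do a case analysis on whether the fixed basic relation is equality for part (2), and rule out a qfpp definition for part (1) by testing candidate atoms against constant and injective tuples. That said, your write-up is more careful than the paper's in three respects worth noting. First, you correctly observe that statement (1) as written fails for $k=2$: every clause of $R_2$ is a tautology, so $R_2=D^2$, which is trivially qfpp-definable; the paper's proof silently assumes $k\ge 3$ (and indeed only needs $k \ge 3$ in the downstream \textsc{Hitting Set} reduction once sets of size one are excluded). Second, in part (1) you handle the degenerate case where the putative definition $\psi$ is empty or contains only trivial atoms such as $x_i=x_i$, so that $\sols(\psi)=D^k$; the paper's dichotomy (``all atoms are equalities'' versus ``some atom is a non-equality'') asserts in the first case that $\psi$ excludes all injective tuples, which is false when the equality atoms are all trivial, so your extra step genuinely closes a small gap. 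Third, in part (2) you retain the conjunct $S(x_i,x_j)$ in the qfpp definition when $S$ is not equality; the paper claims logical equivalence with the bare conjunction of disequalities, which is inaccurate for a general partition scheme (e.g.\ $S$ being $<$ over $\mathbb{Q}$), though the definability conclusion it needs still holds. None of these points changes the substance of the lemma for $k\ge 3$, which is all the paper uses.
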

\begin{proof}
     For proving the first statement,
     we begin by showing that $(a_1,\dots,a_k) \in R_k$ if either (1)
     $a_1=a_2=\ldots=a_k$ or (2) $a_i\neq a_j$ for 
             every $i$ and $j$ with $i\neq j$. Assume this is not the case. Then there are $i,j,m,l\in[k]$ with $i\neq j$ and $m\neq l$ such that $a_i=a_j$ and $a_l\neq a_m$. But then the term $(x_i\neq x_j \lor x_l=x_m)$ in the definition of $R_k$ is not satisfied by $(a_1,\dots,a_k)$. Now, consider a conjunction $\phi$ of atomic formulas from the set $\{R(x_i, x_j) \mid R \in \base \cup \{\neq\}, i,j \in [k]\}$. If each atomic formula in $\phi$ is 
             of the type $x_i=x_j$, then the models of $\phi$ cannot correctly define $R_k$: $\phi$ is not satisfied by any assignment where all variables are assigned distinct values. Similarly, if there exists an atomic formula of the type $R(x_i, x_j)$
             in $\phi$ where $R$ is not the equality relation, then $\phi$ cannot be satisfied by an assignment where all variables are assigned the same value, since $R$ by assumption is pairwise disjoint with the equality relation. Hence, $R_k$ cannot be qfpp-defined by $\base$.
    
     For the second statement, let $\alpha$ be an assignment of a pair $(x_i,x_j)$ to some $R \in \base$.
     We observe the following.
     \begin{itemize}
         \item If $R$ is the equality relation, then $R_k(x_1,\dots,x_k) \land R(x_i,x_j)$ is logically equivalent
           to the formula $(x_1=x_2)\land (x_1=x_3) \land \ldots \land (x_1=x_k)$. The definition
           of $R_k$ contains the clauses $(x_i \neq x_j \vee x_l = x_m)$ for all $1 \leq l \neq m \leq k$.
           Since $x_i \neq x_j$ does not hold due to $R(x_i,x_j)$, it follows that all variables must be assigned the same value.
          
         \item If $R$ is not the equality relation, then $R_k(x_1,\dots,x_k) \land R(x_i,x_j)$ is logically equivalent to the conjunction of $(x_i\neq x_j)$ for every $i,j \in [k]$ where $i\neq j$. To see this, note that the definition
           of $R_k$ contains the clauses $(x_i \neq x_j \vee x_l = x_m)$ for all $1 \leq l \neq m \leq k$.
           Since $x_i = x_j$ does not hold due to $R(x_i,x_j)$ (again, recall that $R$ by assumption is pairwise disjoint with the equality relation), it follows that all variables must be assigned distinct values.
     \end{itemize}
     We finally note that the definition of $R_k$ can easily be computed in $k^4$ time so
     $R_k$ satisfies the statement of the lemma.
 \end{proof}

Assume that $\base_e$ is a partition scheme. Let $\source_e = \{R_i \mid i \geq 1\}$ where $R_i$ is defined as in Lemma~\ref{lem:eqrel}, and let $\target_e= \base_e \cup \{\neq\}$.
Note that both $\source_e$ and $\target_e$ are qffo reducts of $\base_e$ regardless of the precise choice of $\base_e$.
We first verify that $\source_e$, despite being infinite, admits a straightforward simplification map to  the target language $\target_e= \base_e \cup \{\neq\}$.

\begin{lemma} \label{lem:simpmapconstruction}
There is a simplification map $\Sigma_e$ from $\source_e$ to $\target_e$ that can be accessed in polynomial time.
\end{lemma}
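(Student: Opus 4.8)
The plan is to write down $\Sigma_e$ explicitly as an algorithm; all of the mathematical content is already contained in Lemma~\ref{lem:eqrel}, so what remains is essentially bookkeeping. Recall that the domain of a simplification map from $\source_e$ to $\target_e$ is the set ${\bf S}$ of relations $\sols(R_k(x_1,\dots,x_k)_{\mid\alpha})$ with $R_k\in\source_e$ and $\alpha\colon\{x_1,\dots,x_k\}^2\to\base_e$, and that in every place where we access the map (the algorithms of Section~\ref{sec:algs}) the argument is presented to us as the syntactic object $R_k(x_1,\dots,x_k)_{\mid\alpha}$; moreover the defining formula of $R_k$ is computable in $O(k^4)$ time by Lemma~\ref{lem:eqrel}. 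So I would take $R_k(x_1,\dots,x_k)_{\mid\alpha}$ as the input of the procedure defining $\Sigma_e$.

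First I would test whether $\alpha$ assigns a basic relation to at least one pair of variables among $x_1,\dots,x_k$. If it does not, then $R_k(x_1,\dots,x_k)_{\mid\alpha}$ is literally $R_k(x_1,\dots,x_k)$, which by part~(1) of Lemma~\ref{lem:eqrel} is not qfpp-definable over $\base_e\cup\{\neq\}=\target_e$; hence $\Sigma_e$ must be left undefined on this input, in accordance with Definition~\ref{def:simp-map}. Otherwise I would pick the lexicographically first pair $(x_i,x_j)$ with $\alpha(x_i,x_j)=S$ defined. By part~(2) of Lemma~\ref{lem:eqrel}, $R_k(x_1,\dots,x_k)\wedge S(x_i,x_j)$ is logically equivalent to $\psi:=\bigwedge_{l=2}^{k}(x_1=x_l)$ when $S$ is the equality relation, and to $\psi:=\bigwedge_{1\le p<q\le k}(x_p\neq x_q)$ otherwise. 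Pulling the $(x_i,x_j)$-conjunct out of $R_k(x_1,\dots,x_k)_{\mid\alpha}$ and replacing it by $\psi$ yields
\[R_k(x_1,\dots,x_k)_{\mid\alpha}\ \equiv\ \psi\ \wedge\!\!\!\bigwedge_{\substack{\alpha(x_p,x_q)=S'\text{ defined}\\ (p,q)\neq(i,j)}}\!\!\! S'(x_p,x_q)\ =:\ \varphi(x_1,\dots,x_k),\]
which is a quantifier-free primitive positive formula over $\target_e$ --- equality atoms are always allowed in qfpp formulas, $\neq\in\target_e$, and every $S'$ lies in $\base_e\subseteq\target_e$ --- and hence an element of ${\bf T}$, since its arity $k$ trivially satisfies $k\le a=\infty$. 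I would output $\varphi$.

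Correctness then reduces to two small observations. The displayed equivalence shows $\sols(\varphi)=\sols(R_k(x_1,\dots,x_k)_{\mid\alpha})$ whenever $\Sigma_e$ is defined, so $\varphi$ genuinely defines the input relation; $\varphi$ may of course define the empty relation (for instance when $\alpha$ forces both an equality and a non-equality basic relation), but this is harmless because the empty relation is qfpp-definable over $\target_e$, and we never need to detect this situation. Conversely, when $\alpha$ constrains no pair, Lemma~\ref{lem:eqrel}(1) guarantees that no $\target_e$-qfpp-definition of the input exists, so leaving $\Sigma_e$ undefined is precisely correct. For the running time: scanning $\alpha$ for a constrained pair is $O(k^2)$, and writing down $\varphi$ (which has $O(k^2)$ atoms) takes $O(k^2)$ further steps, so the whole computation is polynomial in the size of $R_k(x_1,\dots,x_k)_{\mid\alpha}$; thus $\Sigma_e$ is accessible in polynomial time, which is what the lemma asks for.

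I do not expect any genuine obstacle here, since Lemma~\ref{lem:eqrel} does the heavy lifting. The two points that need a little care are: (a) an assignment $\alpha$ in our setting may constrain several pairs simultaneously, whereas part~(2) of Lemma~\ref{lem:eqrel} only conjoins one basic constraint onto $R_k$ --- I would handle this by isolating a single constrained pair, applying the lemma to it, and carrying the remaining conjuncts along verbatim; and (b) ensuring that $\Sigma_e$ is undefined exactly when no $\target_e$-qfpp-definition exists, which is handled entirely by part~(1) of Lemma~\ref{lem:eqrel}. If one wants $\Sigma_e$ to be a genuinely single-valued function on ${\bf S}$ rather than on syntactic representations, I would additionally fix once and for all one representative formula $R_k(x_1,\dots,x_k)_{\mid\alpha}$ per relation in ${\bf S}$ and only ever run the procedure above on those.
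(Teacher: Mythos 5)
Your handling of the case where $x_1,\dots,x_k$ are pairwise distinct is essentially the paper's own argument: leave $\Sigma_e$ undefined when $\alpha$ constrains no pair (justified by part~(1) of Lemma~\ref{lem:eqrel}), and otherwise apply part~(2) to one constrained pair and carry the remaining $\alpha$-conjuncts along as extra atoms over $\target_e$ --- on that last point you are in fact more explicit than the paper. The gap is that you never consider constraints with \emph{repeated variables} in the scope. A constraint such as $R_3(x,x,y)$ with the empty assignment $\alpha$ falls into your ``no pair constrained'' branch, so your procedure leaves $\Sigma_e$ undefined on it; but $\sols(R_3(x,x,y))$ is the equality relation on $(x,y)$ (identifying two coordinates of $R_k$ kills the all-distinct tuples, so only the all-equal tuples survive), which certainly has a qfpp-definition over $\target_e$. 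Definition~\ref{def:simp-map} requires the map to be defined \emph{whenever} such a definition exists --- it is an ``exactly when'', not an ``only when'' --- so the map you construct is not a simplification map. Part~(1) of Lemma~\ref{lem:eqrel} does not cover this situation: it asserts non-definability only of $R_k$ itself, i.e.\ of the instantiation on $k$ distinct variables, and your appeal to it in the ``no pair constrained'' branch is simply false once variables repeat.

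The repair is the extra case that the paper's proof treats first: if $|\{x_1,\dots,x_k\}|<k$, replace one repeated occurrence by a fresh variable $y$ and extend $\alpha$ to $\alpha'$ so that the new pair is assigned the equality relation of $\base_e$. This moves you into your ``at least one pair constrained'' branch, where part~(2) of Lemma~\ref{lem:eqrel} applies; substituting the original variable back for $y$ in the output is harmless, since that output is a conjunction of $=$ and $\neq$ atoms and remains qfpp over $\target_e$. With this addition your construction, your correctness argument, and your $O(k^2)$-per-query time bound all go through and coincide with the paper's proof.
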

 \begin{proof}
 Consider $\Sigma_e(R_k(x_1, \ldots, x_k)_{\mid \alpha})$. If $|\{x_1,\dots,x_k\}| < k$, then we may (without loss of generality)
 assume that we want to compute $\Sigma_e(R_k(x_1, x_1, x_2 \ldots, x_{k-1})_{\mid \alpha})$.
 This is equivalent to compute $\Sigma_e(R_k(x_1, y, x_2, \ldots, x_{k-})_{\mid \alpha})$ where $y$
 is a fresh variable and $\alpha$ is extended to $\alpha'$ so that $\alpha'(x_1,y)$ implies $x_1 = y$.
 Then, we map $\Sigma_e(R_k(x_1, y, \ldots, x_k)_{\mid \alpha'})$ to a suitable CSP$(\target_e)$ instance as prescribed by Lemma~\ref{lem:eqrel}.
 
 For the other case, assume instead that $|\{x_1,\dots,x_k\}| = k$.
 We let $\Sigma_e(R_k(x_1, \ldots, x_k)_{\mid \alpha})$ be undefined if 
 $\alpha(x_i, x_j)$ is not defined for any distinct $x_i, x_j \in \{x_1, \ldots, x_k\}$ --- this is justified by Lemma~\ref{lem:eqrel}. 
 Otherwise, we map $\Sigma_e(R_k(x_1, \ldots, x_k)_{\mid \alpha})$ to a suitable CSP$(\target_e)$ instance as prescribed by Lemma~\ref{lem:eqrel}. We conclude the proof by noting that these computations are easy to perform in polynomial time so $\Sigma_e$ is trivially polynomial-time accessible.
 \end{proof}

Our reduction is based on the following problem.

\pbDef{\textsc{Hitting Set}}{A finite set $U$, a family $\mathcal{F}$ of subsets of $U$, and an integer $k \geq 0$.}{Is there a set $S \subseteq U$ of size at most $k$ such that $S \cap F \neq \emptyset$ for every $F \in \mathcal{F}$?}

\textsc{Hitting Set} is \NP-hard even if the sets in $\mathcal{F}$ are restricted to sets of size 2: in this case, the problem is
simply the Vertex Cover problem. Furthermore, Hitting Set is W$[2]$-hard when parameterized by $k$~\cite{rod1999parameterized}
but this does not hold if the sets in $\mathcal{F}$ have size bounded by some constant.

\begin{theorem}\label{the:bd-wh}
\textsc{$[\source_e, \target_e, \base_e$]-backdoor detection} is W$[2]$-hard when parameterised by the size of the backdoor.
\end{theorem}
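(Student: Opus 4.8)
The plan is to give an fpt-reduction from \textsc{Hitting Set} (with no bound on set size), which is \W{2}-hard when parameterised by the solution size $k$, to \textsc{$[\source_e,\target_e,\base_e]$-backdoor detection}, also parameterised by $k$. The intuition is that the relations $R_k$ from Lemma~\ref{lem:eqrel} behave exactly like ``hittable sets'': by part~(1) of that lemma a constraint $R_k(x_1,\dots,x_k)$ cannot be simplified into $\target_e$ at all, whereas by part~(2) it becomes simplifiable as soon as the relationship of \emph{any one} pair $(x_i,x_j)$ of its variables is fixed by $\alpha$. So a backdoor $B\subseteq V^2$ must, for each such constraint, contain at least one pair of variables from its scope --- this is precisely a hitting set condition, except that the ground set is a set of pairs rather than a set of elements.

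First I would take an instance $(U,\mathcal{F},k)$ of \textsc{Hitting Set} and build a CSP$(\source_e)$ instance $(V,C)$ as follows. Introduce a fresh variable $x_u$ for every $u\in U$, plus possibly a few auxiliary ``dummy'' variables to be described below; so $V=\{x_u\mid u\in U\}\cup\{\text{dummies}\}$. For each set $F=\{u_1,\dots,u_m\}\in\mathcal{F}$, add the constraint $R_{m}(x_{u_1},\dots,x_{u_m})$ to $C$. A backdoor $B$ must then contain, for every $F\in\mathcal F$, at least one pair $(x_{u_i},x_{u_j})$ with $u_i,u_j\in F$. The point is now to force $B$ to correspond to a small \emph{vertex} set: I would encode ``picking element $u$'' by ``picking a pair involving $x_u$''. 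The cleanest way is to add one global dummy variable $z$ (or a small set of them) and to argue, via a standard reduction trick, that without loss of generality every pair of a size-$k$ backdoor can be taken of the form $(x_u,z)$; then $\{u : (x_u,z)\in B\text{ or }(z,x_u)\in B\}$ is a hitting set of size $\le k$, and conversely a hitting set $S$ of size $k$ yields the backdoor $B=\{(x_u,z)\mid u\in S\}$. To make the ``without loss of generality'' rigorous I would instead enlarge $\mathcal F$'s encoding: for each $u\in U$ add a constraint $R_3(x_u,z,z'_u)$ with $z'_u$ fresh, so that the only way to resolve this constraint cheaply that ``reuses'' across many $F$'s is through pairs incident to $z$; then a careful exchange argument shows an optimal backdoor can be assumed to use only pairs $(x_u,z)$. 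All of this is an fpt-reduction: the construction is polynomial-time computable, $\source_e$ and $\target_e$ are fixed, the new parameter equals $k$, and by Lemma~\ref{lem:simpmapconstruction} the simplification map $\Sigma_e$ is polynomial-time accessible so the target problem is well posed.

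The correctness argument splits into two directions. If $S$ is a hitting set of size $\le k$, then $B=\{(x_u,z)\mid u\in S\}$ (together with whatever the gadget forces) is a backdoor: for any consistent $\alpha\colon B\to\base_e$ and any constraint $R_m(x_{u_1},\dots,x_{u_m})$ coming from $F\in\mathcal F$, since $S\cap F\ne\emptyset$ some $u_i\in S\cap F$, but then the pair $(x_{u_i},z)$ alone is not inside the scope of that constraint --- so here I actually need the pair to be \emph{between two variables of the constraint}. This shows the ``one global dummy'' encoding is too weak and confirms that the right construction adds, for each $F$, its variables and keeps pairs \emph{within} $F$; the cleanest fix is to make the correspondence ``element $u\mapsto$ a canonical pair inside every $F$ containing $u$'', which one arranges by ordering $U$ and letting each $F$ be hit at its ``smallest chosen element''. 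Conversely, given a size-$k$ backdoor $B$, for each $F\in\mathcal F$ pick one pair $(x_{u_i},x_{u_j})\in B$ with $u_i,u_j\in F$ (exists by Lemma~\ref{lem:eqrel}(1)), and let $S$ be the set of all endpoints appearing; one then bounds $|S|\le 2k$ and, with the ordering/gadget refinement, actually $|S|\le k$, giving a hitting set. The parameter blow-up $k\mapsto 2k$ (or $k$) is a computable function, so this is a valid parameterised reduction even if one settles for the weaker bound.

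The main obstacle I anticipate is exactly this bookkeeping about pairs versus elements: a backdoor lives in $V^2$, so naively each ``hit'' of a set costs a pair and could involve two distinct elements, which would only give a reduction from \textsc{Hitting Set} to itself with doubled parameter --- acceptable for \W{2}-hardness in principle, but one must be careful that the \emph{source} of hardness (unbounded set size) survives, since \textsc{Hitting Set} with bounded set size is in \FPT. The delicate point is therefore to ensure the constructed instance genuinely encodes unbounded-width sets and that no ``cheap'' universal pair can simultaneously resolve many constraints; the gadget design (adding auxiliary variables $z'_u$, or forcing canonical hitting via an ordering of $U$) is what rules that out, and getting that gadget exactly right, together with verifying $\Sigma_e$ is still well-defined and the consistency check on $\alpha$ is polynomial (immediate since CSP$(\base_e)$ for a partition scheme built from $(\mathbb{N};=,\neq)$-style structures is tractable, or can be replaced by a syntactic consistency notion), will be the technical heart of the proof.
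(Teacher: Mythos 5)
Your overall strategy (reduce from \textsc{Hitting Set}, use Lemma~\ref{lem:eqrel} so that each constraint is ``unsimplifiable until some pair in its scope is fixed'') is exactly the paper's, and you correctly diagnose the central obstacle: a backdoor element is a \emph{pair} that must lie inside a constraint's scope, so a hitting set of elements does not directly yield a small set of pairs, and a global dummy variable outside the scopes is useless. But you never actually resolve this obstacle, and the fixes you float do not work. The paper's resolution is to put the dummy \emph{into every scope}: it sets $V=U\cup\{n\}$ and, for each $F\in\mathcal F$, adds a constraint with relation $R_{|F|+1}$ on scope $F\cup\{n\}$. Then element $u$ corresponds bijectively to the pair $(n,x_u)$, and this pair lies inside the scope of $C_F$ precisely when $u\in F$; a hitting set $S$ gives the backdoor $\{(n,s)\mid s\in S\}$ of the same size, and conversely any backdoor can be normalised (replace one endpoint of each pair by $n$, which is always in the scope) so that its non-$n$ endpoints form a hitting set of size at most $k$. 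This single move is the missing idea in your write-up.

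Your proposed alternatives each fail concretely. The ``ordering of $U$ / canonical pair inside every $F$'' idea does not bound the backdoor by $k$ in the forward direction: an element $u$ chosen by the hitting set needs a partner inside each set $F\ni u$, and different sets force different partners, so you end up with up to $|\mathcal F|$ pairs rather than $k$. The auxiliary constraints $R_3(x_u,z,z'_u)$ for every $u\in U$ are worse: by Lemma~\ref{lem:eqrel}(1) each such constraint must itself be covered by a pair inside its (essentially disjoint) scope $\{x_u,z,z'_u\}$, forcing the backdoor to have size at least $|U|$ and destroying the parameter bound. Falling back to the ``$|S|\le 2k$'' bound only repairs the backward direction; the forward direction (hitting set of size $k$ $\Rightarrow$ backdoor of size $g(k)$) is the one that genuinely breaks without the universal variable in every scope. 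So as written the proposal does not constitute a proof; with the scope-augmentation trick it becomes the paper's argument.
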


\begin{proof}
  We give a parameterized reduction from the Hitting Set problem.
  Given an instance $(U,\mathcal{F},k)$ of Hitting set, let $(V,C)$ be the \eqCSP{} instance with
  $V=U\cup \{n\}$ having one constraint $C_F$ for every $F \in \mathcal{F}$, whose scope is $F\cup \{n\}$ and whose relation is $R_{|F|+1}$ (as defined in connection with Lemma~\ref{lem:eqrel}). This can easily be accomplished in polynomial time.
  Next, we verify that $(U,\mathcal{F},k)$ has a hitting set of size at most $k$ if and only if $(V,C)$ has a backdoor set of size at most $k$ into \beqCSP{}.
  
  {\em Forward direction.} Let $S$ be a hitting set for $\mathcal{F}$. 
  We claim that $B=\{ (n,s) | s \in S\}$ is a backdoor set into \beqCSP{}.
  Because $S$ is a hitting set for $\mathcal{F}$, $B$ contains at least two variables from the scope of every constraint in $C$.
  Let $\alpha: B \rightarrow \base_e$ be an arbitrary consistent assignment. Arbitrarily choose a constraint
  $R_k(x_1,\dots,x_k)$ in $C$.
  By the construction of the simplification map (Lemma~\ref{lem:simpmapconstruction}), it follows that
  $\Sigma_e(R_k(x_1,\dots,x_k)_{\mid \alpha})$ is defined, so $B$
  is indeed a backdoor.

  {\em Backward direction.} Let $B$ be a backdoor set for $(V,C)$ into \beqCSP.
  Note first that we can assume that $b = (x,n)$ or $b = (n,x)$ for every $b \in B$. To see this, note that if this
  is not the case for some $b \in B$, then we can replace one of the variables in $b$ with $n$, while still obtaining a backdoor set, since it is sufficient to fix a single relation between pairs of variables in $R_k$ in order to simplify to \beqCSP{}.
  We claim that $(\bigcup_{b\in B}b)\setminus \{n\}$ is a hitting set for $\mathcal{F}$. This is clearly the case because for every constraint in $C$, there must be at least one pair $b \in B$ such that both variables in $b$ are in the scope of the constraint. Otherwise, there would exist a constraint whose simplification is the constraint itself, and such a constraint cannot be expressed as a conjunction of $\base_e$ constraints, due to the first condition of Lemma~\ref{lem:eqrel}.
\end{proof}
 
One may note that CSP$(\source_e)$ is polynomial-time solvable 
since $(0,\dots,0) \in R_k$ for all $k$.
Thus,
the \textsc{$[\source_e,\target_e,\base_e]$-backdoor detection} problem is
computationally harder than
the CSP problem that we attempt to solve with the backdoor approach. This indicates that the backdoor approach
must be used with care and it is, in particular, important to know the computational complexity of the CSPs under
consideration. Certainly, there are also examples of infinite source languages with an \NP-hard CSP such that
backdoor detection is \Weft[2]-hard.
For instance, let $\source'_e=\source \cup \{S\}$ where $S$ is the relation defined in Example~\ref{ex:eq}---it 
follows immediately that CSP$(\source'_e)$ is \NP-hard. Furthermore, it is not hard to verify that
Lemma~\ref{lem:simpmapconstruction} can be extended to the source language $\source'_e$ so the proof
of Theorem~\ref{the:bd-wh} implies \Weft[2]-hardness of \textsc{$[\source'_e,\target_e,\base_e]$-backdoor detection}, too.

\begin{corollary}
Let $\base$ denote a partition scheme with infinite domain $D$.
There exists an infinite constraint language $\source$ and a finite language $\target$
that are qffo definable
in $\base$ such that \textsc{$[\source,\target,\base]$-backdoor detection} is
\Weft[2]-hard.
\end{corollary}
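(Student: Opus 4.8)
The plan is to obtain the corollary as an immediate instantiation of Theorem~\ref{the:bd-wh}. Given the partition scheme $\base$ with infinite domain $D$, I would take $\source := \source_e = \{R_i \mid i \geq 1\}$, where each $R_i$ is the $i$-ary equality relation of Lemma~\ref{lem:eqrel}, and $\target := \target_e = \base \cup \{\neq\}$. The first thing to verify is that these are genuine qffo reducts of $\base$: because $\base$ is a partition scheme it is JEPD and contains the equality relation, so $\neq$ is qffo-definable over $\base$ as the disjunction of all non-equality relations in $\base$; consequently each $R_i$, being given by a quantifier-free formula over $=$ and $\neq$, is qffo-definable over $\base$ as well. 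Hence $[\source_e, \target_e, \base]$ is a backdoor triple, exactly as noted in the paragraph preceding Lemma~\ref{lem:simpmapconstruction}.

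Next I would record the two structural facts the corollary demands: the source language is infinite, since the relations $R_i$ have unbounded arity, while the target language $\target_e = \base \cup \{\neq\}$ is finite, since a partition scheme is a finite set of relations. With these observations in place, Theorem~\ref{the:bd-wh} states precisely that \textsc{$[\source_e, \target_e, \base]$-backdoor detection} is $\W{2}$-hard when parameterised by the size of the backdoor, which is the assertion of the corollary. The one point worth a sentence is that the reduction in Theorem~\ref{the:bd-wh}---the translation of a \textsc{Hitting Set} instance into a CSP$(\source_e)$ instance, together with the simplification map $\Sigma_e$ of Lemma~\ref{lem:simpmapconstruction}---uses nothing about the base language beyond it being a partition scheme with infinite domain, so it applies verbatim to the $\base$ fixed in the corollary.

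There is essentially no obstacle here, since all the substantive work was carried out in Lemma~\ref{lem:eqrel} (the relations $R_k$ are not qfpp-definable over $\base \cup \{\neq\}$, yet become so after fixing any single pair), Lemma~\ref{lem:simpmapconstruction} (a polynomial-time-accessible simplification map from $\source_e$ to $\target_e$), and Theorem~\ref{the:bd-wh} (the $\W{2}$-hardness reduction). If one wants, in addition, the source CSP to be \NP-hard rather than polynomial-time solvable, I would instead take $\source := \source_e \cup \{\delta\}$ with $\delta$ the ternary relation of Example~\ref{ex:eq}; as already observed just before the corollary, Lemma~\ref{lem:simpmapconstruction} extends to this enlarged language and the reduction of Theorem~\ref{the:bd-wh} is unaffected, so the same $\W{2}$-hardness conclusion holds.
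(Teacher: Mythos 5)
Your proposal is correct and is essentially the paper's own argument: the corollary is obtained by instantiating Theorem~\ref{the:bd-wh} with $\source_e$ and $\target_e=\base\cup\{\neq\}$, observing that these are qffo reducts of $\base$ (with $\target_e$ finite and $\source_e$ infinite) and that the reduction and Lemmas~\ref{lem:eqrel} and~\ref{lem:simpmapconstruction} use nothing about $\base$ beyond its being a partition scheme with infinite domain. Your closing remark about enlarging the source language to make CSP$(\source)$ \NP-hard likewise mirrors the paper's discussion of $\source'_e$ immediately preceding the corollary.
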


Finally, we can now answer the question (that was raised in Section~\ref{sec:backdoordetection}) concerning the complexity of  \textsc{$[\source, \target, \base$]-backdoor detection}
when $\source$ and $\target$ are finite.
By observing that
the reduction employed in Theorem~\ref{the:bd-wh} is a polynomial-time reduction from Hitting Set and using the fact
that Hitting Set is \NP-hard even if all sets have size at most 2, we obtain the following result.
\begin{corollary}\label{cor:finNP}
    The problem
    \textsc{$[\{R_3\},\target_e,\base_e]$-Backdoor Detection} is \NP-hard.
\end{corollary}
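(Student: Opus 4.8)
The claim is that \textsc{$[\{R_3\},\target_e,\base_e]$-Backdoor Detection} is \NP-hard, where $\source = \{R_3\}$ is now a \emph{finite} (indeed singleton) source language. The natural approach is simply to re-examine the reduction from \textsc{Hitting Set} used in the proof of Theorem~\ref{the:bd-wh} and observe that, when restricted to \textsc{Hitting Set} instances in which every set $F \in \mathcal{F}$ has size exactly $2$, the reduction (i) produces only constraints over the single relation $R_3$, and (ii) runs in polynomial time without any dependence on the parameter $k$. Since \textsc{Hitting Set} with all sets of size $2$ is exactly \textsc{Vertex Cover}, which is \NP-hard, this yields \NP-hardness of the detection problem.

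\textbf{Key steps.} First I would recall the reduction of Theorem~\ref{the:bd-wh}: given $(U,\mathcal F,k)$, one builds $V = U \cup \{n\}$ and, for each $F \in \mathcal F$, a constraint $C_F$ with scope $F \cup \{n\}$ and relation $R_{|F|+1}$. Second, I would specialise to the case $|F| = 2$ for every $F$: then every constraint has arity $3$ and relation $R_3$, so the resulting instance $(V,C)$ is genuinely a CSP$(\{R_3\})$ instance, i.e.\ lies in the finite source language $\source = \{R_3\}$. Third, I would note that the correctness argument of Theorem~\ref{the:bd-wh} goes through verbatim — the forward direction (a hitting set $S$ yields the backdoor $B = \{(n,s) \mid s \in S\}$) and the backward direction (a backdoor of size $\le k$ can be normalised to use only pairs involving $n$, and its non-$n$ endpoints form a hitting set, using the first condition of Lemma~\ref{lem:eqrel}) — since those arguments never used $|F| > 2$. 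Fourth, I would invoke the fact stated in the excerpt that \textsc{Hitting Set} restricted to $2$-element sets is precisely \textsc{Vertex Cover} and is \NP-hard, and observe that the reduction is computable in polynomial time in $\|(U,\mathcal F,k)\|$ (the defining formula of $R_3$ is fixed, or computable in $3^4$ time by Lemma~\ref{lem:eqrel}). Together these give a polynomial-time many-one reduction from an \NP-hard problem to \textsc{$[\{R_3\},\target_e,\base_e]$-Backdoor Detection}, establishing \NP-hardness.

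\textbf{Main obstacle.} There is essentially no hard step here; the only thing that needs care is making sure the reduction is a \emph{polynomial-time} reduction (not merely an fpt-reduction): one must check that building the instance takes time polynomial in $|U| + |\mathcal F|$, which is immediate since each constraint $C_F$ has constant arity $3$ and the relation $R_3$ is fixed. One should also double-check that $\target_e = \base_e \cup \{\neq\}$ remains a legitimate finite target language and that CSP$(\target_e)$ and CSP$(\base_e)$ being polynomial-time solvable is not needed for the hardness direction (it is not — hardness of detection does not rely on tractability of evaluation). Hence the proof is a short observation that the construction in Theorem~\ref{the:bd-wh}, restricted to $2$-element sets, is already a polynomial-time reduction witnessing \NP-hardness.
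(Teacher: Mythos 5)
Your proposal is correct and matches the paper's own argument: the paper likewise observes that the reduction in Theorem~\ref{the:bd-wh} is a polynomial-time reduction and invokes the \NP-hardness of \textsc{Hitting Set} restricted to sets of size $2$ (i.e.\ \textsc{Vertex Cover}), under which only the single relation $R_3$ is ever produced. Your additional checks (that the correctness argument never uses $|F|>2$ and that tractability of the target is irrelevant for hardness) are sound and consistent with the paper.
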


\section{Sidedoors}
\label{sec:sidedoors}

We introduce the notion of {\em sidedoors} in this section. The basic idea is 
illustrated via an example in Section~\ref{sec:motex}. The example points out an intrinsic
problem with the backdoor approach,
and that the sidedoor approach has the
potential of being faster than the backdoor approach in certain cases.
We continue by formally defining sidedoors in Section~\ref{sec:sidedoor-def}.
In analogy with simplification maps for backdoors, our sidedoor definition is based on the idea of {\em branching maps} and
we discuss the construction of such maps in Section~\ref{sec:branch-map}.
We finally prove that the sidedoor evaluation and detection problems are 
fixed-parameter tractable when parameterised by solution size (Sections~\ref{sec:sidedoor-eval} and~\ref{sec:sidedoor-detection}, respectively).

\subsection{Motivating Example}
\label{sec:motex}

We use RCC-5 as the basis for this example so $\Theta=\{{\sf DR},{\sf
  PO},{\sf PP},{\sf PP}^{-1},{\sf EQ}\}$ denotes the set of basic
relations in RCC-5 and $\Theta^{\vee =}$ is the set of all
relations definable by unions of the basic relations. The problem CSP$(\Theta)$
is tractable while CSP$(\Theta^{\vee =})$ is \NP-complete~\cite{DBLP:journals/ai/RenzN99}.
Now, recall Example~\ref{ex:rcc5}.
Consider a reduced constraint $R_{\mid \alpha}(x,y)$ with respect to
an instance $(V,C)$ of CSP$(\Theta^{\vee =})$, a set $B \subseteq
V^2$, and a function $\alpha \colon B \rightarrow \Theta$. We know that
if $(x,y) \in B$ (or, symmetrically, $(y,x) \in B$) then
$R(x,y) \land (\alpha(x,y))(x,y)$ is either

\begin{enumerate}
\item
unsatisfiable if
$\alpha(x,y) \cap R = \emptyset$, or 

\item
equivalent to
$\alpha(x,y)$. 
\end{enumerate}

This implies that
the simplification map either
outputs an unsatisfiable CSP$(\Theta)$ instance or replaces the
constraints with the equivalent constraint over the basic relations in $\Theta$. 
Furthermore, this implies that a backdoor $B \subseteq V^2$ has to cover
{\em every} constraint in the instance which is not already included
in $\Theta$. This
results in an $O(5^{|B|})\cdot \mathrm{poly}(||I||)$ time algorithm
for RCC-5, which can be slightly improved to $O(4^{|B|})\cdot
\mathrm{poly}(||I||)$ with the observation that only the trivial
relation $({\sf DR},{\sf PO},{\sf PP},{\sf PP}^{-1},{\sf EQ})$ contains all the five basic relations.

It is easy to improve upon this by considering a larger tractable
constraint language than the set of basic relations.
Consider
the language
\[\Gamma' = \Theta^{\vee =} \setminus \{({\sf PP},{\sf PP}^{-1}), 
({\sf PP},{\sf PP}^{-1}, {\sf DR}),
({\sf PP},{\sf PP}^{-1}, {\sf EQ}),
({\sf PP},{\sf PP}^{-1}, {\sf DR}, {\sf EQ})\}\]
which is a well-known tractable class of RCC-5~\cite{DBLP:journals/ai/RenzN99}.
Assume that we extend the definition of a backdoor so that we allow $\alpha$ to be a function
from $B$ to $\Theta^{\vee =}$, i.e., one may assign a pair of
variables a union of the basic relations.
Then we can use $\Gamma'$ in connection with the simplification map
\[\Sigma(\{R(x,y)\})=\{ \{(R \cap \theta)(x,y)\} \; | \; \theta \in \{({\sf PP}),({\sf PP}^{-1},{\sf DR},{\sf PO},{\sf EQ}))\} \; {\rm and} \; R \cap \theta  \neq \emptyset.\}
\]
Note that $R \cap \theta$
with $\theta \in \{({\sf PP}),({\sf PP}^{-1},{\sf DR},{\sf PO},{\sf EQ}))\}$ is always a relation
in $\Gamma'$.
With this extension, we can perform backdoor solving with 
$\Gamma'$ and $\Sigma$ in $2^{|B|} \cdot \mathrm{poly}(||I||)$ time; the algorithm is a
straightforward generalisation of the algorithm underlying Theorem~\ref{thm:backdoor-eval}.

This idea can be generalised as follows. Instead of looking at subinstances containing two variables, we may
look at subinstances with a larger number of variables and perform a similar branching
process
where we replace subinstances containing ``problematic'' constraints with logically equivalent instances that
do not contain such constraints. This approach has the potential of sometimes leading to faster algorithms
than the backdoor approach (as indicated above and in the forthcoming Example~\ref{ex:sidedoor-2}).
We formalise and analyse the sidedoor approach in the following sections.

%
%

\subsection{Definition of Sidedoors}
\label{sec:sidedoor-def}

We will now formally define sidedoors. We simplify the presentation by first
introducing {\em sidedoor triples} which should be viewed as the sidedoor analogue of
backdoor triples.

\begin{definition}
Let $\source$ and $\target$ denote two relational structures such that $\target \subseteq \source$, and let $r \geq 1$
be an integer.
We say that $\llbracket \source,\target,r \rrbracket$ is a {\em sidedoor triple} and we refer to
\begin{itemize}
\item
$\source$ as the {\em source language},

\item
$\target$ as the {\em target language}, and

\item
$r$ as the {\em radius}.
\end{itemize}
\end{definition}

In a backdoor triple $[\source,\target,\base]$, we require $\source$ and $\target$ to be qffo reducts of $\base$, so
$\source$ and $\target$ have an underlying connection via the structure $\base$.
Such a connection between $\source$ and $\target$ is not enforced in a sidedoor triple $\llbracket \source,\target,r \rrbracket$
where $\source$ and $\target$ can be chosen freely under the condition $\target \subseteq \source$.


\begin{definition} \label{def:sidedoor}
  Assume $\llbracket \source,\target,r \rrbracket$ is a sidedoor triple and
  let $(V,C)$ be an instance of CSP$(\source)$. We say that $S \subseteq 2^V$
  is a {\em sidedoor} into CSP$(\target)$ with radius $r$ if the following hold:

\begin{enumerate}
\item
each set in $S$ has size at most $r$, and

\item
for arbitrary $R(x_1,\dots,x_k) \in C$ with $R \not\in \target$, there is a set $s \in S$ such that
$\{x_1,\dots,x_k\} \subseteq s$.
\end{enumerate}
We say that a constraint $R(x_1,\dots,x_k) \in C$ is \emph{covered} by the sidedoor
$S$ if there is a set $s \in S$ such that $\{x_1,\dots,x_k\} \subseteq s$.
\end{definition}
The intuitive idea is that the constraints that are covered by the sidedoor are ``difficult'' to handle. 
Thus, one can concentrate on ``simplifying'' the constraints that are covered by the sidedoor: this can be done
via a branching strategy where the constraints that are covered by the sidedoor are replaced by computationally
tractable constraints in a solution-preserving way.
To make this work, one needs to restrict the size of the sets in the sidedoor: if we remove (1) from 
Definition~\ref{def:sidedoor}, then every instance $(V,C)$ of CSP$(\source)$ has the sidedoor $\{V\}$ of size 1!
Note that if a CSP$(\source)$ instance $(V,C)$ contains a constraint $R(x_1,\dots,x_k)$ where $R \not\in \target$ and where $x_1, \ldots, x_k$ are all distinct variables, then $(V,C)$
does not admit any sidedoor with radius $r < k$.
We remark that sidedoors work equally well for infinite and finite domains since we do not make any assumptions concerning
the domains of the languages $\source$ and $\target$.

We next introduce branching maps in order to formalise the idea of ``simplifying hard constraints''.

\begin{definition}
Assume that $\llbracket \source,\target,r \rrbracket$ is a sidedoor triple and let $x_1,\dots,x_r$ denote
distinct variables.
Let $\source_r$ be the set of CSP$(\source)$ instances
with variable set $\{x_1,\dots,x_r\}$ and
let $\target_r$ be the set of CSP$(\target)$ instances
with variable set $\{x_1,\dots,x_r\}$.
Let $\Omega$ be a mapping from $\source_r$ to $2^{\target_r}$ that satisfies the following condition: for every $I \in \source_r$,
\[\sols(I) = \bigcup_{I' \in \Omega(I)} \sols(I').\]
Then, we say that $\Omega$ is a {\em branching map} from $\source$ to $\target$ with radius $r$.
\end{definition}

Unlike simplification maps, it is essential that a branching map is a total function
from $\source_r$ to $2^{\target_r}$.
The {\em branching factor} of $\Omega$ is $\max\{|\Omega(I)| \; : \; I \in \source_r\}$, and it
is directly correlated to the complexity of the sidedoor evaluation problem as we will
see in Section~\ref{sec:sidedoor-eval}.
Note that if a a language $\Gamma$ contains infinitely many relations, then there may not be an upper bound on the size of CSP$(\Gamma)$ instances on $r$
variables. Thus, we sometimes need to restrict ourselves to polynomial-time computable
branching maps.
We now reconsider Examples~\ref{ex:languages} and~\ref{ex:rcc5} with respect to sidedoors.


%
%

\begin{example} \label{ex:eqlang-sidedoor}
Recall that the equality relation $\delta$ from Example~\ref{ex:eq} is defined as follows:
\[\delta=\{(y_1,y_2,y_3) \in {\mathbb Q}^3 \; | \;
(y_1 = y_2 \land y_1 \neq y_3) \lor (y_1 \neq y_2 \land y_2 = y_3)\}.\]
Based on the definition of $\delta$, we see that the \NP-hard problem CSP$(\{\delta\})$ admits a simple branching map $\Omega$ from $\{\delta\}$
to $\{=,\neq\}$ with radius 3 and branching factor 2. Let $x_1,x_2,x_3$ denote distinct variables and let $V=\{x_1,x_2,x_3\}$.
We see, for instance, that
\[\Omega((V,\{\delta(x_1,x_2,x_3)\}))= \{ (V,\{x_1=x_2,x_1 \neq x_3\}),(V,\{x_1 \neq x_2,x_2=x_3\})\},\]
\[\Omega((V,\{\delta(x_1,x_1,x_3)\}))= \{ (V,\{x_1 \neq x_3\})\},\]
and
\[\Omega((V,\{\delta(x_1,x_2,x_3),\delta(x_2,x_3,x_1)\})) = \emptyset\]


\end{example}

\begin{example} \label{ex:sidedoor-1}
%
%
%
%

We know from Section~\ref{sec:motex} that CSP$(\Gamma')$ is polynomial-time solvable when
\[\Gamma' = \Theta^{\vee =} \setminus \{({\sf PP},{\sf PP}^{-1}), 
({\sf PP},{\sf PP}^{-1}, {\sf DR}),
({\sf PP},{\sf PP}^{-1}, {\sf EQ}),
({\sf PP},{\sf PP}^{-1}, {\sf DR}, {\sf EQ})\}.\]
We will use the relation $({\sf PP}^{-1},{\sf DR},{\sf PO},{\sf EQ})) \in \Gamma'$ 
frequently so we let $\lambda$ denote it.
Let $x_1,x_2,x_3$ denote distinct variables and let $V_2=\{x_1,x_2\}$ and $V_3=\{x_1,x_2,x_3\}$.

A suitable branching map from $\Theta^{\vee =}$ to $\Gamma'$
with radius 2 and branching factor 2 is the following:
\[\Omega_2((V_2,\{R(x_1,x_2)\}))=\{ (V_2,\{(R \cap ({\sf PP}))(x_1,x_2)\}),
                      (V_2,\{(R \cap \lambda)(x_1,x_2)\}) \}.\]
To construct a branching map $\Omega_3$ from $\Theta$
to $\Gamma'$ with radius 3, we simply extend the idea behind $\Omega_2$. Thus,
consider the following branching map:
\[
\begin{split}
& \Omega((V_3,\{R_{12}(x_1,x_2),R_{23}(x_2,x_3),R_{13}(x_1,x_3)\}))=  \\
& \left\{ (V_3,\{(R_{12} \cap r_{12})(x_1,x_2),(R_{23} \cap r_{23})(x_2,x_3),(R_{13} \cap r_{13})(x_1,x_3) \}) \; | \; r_{12},r_{23},r_{13} \in \{({\sf PP}),\lambda\} \right\}
\end{split}
\]
It has branching factor 8 but this can be improved.
It is easy to verify that the constraint set $\{{\sf PP}(x_1,x_2), {\sf PP}(x_2,x_3), \lambda(x_1,x_3)\}$ is not satisfiable: note that
the constraints ${\sf PP}(x_1,x_2)$ and ${\sf PP}(x_2,x_3)$ forces the relation ${\sf PP}(x_1,x_3)$ to hold since ${\sf PP}$ is a transitive relation.
Thus, $\Omega$ can be refined into a branching map $\Omega_3$ with branching factor 7.
\end{example}

\subsection{Computing Branching Maps} 
\label{sec:branch-map}

Assume that $[\source,\target,\base]$ is a backdoor triple with
finite $\source$ and $\target$. Then, a simplification map from $\source$
to $\target$ always exists, and it can be computed whenever CSP$(\base)$ is decidable
by Lemma~\ref{lemma1-c}. If we consider a sidedoor triple 
$\llbracket \source, \target, r \rrbracket$, then a branching map does
not always exist. This is obvious if the arity of some relation
in $\source \setminus \target$ exceeds $r$, but there are other reasons
for the non-existence of a branching map, too. 
For instance, if we let $R_+ = \{(x,y,z) \in {\mathbb Q} \; | \; x=y+z\}$
and $R_{*} = \{(x,y,z) \in {\mathbb Q} \; | \; x=y \cdot z\}$, then it is easy to
verify that $\llbracket \{R_+,R_{*}\},\{R_+\},3 \rrbracket$ does not admit
a branching map and, consequently, $\llbracket \{R_+,R_{*}\},\{R_+\},r \rrbracket$
does not admit a branching map for any $r \geq 3$.

For every backdoor triple $[\source,\target,\base]$, we know that
$\source$ and $\target$ are qffo definable in $\base$.
Inspired by this, we show how to exploit qffo definability for identifying a natural class of
sidedoor triples that admit branching maps.
Recall that a first-order formula is {\em positive} if it does not contain negation.

\begin{lemma} \label{lem:branchmapcompute}
Let $\llbracket \source,\target,r \rrbracket$ be a sidedoor triple where $\source$ and $\target$ are finite.
Assume that the following holds:

\begin{enumerate}
\item
$\source$ is quantifier-free positively definable in $\target$, and

\item
the maximum arity $a$ of the relations in $\source$ satisfies $a \leq r$.
\end{enumerate}

Then, one can compute a branching map from $\source$ to $\target$ with radius $r$.
\end{lemma}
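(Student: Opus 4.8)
The plan is to convert every relation of $\source$ into a quantifier-free positive definition over $\target$, push these definitions to disjunctive normal form, and then obtain the branching map by distributing conjunction over disjunction at the level of whole instances. By assumption~(1), every $R\in\source$ has a quantifier-free positive definition over $\target$; rewriting it in DNF we fix, once and for all, an equivalence
\[
R(y_1,\dots,y_{\ar(R)})\;\equiv\;\bigvee_{j=1}^{d_R}\psi_{R,j}(y_1,\dots,y_{\ar(R)}),
\]
where each $\psi_{R,j}$ is a conjunction of atoms $T(\bar y)$ with $T\in\target$ (together with possible equality atoms; following the convention used elsewhere in the paper, definitions ``over $\target$'' are read with equality, and in every example of interest $\target$ in fact contains the equality relation). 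For $T\in\target$ we take the trivial definition $T(\bar y)\equiv T(\bar y)$. Now let $I=(\{x_1,\dots,x_r\},C)\in\source_r$. Replacing each constraint $R(\bar z)\in C$ by $\bigvee_j\psi_{R,j}(\bar z)$ shows that $I$ is logically equivalent, over the free variables $x_1,\dots,x_r$, to $\bigwedge_{R(\bar z)\in C}\bigvee_{j=1}^{d_R}\psi_{R,j}(\bar z)$; distributing $\wedge$ over $\vee$ rewrites this as $\bigvee_{\sigma}\Psi_\sigma$, where $\sigma$ ranges over the choices of one disjunct per constraint of $C$ and $\Psi_\sigma$ is the conjunction of the chosen $\psi_{R,j}(\bar z)$'s, hence a conjunction of $\target$-atoms over $x_1,\dots,x_r$. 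I then define $\Omega(I)$ to be the set of CSP$(\target)$ instances $(\{x_1,\dots,x_r\},C_\sigma)$, one per selection $\sigma$, where $C_\sigma$ collects the atoms of $\Psi_\sigma$ as constraints. Assumption~(2), $a\le r$, is the standing requirement of the radius-$r$ sidedoor framework (a radius-$r$ sidedoor can only cover constraints whose scope has at most $r$ variables), and so fixes the setting in which the resulting branching map is meant to be used.

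Next I would verify that $\Omega$ is indeed a branching map from $\source$ to $\target$ with radius $r$. It is a total function on $\source_r$ by construction, and each $(\{x_1,\dots,x_r\},C_\sigma)$ is a CSP$(\target)$ instance with variable set exactly $\{x_1,\dots,x_r\}$ --- variables that do not occur in $C_\sigma$ are simply left unconstrained and impose no restriction on a satisfying assignment --- so $\Omega(I)\in 2^{\target_r}$. For the solution-preserving property, a satisfying assignment of $I$ is a map $f\colon\{x_1,\dots,x_r\}\to D$, and the equivalence $I\equiv\bigvee_\sigma\Psi_\sigma$ gives $f\in\sols(I)$ iff $f\models\Psi_\sigma$ for some $\sigma$ iff $f\in\sols((\{x_1,\dots,x_r\},C_\sigma))$ for some $\sigma$. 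Hence $\sols(I)=\bigcup_{I'\in\Omega(I)}\sols(I')$, which is exactly the defining condition of a branching map.

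It remains to argue computability. Since $\source$ and $\target$ are finite, the family of DNF definitions above is a finite fixed object: it is part of the data of the sidedoor triple, or --- if only the relations themselves are given --- it can be located by enumerating quantifier-free positive $\target$-formulas of bounded size and testing which ones define each $R\in\source$, and this test is the only place where any further decidability assumption would be needed. Moreover $\source_r$ is finite, since there are at most $|\source|\cdot r^{a}$ constraints one can form on $x_1,\dots,x_r$, hence only finitely many CSP$(\source)$ instances on those variables. Thus $\Omega$ can be produced as a finite lookup table by running the distribution step on each $I\in\source_r$, and in particular its branching factor is bounded by $(\max_{R\in\source}d_R)^{|\source|\cdot r^{a}}$, a constant depending only on the triple. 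The mathematical content is just distributivity of $\wedge$ over $\vee$; the steps I expect to need the most care --- and would write out in full --- are the bookkeeping that keeps each image instance genuinely inside $\target_r$ (identical variable set, only $\target$-constraints, equality atoms accounted for) and, when the defining formulas are not supplied as input, the effective search for them.
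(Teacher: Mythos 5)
Your proposal is correct and follows essentially the same route as the paper's proof: both expand each constraint via a fixed quantifier-free positive DNF definition over $\target$ and take one branch per choice of disjunct, the only cosmetic difference being that you distribute $\wedge$ over $\vee$ in one shot while the paper performs the same replacement one non-$\target$ constraint at a time recursively. Your additional remarks on where the definitions come from and on the role of equality atoms are reasonable points that the paper's proof also leaves implicit.
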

\begin{proof}
Let $x_1,\dots,x_r$ be distinct variables.
Let $\source_r$ and $\target_r$ denote the set of all CSP$(\source)$ and CSP$(\target)$ instances on variable set $\{x_1,\dots,x_r\}$, respectively. We construct
a branching map $\Omega$.
Arbitrarily pick $I=(V,C) \in \source_r$.

If every constraint $R(\bar{x})$ in $C$ satisfies $R \in \target$, then
we let $\Omega(I)=I$.
Otherwise,
pick one constraint $R(\bar{x}) \in C$ such that $R$
is not in $\target$. We may
(without loss of generality) assume that the positive qffo $\target$-definition 
of $R$ is in DNF since the 
conversion to DNF can be done without
introducing any negations. 
Hence, $\phi(\bar{x})=\psi_1(\bar{x}) \vee \dots \vee \psi_m(\bar{x})$
is the DNF definition of $R(\bar{x})$ in $\target$
where
each $\psi_i$ is a conjunction of constraints based on (unnegated) relations in $\target$.
Obviously, each $\psi_i$ can
be viewed as an instance of CSP$(\target)$.
Let 
\[X = \{(V,(C \setminus \{R(\bar{x})\}) \cup \psi_1(\bar{x})), \dots,
(V,(C \setminus \{R(\bar{x})\}) \cup \psi_m(\bar{x}))\}.\]
It is an immediate consequence of $\phi(\bar{x})$ being a definition of $R(\bar{x})$ 
that
\[\sols(I) = \bigcup_{I' \in X} \sols(I')\]
If the instances in $X$ now contain constraints with relations from $\target$ only, then we are done and we let $\Omega(I)=X$.
Otherwise,
repeat the process of recursively replacing relations in $X$ that are not in $\Gamma'$ by their definitions in $\Gamma'$.
At least one constraint $R(\bar{x}) \in C$ with $R \in \source \setminus \target$ is removed
in every recursive step so we will eventually end up
with a non-empty set of instances $X'$ satisfying $\sols(I) = \bigcup_{I' \in X'} \sols(I')$ and where
each member of $X'$ is an instance of CSP$(\target)$ (since $r \geq a$).
Thus, we let $\Omega(I) = X'$. 
\end{proof}

\begin{example}
    It is easy to find examples where Lemma~\ref{lem:branchmapcompute} is applicable: if $\target$ is homogeneous and finitely bounded (see Example~\ref{ex:finbound}) and $\source$ a first-order reduct of $\target$ then it can also be defined as a qffo-reduct of $\target$.
\end{example}

The branching map computed in Lemma~\ref{lem:branchmapcompute} has only a finite number of possible inputs and can consequently be accessed in constant time.
It is obviously not guaranteed to have minimal branching factor---in particular, note
that the branching factor may vary depending on the choice of DNF definitions for
the relations in $\source$.

We emphasise that
the restriction to quantifier-free definitions in Lemma~\ref{lem:branchmapcompute} is necessary.
Define $A_k(x,y) \equiv x=y+k$. We see that
$A_2(x,y)=\exists z. A_1(x,z) \wedge A_1(z,y)$ so $A_2$ is fo-definable in $\{A_1\}$.
Let $\target=\{A_1\}$ and $\source=\{A_1,A_2\}$. 
There is no branching map from $\source$ to $\target$ with radius 2: this would imply that
$A_2(x,y)$ is logically equivalent to either $A_1(x,y)$ or $A_1(x,y) \vee A_1(y,x)$.

It is also necessary that the definitions are positive. Consider the equality and disequality relations $=$ and $\neq$
over some domain $D$.
Define the source language $\source=\{=,\neq\}$ and
the target language $\target=\{=\}$.
Obviously, every relation in $\source$ is qffo definable in $\target$ (but
$\neq$ is not qffo positive definable in $\target$).
We see that there is no branching map of radius $2$ from $\source$ to $\target$:
this would imply that $\neq$ equals $=$.
However, if the target language is finite and JEPD, then the result holds even if we use general quantifier-free
definitions: every
atomic formula $\neg R(x,y)$ can be replaced by
$\bigvee_{S \in \Gamma' \setminus \{R\}} S(x,y)$ as was pointed out
in Section~\ref{sec:simp-map}.
This implies the following connection between backdoors and sidedoors.

\begin{lemma}
Assume that $[\source,\target,\base]$ is a backdoor triple with
finite $\source,\target,\base$ and that $a$ is the maximum arity of the relations
in $\source$. Then, there is a branching map
for the sidedoor triple $\llbracket \source \cup \base, \target \cup \base, r \rrbracket$ whenever $r \geq \max(2,a)$.
\end{lemma}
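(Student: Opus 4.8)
The plan is to apply Lemma~\ref{lem:branchmapcompute} to the sidedoor triple $\llbracket \source\cup\base,\,\target\cup\base,\,r\rrbracket$, so the main task is to verify its two hypotheses. For the arity condition, note that the maximum arity of the relations in $\source\cup\base$ is $\max(2,a)$: every relation in $\base$ is binary (it is a partition scheme), and the relations in $\source$ have arity at most $a$. Hence $r\ge\max(2,a)$ is exactly the assumption we are given, and hypothesis (2) of Lemma~\ref{lem:branchmapcompute} holds. Also $\source\cup\base$ and $\target\cup\base$ are finite since $\source,\target,\base$ are all finite, and $\target\cup\base\subseteq\source\cup\base$ is not even required by that lemma but is harmless.

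The real content is hypothesis (1): that $\source\cup\base$ is quantifier-free positively definable in $\target\cup\base$. First I would handle the relations in $\base$: each $R\in\base$ is trivially defined by the atom $R(x,y)$, which is a positive qffo formula over $\target\cup\base$. Next I would handle the relations in $\source$. By assumption each $R\in\source$ is a qffo reduct of $\base$, i.e.\ $R$ is qffo-definable over $\base$. Such a definition is a Boolean combination of atoms $S(x_i,x_j)$ with $S\in\base$, but it may contain negations. Here I invoke the observation recalled in Section~\ref{sec:simp-map}: since $\base$ is a partition scheme (in particular JEPD and finite), every negated atom $\neg S(x,y)$ can be rewritten as the positive disjunction $\bigvee_{S'\in\base\setminus\{S\}}S'(x,y)$, and after pushing this substitution through and converting to DNF we obtain a \emph{positive} qffo formula over $\base$, hence over $\target\cup\base$, that still defines $R$. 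Therefore every relation in $\source\cup\base$ is quantifier-free positively definable in $\target\cup\base$, and hypothesis (1) holds.

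With both hypotheses verified, Lemma~\ref{lem:branchmapcompute} yields a computable branching map from $\source\cup\base$ to $\target\cup\base$ with radius $r$, which is exactly the conclusion. I do not expect any serious obstacle here; the one point that needs a little care is making explicit that the JEPD-elimination-of-negation trick preserves positivity through the DNF conversion (the conversion introduces no new negations), and that the maximum arity of $\source\cup\base$ is indeed $\max(2,a)$ rather than just $a$ — which is precisely why the bound in the statement is $\max(2,a)$ and not $a$. If $\base$ happened to be empty the bound would collapse to $a$, but as a partition scheme it is nonempty (it contains equality), so the binary lower bound of $2$ is genuinely needed.
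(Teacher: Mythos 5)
Your proposal is correct and follows exactly the paper's route: the paper's own proof is a one-line application of Lemma~\ref{lem:branchmapcompute}, justified by the fact that $\source\cup\base$ is qffo positively definable in the JEPD language $\base\subseteq\target\cup\base$ (using the negation-elimination trick recalled just before the lemma), with the $\max(2,a)$ bound explained by the binary relations in $\base$. You have simply spelled out the same argument in more detail.
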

\begin{proof}
This follows immediately from Lemma~\ref{lem:branchmapcompute} since $\source \cup \base$ is qffo positive definable in
the JEPD language $\base$ which is a subset of $\target \cup \base$.
\end{proof}

We require that $r \geq \max(2,a)$ since the relations in $\base$ have arity 2.
This way we cover the (fairly trivial) case when $\source$ only contains relations of arity 1.

%
%
%

\subsection{Sidedoor Evaluation}
\label{sec:sidedoor-eval}

We begin this section by introducing the sidedoor evaluation problem.

\pbDef{\textsc{$\llbracket \source,\target,r \rrbracket$-Sidedoor Evaluation}}{An instance $(V,C)$
  of CSP$(\source)$ and a sidedoor $S \subseteq 2^{V}$ of radius $r$ into CSP$(\target)$.}
  {Is $(V,C)$ satisfiable?}
  
We will now analyse the complexity of \textsc{$\llbracket \source, \target,r \rrbracket$-sidedoor evaluation}. We show that this problem is fixed-parameter tractable
(under natural side conditions)
when parameterised by sidedoor size.
The additional conditions concern the existence of a branching map that can be accessed in polynomial time and that
has bounded branching factor. 
We need some notation for describing our algorithm for this problem.
Given a CSP instance $I=(V,C)$ and a set $V' \subseteq V$, we define the {\em restriction} of 
$I$ to $V'$ (denoted by $I[V']$) as the instance $(V',\{R(x_1,\dots,x_k) \in C \; | \; \{x_1,\dots,x_k\} \subseteq V'\})$.

Let $I=(V,C)$ be an arbitrary instance of CSP$(\Gamma)$
and $I'=(V',C')$ be an arbitrary instance of CSP$(\Gamma')$
with $V' \subseteq V$. We let $I \oplus I'$ denote the
instance 
\[(V, (C \setminus \{R(x_1,\dots,x_k) \in C \; | \; \mbox{$\{x_1,\dots,x_k\} \subseteq V'$}\}) \cup I').\]
That is, $I \oplus I'$ is the instance $I$ where every constraint that is completely covered by $V'$
is removed, and then the constraints in $I'$ are added to $I$.
Clearly, if $I[V']$ and $I'$ have the same set of satisfying assignments, then
$I$ is satisfiable if and only if $I \oplus I'$ is satisfiable.

\begin{figure}

{\bf algorithm} $A((V,C),S)$
\begin{enumerate}
\item {\bf if} $S=\emptyset$ {\bf then return} $A'((V,C))$ 
\item arbitrarily choose $s \in S$

\item let $\{I'_1,\dots,I'_p\} = \Omega'((V,C)[s])$

\item apply algorithm $A$ to the instances

  \begin{itemize}
\item $J_1=((V,C) \oplus I_1, S \setminus \{s\})$

\item $J_2=((V,C) \oplus I_2, S \setminus \{s\})$

\item $\vdots$

\item $J_p=((V,C) \oplus I_p, S \setminus \{s\})$
\end{itemize}

\item {\bf if} any call returns \textsc{Yes} {\bf then return} \textsc{Yes} {\bf else return} \textsc{No}
\end{enumerate}

\caption{Algorithm for sidedoor evaluation.}
\label{fig:sidedooralg}
\end{figure}

\begin{theorem} \label{thm:sidedoor-eval}
Assume that $\llbracket \source,\target,r \rrbracket$ is a sidedoor triple.
 Let $\Omega$ denote a polynomial-time computable branching map from $\source$ to $\target$ with radius $r$
 and branching factor $c < \infty$. If CSP$(\target)$ is polynomial-time solvable, then
\textsc{$\llbracket \source, \target,r \rrbracket$-sidedoor evaluation} is solvable in time
$c^{|S|} \cdot \mathrm{poly}(||I||)$ for a given sidedoor $S$. In particular, \textsc{$\llbracket \source, \target,r \rrbracket$-sidedoor evaluation} is fpt when parameterised by $|S|$.

\end{theorem}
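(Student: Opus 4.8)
The plan is to analyze the recursive algorithm $A$ from Figure~\ref{fig:sidedooralg} and establish two things: correctness (the algorithm returns \textsc{Yes} iff $(V,C)$ is satisfiable) and the claimed running time $c^{|S|}\cdot\mathrm{poly}(||I||)$. For correctness, I would argue by induction on $|S|$. The base case $|S|=0$ is handled by the call to the polynomial-time CSP$(\target)$ solver $A'$; here I need the observation that when $S=\emptyset$ and $S$ is a valid sidedoor, every constraint $R(x_1,\dots,x_k)\in C$ with $R\notin\target$ must be covered, hence there are no such constraints, so $(V,C)$ is already a CSP$(\target)$ instance and $A'$ decides it correctly. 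For the inductive step, I pick the chosen set $s\in S$ and form $(V,C)[s]$, apply the branching map to get $\Omega((V,C)[s])=\{I'_1,\dots,I'_p\}$ with $p\le c$, and invoke the defining property of a branching map: $\sols((V,C)[s])=\bigcup_{i}\sols(I'_i)$. I would then use the $\oplus$-operation's stated property: since each $I'_i$ has the same variable set as $(V,C)[s]$, I get $\sols((V,C)) = \bigcup_i \sols((V,C)\oplus I'_i)$ — this requires a short argument that a satisfying assignment of $(V,C)$ restricts to a satisfying assignment of $(V,C)[s]$, lands in some $\sols(I'_i)$, and therefore satisfies $(V,C)\oplus I'_i$, and conversely. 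Finally I must check that $S\setminus\{s\}$ is a valid sidedoor of radius $r$ for each $J_i=((V,C)\oplus I'_i, S\setminus\{s\})$: the sets in $S\setminus\{s\}$ still have size $\le r$, and any constraint in $(V,C)\oplus I'_i$ not over $\target$ is either an old uncovered-by-$s$ constraint of $(V,C)$ (still covered by some remaining set in $S$, since $S$ covered it and, being uncovered by $s$, it is covered by $S\setminus\{s\}$) or a constraint from $I'_i$ — but all constraints of $I'_i$ are over $\target$ since $I'_i\in\target_r$. So $S\setminus\{s\}$ is a sidedoor, the inductive hypothesis applies to each $J_i$, and by the union identity above $(V,C)$ is satisfiable iff some $J_i$ is, which is exactly what the algorithm checks in step~5.

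For the running time, I would set up a recurrence on the number of leaves of the recursion tree. Each call with sidedoor of size $t$ spawns at most $c$ recursive calls with sidedoor of size $t-1$, so the recursion tree has at most $c^{|S|}$ leaves and at most $c^{|S|}$ internal nodes of each depth, hence $O(|S|\cdot c^{|S|})$ nodes total — but more simply $O(c^{|S|})$ calls since the tree has branching $\le c$ and depth $\le |S|$. At each node the non-recursive work consists of: choosing $s$ (trivial), computing $(V,C)[s]$, evaluating the branching map $\Omega$ on it (polynomial time by hypothesis), and forming the instances $(V,C)\oplus I'_i$ (polynomial in $||I||$); the leaves additionally run $A'$, which is polynomial time since CSP$(\target)$ is polynomial-time solvable. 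Here I should note that the instance size does not blow up across recursion: $(V,C)\oplus I'_i$ removes the constraints covered by $s$ and adds the constraints of $I'_i$, and $|I'_i|$ is bounded by a function of $r$ (or polynomially, if $\target$ is infinite, using the polynomial-time computability of $\Omega$ which bounds the output size polynomially). So each node costs $\mathrm{poly}(||I||)$ and the total is $c^{|S|}\cdot\mathrm{poly}(||I||)$, giving membership in \FPT\ parameterised by $|S|$.

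The main obstacle I anticipate is the bookkeeping around the $\oplus$-operation and the preservation of the sidedoor property under recursion — in particular making sure that when we replace the subinstance on $s$ by one of its branches, no previously-covered "hard" constraint becomes uncovered, and that the newly introduced constraints are genuinely over $\target$ and thus need no further covering. A secondary technical point is controlling instance-size growth so that the $\mathrm{poly}(||I||)$ factor is honest; this is where the polynomial-time computability (hence polynomial output size) of the branching map $\Omega$, rather than merely its constant branching factor, is used when $\target$ is infinite. Everything else is a routine induction, so I expect the proof to be short once these two points are pinned down.
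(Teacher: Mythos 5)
Your proposal is correct and follows essentially the same route as the paper's proof: induction on $|S|$ using the branching-map union property and the $\oplus$-operation, verification that $S\setminus\{s\}$ remains a sidedoor for each branch (new constraints being over $\target$, surviving hard constraints being covered by the remaining sets), and a depth-$|S|$, branching-$c$ recursion tree with polynomial work per node, with instance growth controlled by the polynomial output size of $\Omega$. The only details you elide are minor technicalities the paper spells out, namely padding each $s\in S$ to size exactly $r$ and transporting $\Omega$ from the fixed variable set $\{x_1,\dots,x_r\}$ to arbitrary $r$-sets via a bijection.
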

\begin{proof}
We assume that $A'$ is a polynomial-time algorithm for CSP$(\target)$.
Let $(V,C)$ be an instance of CSP$(\source)$ that has a sidedoor $S$ into CSP$(\target)$ with radius $r$.
We assume (without loss of generality) that every $s \in S$ satisfies $|s|=r$. If this is not the case, then
those sets that are smaller than $r$ can be augmented with arbitrary variables.
The branching map $\Omega$ is only defined for constraints over variable sets $X=\{x_1,\dots,x_r\}$.
We modify it into a map $\Omega'$ that is applicable to arbitrary sets of variables as follows. 
Let $Y=\{y_1,\dots,y_r\}$ denote a set of variables and let $\sigma$ denote an arbitrary
bijection from $X$ to $Y$. Given an instance $(Y,C_Y)$ of CSP$(\source)$, we let $\sigma((Y,C_Y))=(X,C_X)$ denote the instance
$(Y,C_Y)$ with variables $y_i$, $1 \leq i \leq r$, replaced by $\sigma(y_i)$. Let $\{I_1,\dots,I_p\}=\Omega((X,C_X))$
and $\Omega'((Y,C_Y))=\{\sigma^{-1}(I_1),\dots,\sigma^{-1}(I_p)\}$.

We claim that $(V,C)$
can be solved by Algorithm $A$ in Figure~\ref{fig:sidedooralg}. 
We first show that the algorithm always computes the correct answer.
To prove this, we begin by verifying that $S \setminus \{s\}$ is a sidedoor for $(V,C) \oplus I_i$, 
$1 \leq i \leq p$.
Arbitrarily choose one of the instances $J_q=((V,C_q),S \setminus \{s\})$, $1 \leq q \leq p$, from line (4) where
$C_q = C \setminus \{R(x_1,\dots,x_k) \in C \; | \; \mbox{$\{x_1,\dots,x_k\} \subseteq B'$}\} \cup C_q$ by the
definition of $(V,C) \oplus I_q$.
We verify the two properties in Definition~\ref{def:sidedoor}.

\begin{enumerate}
\item
The set $S$ is a sidedoor for $(V,C)$ with radius $r$ so each set in $S$ has size $r$.
The same obviously holds for $S \setminus \{s\}$.

\item
We need to show that
for arbitrary $R(x_1,\dots,x_k) \in C_q$ with $R \not\in \target$, there is a set $s' \in S \setminus \{s\}$ such that
$\{x_1,\dots,x_k\} \subseteq s'$. Arbitrarily choose $R(x_1,\dots,x_k) \in C_q$ with $R \not\in \target$.
Assume to the contrary that there is no $s' \in S \setminus \{s\}$ such that
$\{x_1,\dots,x_k\} \subseteq s'$. Since $S$ is a sidedoor for $I$, this implies that
$\{x_1,\dots,x_k\} \subseteq s$. This leads to a contradiction since all relations in
$\Omega'((V,C)[s])$ are in $\target$ by the definition of branching maps.

\end{enumerate}


We prove correctness by induction over $|S|$.
If $|S|=0$, then $(V,C)$ is an instance of CSP$(\target)$ and the correct answer is computed in line (1).
Assume the algorithm computes correct answers whenever $|S| < m$. We show that this holds also when $|S|=m$.
Consider the CSP instances

\medskip

\hspace{2em} $J_1'=(V,C) \oplus I_1$
 
 \hspace{2em} $J_2'=(V,C) \oplus I_2$

 \hspace{2em} $\vdots$

\hspace{2em} $J_p'=(V,C) \oplus I_p$

\medskip

\noindent
that are computed in line (4).
Condition (2) in the definition of branching maps immediately implies that
\[\sols(I) = \bigcup_{i=1}^p \sols(J_i').\]
If the algorithm answers \textsc{Yes}, then
\[A((V,C) \oplus I_q,S \setminus \{s\})=\mbox{\textsc{Yes}}\]
for some $1 \leq q \leq p$.
The induction hypothesis implies that
$(V,C) \oplus I_q)$ is satisfiable
since $|S \setminus \{s\}| < m$ and we know that
$S \setminus \{s\}$ is a sidedoor for
$(V,C) \oplus I_q$.
Hence, $\sols((V,C))$ is non-empty and $(V,C)$ has a solution.
Similarly, if the algorithm answers \textsc{No}, then none of $J_1',\dots,J_p'$ has a solution so the set
$\sols((V,C))$ is empty and
$(V,C)$ has no solution.

\medskip

We conclude the proof by analysing the time complexity of algorithm $A$.
Let $s$ denote the size of the sidedoor.
Now consider a branch in the algorithm
\[(V,C) \rightarrow (V,C_1) \rightarrow \dots \rightarrow (V,C_s)\]
where $(V,C)$ is the original instance, $(V,C_1)$ is an instance $(V,C) \oplus I_i$, and so on.

Since $\Omega$ (and consequently $\Omega'$) is polynomial-time computable, 
we can assume that $||\Omega(I)|| \leq t(||I||)$ for some polynomial $t$.
This implies that
\[||(V,C) \oplus I_i || \leq ||(V,C)||+t(||(V,C)[s]||) \leq ||(V,C)||+t(||(V,C)||).\]
Hence,
we can without loss of generality assume that
$(V,C_p)$, $1 \leq p \leq s$, have size at most $||(V,C)||+ p \cdot t(||(V,C)||)$. Let $b=||(V,C)|| \cdot s \cdot t(||(V,C)||)$ and note that
$b$ is trivially an upper bound on $|C|+p \cdot t(||(V,C)||)$, $1 \leq p \leq s$.
We simplify our argument by assuming that all CSP instances encountered during
an execution of the algorithm 
has size $b$.

By inspecting algorithm $A$, we see that its running time is bounded by a function $T$ in the size of the sidedoor
with recursive definition
\[ T(s)= \left\{
\begin{array}{ll}
T'(b) & \mbox{if $s=0$} \\
c \cdot T(s-1) + {\rm poly}(b) & \mbox{if $s > 0$}
\end{array}
\right.\]
where $c$ is the branching factor of $\Omega$ and $\Omega'$, and $T'$ is the time
complexity of algorithm $A'$.
Thus, for sufficiently large $||I||$ there exists constants $q,q'$ such that
\[T(s) \leq c^s \cdot T'(b) \cdot \mathrm{poly}(b)
\leq c^s \cdot b^q \cdot b^{q'} = c^s \cdot (||I|| \cdot s \cdot \alpha)^{q+q'}
= c^s \cdot ||I||^{q+q'} \cdot (s \cdot \alpha)^{q+q'}.\]

We conclude that the algorithm is fixed-parameter tractable since $c$ and $\alpha$
are constants that do not depend on the input.
\end{proof}

By applying Theorem~\ref{thm:sidedoor-eval} to Example~\ref{ex:eqlang-sidedoor}, we see that
CSP$(\{S\})$ can be solved in $O(2^{s})$ time where $s$ is sidedoor size.
We illustrate some more aspects of sidedoors in the next example.

\begin{example} \label{ex:sidedoor-2}
Let us once again consider RCC-5 and Example~\ref{ex:sidedoor-1}.
Assume $I$ is an instance of CSP$(\Theta^{\vee =})$ with $n$ variables and that it has an $s$-element sidedoor with radius $r$.
Theorem~\ref{thm:sidedoor-eval} implies that
$I$ can be solved in $O(2^s \cdot {\rm poly}(||I||))$ time by using branching map $\Omega_2$ if $r=2$ and in
$O(3^s \cdot {\rm poly}(||I||))$ time by using branching map $\Omega_3$ if $r=3$.
This represents an improvement for instances with small sidedoors since the fastest 
known algorithm for CSP$(\Theta^{\vee =})$ runs in
$2^{O(n \log n)}$ time~\cite{Jonsson:etal:ai2021}. More precisely, the sidedoor approach beats this algorithm for instances 
with sidedoor size in $o(n \log n)$.

We now illustrate how sidedoors with large
radius may be beneficial. Let ${\mathcal I}$ denote the set of
RCC-5 instances $I=(V,C)$ where

\begin{itemize}
\item
$V=\{x_0,\dots,x_{n-1}\}$ where $n$ is divisible by three.

\item
the relations between variables $x_{3k},x_{3k+1},x_{3k+2}$ are in $\Gamma \setminus \Gamma'$ for $0 \leq k \leq n/3-1$, and

\item
all other constraints are in $\Gamma'$.
\end{itemize}

We partition $V$ into $n/3$ partitions $V_k=\{x_{3k},x_{3k+1},x_{3k+2}\}$, $0 \leq k \leq n/3-1$.
Arbitrarily choose $I \in {\mathcal I}$.
We know that $I$ has a backdoor 
\[B=\{\{x_i,x_j\} \; | \; x_i,x_j \in V_k, \; x_i \neq x_j, \; {\rm and} \; 0 \leq k \leq n/3-1\}\]
with $|B|=n$ and this set
is additionally a sidedoor with radius 2.
If we use the branching map $\Omega_2$ from Example~\ref{ex:sidedoor-1} together with the sidedoor $B$, then 
the CSP for the instances in ${\mathcal I}$ can be solved in $O(2^n \cdot {\rm poly}(||I||))$ time by
Theorem~\ref{thm:sidedoor-eval} since $\Omega_2$ has branching factor 2. Note that the sidedoor approach beats the backdoor approach (using backdoor $B$) which runs in $O(4^{|B|} \cdot {\rm poly}(||I||) = O(4^{n} \cdot {\rm poly}(||I||)$ time
according to Example~\ref{ex:rcc5}.

Now,
it is easy to verify that the set
$S=\{V_0,\dots,V_{n/3-1}\}$
is a sidedoor with radius 3 and $|S|=n/3$.
This implies (by Theorem~\ref{thm:sidedoor-eval}) that the CSP for the instances in ${\mathcal I}$ can be solved 
in $O(7^{n/3} \cdot {\rm poly}(||I||)) \subseteq O(1.9130^n \cdot {\rm poly}(||I||))$ time 
since $\Omega_3$ has branching factor 7. 
\end{example}

\subsection{Sidedoor Detection}
\label{sec:sidedoor-detection}

We begin by introducing the sidedoor detection problem.

\pbDef{\textsc{$\llbracket \source, \target,r \rrbracket$-Sidedoor Detection}}{An instance $(V,C)$
  of CSP$(\source)$ and an integer $k$}
  {Does $(V,C)$ contain a sidedoor to $\target$ of radius $r$ and size at most $k$.}

If $\source,\target$ contain only unary relations, then 
{\sc Sidedoor Detection} is a trivial problem (regardless of the radius). 
Similarly, if $\source,\target$ contain relations that are at most binary, then
$\llbracket \source,\target,2 \rrbracket$-{\sc Sidedoor Detection} is a trivial problem.
Thus, the most basic case when {\sc Sidedoor Detection} is possibly
computationally hard is when $\source,\target$ contain at most
binary relations and the radius is 3.
We prove \NP-hardness of this case via a reduction from a graph partitioning problem.
A {\em partition} of a set $S$ is a collection of disjoint subsets of $S$ such that
their union equals $S$. Let $K_n$ denote the undirected complete graph on $n$ vertices, i.e.
every two distinct vertices are connected by an edge.

\pbDef{\textsc{Edge Partition}$(r)$}{Undirected graph $G=(V,E)$}
  {Can $E$ be partitioned into sets $E_1,E_2,\dots$ such that each
$E_i$ induces a subgraph of $G$ that is isomorphic to $K_r$?}

\begin{theorem} (\cite{Holyer:sicomp81})
{\sc Edge Partition}$(r)$ is \NP-complete for every $r \geq 3$.
\end{theorem}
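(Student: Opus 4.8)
The plan is to establish the result in two parts: membership in \NP, which is immediate, and \NP-hardness, which I would obtain by a polynomial-time reduction from {\sc 3-Satisfiability}. For membership, a partition of $E$ witnessing a {\sc Yes}-instance has size at most $|E|$, and checking that each part is the edge set of a $K_r$ on $r$ vertices of $G$ and that the parts are pairwise disjoint and cover $E$ takes polynomial time. The bulk of the argument is the hardness direction, which I would carry out first for $r = 3$ (edge-partition into triangles) and then lift to arbitrary $r \geq 3$.

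For $r = 3$: given a {\sc 3-Sat} formula $\phi$ with variables $v_1, \dots, v_n$ and clauses $c_1, \dots, c_m$, I would build a graph $G_\phi$ out of three kinds of gadgets glued along shared edges. A \emph{variable gadget} $X_i$ is a small fixed graph whose edge set admits exactly two triangle-decompositions, read as the two truth values of $v_i$; it exposes, for each occurrence of $v_i$ in a clause, a ``port'' edge whose membership in a triangle inside or outside $X_i$ is forced by the chosen value. A \emph{wire gadget} is a short chain of triangle-blocks that copies the state of a port edge to where it is needed (copying is necessary since an edge can be reused by only one part). A \emph{clause gadget} $C_j$ is a small fixed graph attached to the three literal-wires feeding clause $c_j$, designed so that $C_j$ together with its three incident port edges admits a triangle-decomposition exactly when at least one literal arrives in its ``satisfied'' state. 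One then proves that $G_\phi$ admits an edge-partition into triangles if and only if $\phi$ is satisfiable; since every gadget has constant size and there are $O(n+m)$ of them, $G_\phi$ is constructible in polynomial time.

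The crux — and the step I expect to be hardest — is proving that the gadgets are \emph{rigid}: that the only triangle-decompositions of $G_\phi$ are the intended ones, so that (a) each variable gadget is forced into one of its two consistent states, (b) no triangle straddles two gadgets except through the designated port edges, and (c) the clause gadget genuinely behaves as an {\sc or}. The standard technique here is a local forcing argument: locate an edge at a low-degree ``anchor'' vertex, observe that it lies in only a bounded number of candidate triangles, and propagate the forced choices through the gadget; along the way one must also arrange the obvious necessary conditions (every vertex of $G_\phi$ has even degree, $|E(G_\phi)| \equiv 0 \pmod 3$) so that no decomposition is ruled out for trivial reasons. Finally, to pass from $r = 3$ to general $r \geq 3$, I would thicken each triangle-block of every gadget into a $K_r$ by adding $r - 3$ fresh private vertices, forcing such a block to be completed by taking all $\binom{r}{2}$ of its edges at once; the rigidity analysis then carries over with ``triangle through the anchor'' replaced by ``$K_r$ through the anchor'', at the cost of a longer case analysis. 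Alternatively, one can give a direct blow-up reduction from the $r = 3$ case.
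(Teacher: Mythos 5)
First, note that the paper does not prove this statement at all: it is quoted verbatim from Holyer (SICOMP 1981), so there is no in-paper proof to compare against. Your high-level strategy does match Holyer's actual argument --- membership in \NP{} is trivial, and hardness comes from a reduction from \textsc{3-Sat} using rigid gadgets that admit exactly two $K_r$-decompositions, read as truth values. But what you have written is a plan, not a proof. The entire technical content of Holyer's paper is the explicit construction of the gadget (his graph $H_{n,p}$, built from a triangulated tessellation) and the verification that it really has exactly the two intended decompositions and that no copy of $K_r$ can straddle gadgets. You acknowledge that this rigidity is ``the crux'' and ``the hardest step,'' but you do not construct a single concrete gadget or prove anything about one. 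It is not at all obvious that a small graph with exactly two triangle-decompositions and the required port behaviour exists; asserting that the ``standard technique'' of local forcing will work is not a substitute for exhibiting the graph and carrying the forcing out. As it stands, the hardness direction is entirely deferred.

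There is also a concrete flaw in your proposed lift from $r=3$ to general $r\geq 3$. If you thicken a triangle $\{a,b,c\}$ by adding $r-3$ private vertices adjacent to everything in the block, then any $K_r$ containing a private vertex $u$ must lie inside $u$'s closed neighbourhood, i.e.\ inside that block; since the block contains exactly one copy of $K_r$, every thickened block is \emph{forced} to be taken whole in any decomposition. But a gadget with two mutually exclusive triangle-decompositions necessarily contains intended triangles from the two decompositions that share an edge --- that is precisely what makes the decompositions exclusive. Thickening both forces both blocks, so the shared edge must be covered twice; the thickened graph then admits no $K_r$-partition at all, and the reduction collapses. (If instead you thicken only the triangles of one decomposition, the choice structure disappears for the other reason: everything is forced.) Holyer avoids this by constructing the gadget $H_{n,p}$ uniformly for every $n\geq 3$ rather than by reducing from the $n=3$ case; your alternative suggestion of a ``direct blow-up reduction'' would likewise need to be specified and verified, since na\"ive vertex- or edge-blow-ups do not preserve $K_r$-decomposability in the required way. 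So the general-$r$ step needs to be reworked, not merely elaborated.
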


We can now demonstrate that $(\source,\target,r)$-{\sc Sidedoor Detection}
is in general an \NP-hard problem.

\begin{theorem}
$\llbracket \{R,R'\}, \{R\},3 \rrbracket$-{\sc Sidedoor Detection} is \NP-hard when
$R,R'$ are arbitrary binary relations.
\end{theorem}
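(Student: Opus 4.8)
The plan is to reduce from \textsc{Edge Partition}$(3)$, which is \NP-complete by Holyer's theorem quoted above. Fix the binary relations $R,R'$; we may assume $R\neq R'$ as relations, since if $R=R'$ then $R'\in\target=\{R\}$, no constraint of any CSP$(\{R,R'\})$ instance is ``bad'', the empty family is always a sidedoor, and the problem is trivially polynomial. Given a graph $G=(V,E)$, I would first test whether $3\mid|E|$; if not, the algorithm outputs a fixed \textsc{No}-instance (two fresh variables $x,y$ with the single constraint $R'(x,y)$ and budget $k=0$, which has a bad constraint and hence no sidedoor of size $\le 0$). If $3\mid|E|$, set $m=|E|$, orient each edge arbitrarily, and let $I_G=(V,C)$ with $C=\{R'(u,v)\mid\{u,v\}\in E\}$, with budget $k=m/3$. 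The construction is clearly polynomial. The key observation is that the bad constraints of $I_G$ are exactly the $m$ constraints $R'(u,v)$, one per edge, that $r=3\ge\ar(R')$ so sidedoors of this radius can in principle exist, and that (by Definition~\ref{def:sidedoor}) a set $s\subseteq V$ covers $R'(u,v)$ precisely when $\{u,v\}\subseteq s$.

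For the forward direction, given a partition of $E$ into edge sets $E_1,\dots,E_t$ each forming a $K_3$ on a $3$-element vertex set $T_i$, one takes $S=\{T_1,\dots,T_t\}$: then $|S|\le t=m/3=k$, every $|T_i|=3\le r$, and each bad constraint $R'(u,v)$ has $\{u,v\}$ equal to an edge lying in some $E_i$, hence $\{u,v\}\subseteq T_i$, so $S$ is a sidedoor of radius $3$ and size at most $k$. For the backward direction, suppose $I_G$ has a sidedoor $S$ of radius $3$ with $|S|\le k=m/3$. Every $s\in S$ has $|s|\le 3$, so $s$ contains at most $\binom{3}{2}=3$ pairs and thus covers at most $3$ bad constraints. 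Since all $m$ edges must be covered, $m\le\sum_{s\in S}\bigl|\{e\in E:e\subseteq s\}\bigr|\le 3|S|\le 3k=m$, so equality holds throughout: $|S|=m/3$, every $s$ covers exactly $3$ bad constraints --- forcing $|s|=3$ with all three of its pairs in $E$, i.e.\ $s$ induces a $K_3$ --- and no edge is covered by two distinct members of $S$. Hence the edge sets $E_s=\{e\in E:e\subseteq s\}$ for $s\in S$ are pairwise disjoint triangles whose union is $E$, giving an edge partition of $G$ into copies of $K_3$.

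I expect the reduction itself to be routine; the one step that needs care is the double-counting argument in the backward direction, where the inequality chain has to be pushed to equality to deduce \emph{simultaneously} that each chosen vertex set is a triangle of $G$ and that these triangles are edge-disjoint (so that a genuine edge partition results, not merely an edge cover). The only remaining bookkeeping concerns the divisibility test on $|E|$ and the degenerate case $R=R'$, both handled above.
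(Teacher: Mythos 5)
Your reduction is correct and is essentially the paper's own proof: one binary constraint per edge with budget $|E|/3$, triangles of the edge partition as sidedoor sets in the forward direction, and a counting argument forcing each sidedoor set to be a triangle and the triangles to be edge-disjoint in the backward direction. You correctly use the non-target relation $R'$ for the edge constraints (the paper's text writes $R$ there, which appears to be a typo, since constraints over $R\in\target$ need no covering and the empty sidedoor would then suffice), and your explicit treatment of the divisibility test and of the degenerate case $R=R'$ merely tightens what the paper leaves implicit.
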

\begin{proof}
We prove this by a polynomial-time reduction from {\sc Edge Partition}$(3)$.
Arbitrarily choose an undirected graph $G=(V,E)$ and assume without loss of generality
that $|E|/3$ is an integer.
Construct an instance $I=((V,C),k)$ of $\llbracket \{R,R'\},\{R\},r \rrbracket$-{\sc Sidedoor Detection} where $C=\{R(x,y) \; | \; \{x,y\} \in E\}$
and $k=|E|/3$.

If $G$ is a \textsc{Yes}-instance of {\sc Edge Partition}$(3)$, then let $E_1,\dots,E_m$ denote such a partition of the edges in $G$.
Each set $E_i$, $1 \leq i \leq m$, contains three edges so $m=|E|/3$.
Each $E_i$, $1 \leq i \leq m$, is defined on a set of three vertices so we let $S_i$, $1 \leq i \leq m$, denote the corresponding
set of vertices. It is obvious that $\{S_1,\dots,S_m\}$ is a sidedoor of radius 3 for $(V,C)$ and it contains at most
$|E|/3$ sets. Thus $I$ is a \textsc{Yes}-instance.

Assume now that $I$ is a \textsc{Yes}-instance of $\llbracket \{R,R'\},\{R\},3 \rrbracket$-{\sc Sidedoor Detection}. 
Then there is a sidedoor
$\{S_1,\dots,S_k\}$ with $k=|E|/3$ and each $S_i$ contains at most three variables since the radius is three.
The instance $(V,C)|_{S_i}$, $1 \leq i \leq k$, must thus contain three distinct constraints since
the sidedoor cover all constraints in $C$ and $|C|=|E|$. Furthermore, if there is a constraint
$R(x,y)$ in $C$ such that $\{x,y\} \subseteq S_i$ and $\{x,y\} \subseteq S_j$ for some
$1 \leq i \neq j \leq k$, then the sidedoor covers strictly less that $|E|$ constraints in $C$.
This implies that the sets in the sidedoor are pairwise disjoint.
We conclude that the sidedoor is a partitioning of the set $V$ and that $E_1,\dots,E_k$ with
\[E_i=\{\{v,w\} \; | \; \mbox{$v,w$ are distinct elements in $S_i$}\}\] 
is an edge partition of $G$. The graph $G$ is consequently a \textsc{Yes}-instance.
\end{proof}

We next prove that $\llbracket \source,\target,r \rrbracket$-{\sc Sidedoor Detection} is fpt when parameterised by 
the size of the sidedoor.
We say that a variable $v \in V$ in a CSP instance $(V,C)$ is {\em isolated} if it does not appear in the
scope of any constraint in $C$.

\medskip

\begin{theorem}
\label{thm:sidedoor-detection}
$\llbracket \source,\target,r \rrbracket$-{\sc Sidedoor Detection} is solvable in $(rk)^{(rk)} \cdot \mathrm{poly}(||I||)$ time where $k$
is the size of the sidedoor. Hence, the problem is in $\FPT$ when parameterised by $k$.
\end{theorem}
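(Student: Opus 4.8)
The plan is to turn the problem into a brute-force search over a small universe of variables, exploiting the fact that a small sidedoor can only ``touch'' few variables. First I would isolate the constraints that matter. Call a constraint $R(x_1,\dots,x_k)\in C$ \emph{bad} if $R\notin\target$, and let $W\subseteq V$ be the set of all variables occurring in the scope of some bad constraint. By Definition~\ref{def:sidedoor}, a sidedoor only has to cover the bad constraints, so any variable outside $W$ is irrelevant: if $s$ is a member of some sidedoor and $v\in s\setminus W$, then deleting $v$ from $s$ still leaves every bad constraint it covered covered (their scopes lie in $W$) and does not increase $|s|$. Hence, without loss of generality, every set of a candidate sidedoor may be taken to be a subset of $W$ of size at most $r$.

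The key observation — and really the only substantive point — is that if $(V,C)$ admits a sidedoor $S$ of radius $r$ with $|S|\le k$, then $|W|\le rk$. Indeed every bad constraint has its scope contained in some $s\in S$, so $W\subseteq\bigcup_{s\in S}s$ and $\bigl|\bigcup_{s\in S}s\bigr|\le\sum_{s\in S}|s|\le rk$. Therefore the algorithm first computes $W$; if $|W|>rk$, it safely returns \textsc{No}.

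When $|W|\le rk$, it remains to decide whether some family of at most $k$ subsets of $W$, each of size at most $r$, covers every bad constraint. Represent a set $s\subseteq W$ with $|s|\le r$ as a length-$r$ list over $W\cup\{\bot\}$ (with $\bot$ used as padding); there are at most $(|W|+1)^r\le(rk+1)^r$ such lists, and a family of at most $k$ such sets (allowing empty, i.e.\ all-$\bot$, lists as harmless padding) is described by at most $(rk+1)^{rk}=\bigoh((rk)^{rk})$ choices, within the stated bound once the constant factor is absorbed into $\mathrm{poly}(||I||)$. For each choice the algorithm checks in polynomial time whether the corresponding family is a genuine sidedoor — that is, whether every bad constraint has its variable set contained in one of the chosen sets — and returns \textsc{Yes} exactly if some choice passes. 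Correctness in one direction uses the irrelevant-variable reduction of the first paragraph (restrict a hypothetical sidedoor to $W$, which is among the enumerated families); the other direction is immediate because any accepted family literally \emph{is} a sidedoor of radius $r$ and size at most $k$.

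The expected main obstacle is simply spotting the bound $|W|\le rk$, since it is what makes the universe small enough for brute force; the irrelevant-variable reduction, the counting of candidate families, and the polynomial-time verification of a single candidate are all routine. I would also remark that no assumption like $r\ge a$ is needed for \emph{detection}: if the radius is too small for any sidedoor to exist — for instance, some bad constraint has more than $r$ distinct variables in its scope — then no enumerated family passes the check and the algorithm correctly outputs \textsc{No}. (A bounded-search-tree variant, branching on an as-yet-uncovered bad constraint and guessing the covering set among the $\le(rk)^r$ subsets of $W$ extending its scope, yields the same $(rk)^{rk}\cdot\mathrm{poly}(||I||)$ running time and could be used instead.)
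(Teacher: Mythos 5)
Your proposal is correct and follows essentially the same approach as the paper's proof: remove the constraints over $\target$, observe that the remaining (non-isolated) variables must all be covered so there can be at most $rk$ of them, and then brute-force enumerate the at most $(rk)^{rk}$ candidate families of $k$ subsets of size $r$. Your explicit justification that sidedoor sets may be assumed to lie inside $W$ is a small added care the paper leaves implicit, but it does not change the argument.
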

\begin{proof}
Let $((V,C),k)$ be an instance of $\llbracket \source,\target,r \rrbracket$-{\sc Sidedoor Detection}.
Let $(V,C')$ be $(V,C)$ where all constraints whose relation is in $\target$ are removed.
Let $(V',C')$ be $(V,C')$ with all isolated variables removed.
If $|V'| > rk$, then $(V,C)$ does not have a sidedoor of size at most $k$
since every variable in $V'$ must be contained in at least one set of the sidedoor, and a sidedoor
of radius $r$ with $k$ elements can cover at most $rk$ variables.
Otherwise, enumerate all sets $\{S_1,\dots,S_k\}$ where $S_i$, $1 \leq i \leq k$ is a
subset of $V'$ of size $r$. The instance $((V,C),k)$ has a sidedoor if and only if at least
one of these sets is a sidedoor. We see that there are at most
$((rk)^r)^k = (rk)^{rk}$ such sets: our universe has size at most $rk$, there are at most $(rk)^r$ subsets of size
$r$, and we want to choose at most $k$ such sets. 
Thus, the algorithm runs in $(rk)^{(rk)} \cdot \mathrm{poly}(||I||)$ time.
\end{proof}


\section{Summary and Research Questions}
\label{sec:concludingremarks}

We have generalised the backdoor concept to CSPs over infinite domains and we have
presented parameterized complexity results for infinite-domain backdoors. Interestingly, despite being a strict generalisation of finite-domain backdoors, both backdoor detection and evaluation turned out to be in $\FPT$. Hence, the backdoor paradigm is applicable to infinite-domain CSPs, and, importantly, it is indeed possible to have a uniform backdoor definition (rather than having different definitions for equality languages, temporal languages, RCC-5, and so on).
We have noted that
the backdoor approach sometimes leads to inferior algorithms for binary constraints.
Inspired by this, we introduced the sidedoor approach where subinstances induced by the
sidedoor are rewritten in a solution-preserving way. We prove that the detection and evaluation problems for
sidedoors are in FPT, and we demonstrate that the time complexity of solving CSPs with sidedoors is
in general incomparable with the backdoor approach.
The running times of the two approaches are summarised in Table~\ref{tb:summary} where $|B|$ and $|S|$ denote backdoor and sidedoor size, respectively.
For a backdoor triple $[\source,\target,\base]$, we let $a$ denote the maximum arity of the relations in $\target$ and for
a sidedoor triple $\llbracket \source, \target, r \rrbracket$, we let $c$ denote the branching factor of the branching map
under consideration. The results for backdoors and sidedoors are, naturally, not directly comparable
since their sizes are not connected to each other in some obvious way.
One may still draw interesting conclusions, though, such that the complexity of sidedoor evaluation
is independent of the sidedoor triple---all that matters is the branching factor of the branching map. 
Thus, for problems admitting a branching map with low
branching factor, the sidedoor approach may be faster than the backdoor approach even
if sidedoors happen to be moderately larger than backdoors.

We continue this section by discussing a few possible directions for future research.


 

\begin{table}
  \centering
  \begin{tabular}{|l|l|l|l|l|}
    \hline
    \multirow{2}{*}{} &
      \multicolumn{2}{c|}{Finite languages} &
      \multicolumn{2}{c|}{Infinite languages} 
      \\
    & Backdoor & Sidedoor & Backdoor & Sidedoor  \\
    \hline

    Detection & $\binom{a}{2}^{|B|+1} \cdot |\base|^{|B|}$ &  $(r|S|)^{r|S|}$ & \Weft[2]-hard &  $(r|S|)^{r|S|}$  \\ \hline
        Evaluation & $|\base|^{|B|}$ & $c^{|S|}$ & (*) & $c^{|S|}$  \\
    \hline
  \end{tabular}
 
  \caption{Summary of complexity results (polynomial-time factors are omitted). (*) indicates that this problem is not interesting since backdoor detection is computationally hard.}
  \label{tb:summary}
\end{table}

\paragraph*{Broader tractable classes}


Recent advances concerning backdoor sets for SAT and finite-domain CSPs provide a number of promising 
directions for future work. For instance, Gaspers et al. (2017a)\nocite{GASPERS201738} have introduced the idea of so-called {\em hetereogenous} backdoor sets, i.e. backdoor sets into the disjoint union of more than one base language, and Ganian et al. (2017) \nocite{GanianRS17} have exploited the idea that if
variables in the backdoor set separate the instance into several independent components, then the instance can 
still be solved efficiently as long
as each component is in some tractable base class. The detection of heterogeneous backdoors is still fixed-parameter
tractable in many natural cases. Both of these approaches significantly enhance the power and/or generality of the backdoor approach for
finite-domain CSP and there is a good chance that these concepts can also be lifted to infinite-domain CSPs.
These approaches are highly interesting for sidedoors, too. 

Another interesting direction that is relevant for both back- and sidedoors is to drop the requirement that they move the
instance to a polynomial-time
solvable class---it may be sufficient that the class is solvable in,
say, single-exponential
$2^{O(n)}$ time. This can lead to substantial speedups
when considering
CSPs that are not solvable in $2^{O(n)}$ time. Natural classes of this kind are known to exist under
the exponential-time hypothesis~\cite{Jonsson:Lagerkvist:mfcs2018}, and concrete examples are given by certain extensions of Allen’s algebra that are
not solvable in $2^{o(n \log n)}$ time. 

Another promising direction
for future research is compact representations of backdoors. This approach has been studied for finite-domain CSPs with the
aid of decision trees or the more general concept of {\em backdoor DNFs}~\cite{OrdyniakSS21,Samer:Szeider:aaai2008}. The idea is to use decision trees or backdoor DNFs for representing all (partial) assignments
of the variables in the backdoor set. This can lead to a much more efficient algorithm for backdoor evaluation since instead of considering all assignments of the backdoor variables, one only needs to consider a potentially much smaller set of partial assignments of those variables that (1) cover all possible assignments and (2) for each partial assignment the reduced instance is in the base class. It has been shown that this approach may lead to an exponential improvement of the backdoor evaluation problem in certain cases, 
and it has been verified experimentally that these kinds of 
backdoors may be substantially smaller than the standard ones~\cite{OrdyniakSS21,Samer:Szeider:aaai2008}.

\paragraph*{The detection and evaluation problems for infinite languages}
Our results show that 
there is a significant difference between problems based on finite constraint languages
and those that are based on infinite languages.
The backdoor detection and evaluation problems are fixed-parameter tractable when the languages
are finite.
In the case of infinite languages, we know that the backdoor detection problem is \Weft[2]-hard
for certain choices of languages. This raises the following question: for which infinite source
languages is backdoor detection fixed-parameter tractable? This question is probably very hard
to answer in its full generality so it needs to be narrowed down in a suitable way.
A possible approach is to begin by studying this problem for equality languages. 

We note that the situation is
slightly different for sidedoors. Given a sidedoor triple $\llbracket \source,\target,r \rrbracket$,
we know that the sidedoor detection problem is fpt even if $\source$ and/or $\target$ are infinite.
The problem with sidedoors is instead that the radius $r$ limits the set of instances that the method
is applicable to. A motivated question here is what happens if we let the radius be a function
of the instance instead of being a fixed constant. If the radius is allowed
to increase linearly in the number of variables, then it is possible to adapt the \Weft[2]-hardness
result in Section~\ref{sec:inf} to the sidedoor setting. 
If the radius grows sublinearly, then the consequences are not clear.
It is, however, clear that the algorithm underlying Theorem~\ref{thm:sidedoor-detection} cannot lead to fpt algorithms
when the radius is allowed to depend on the given instance, so other algorithmic techniques are needed for this approach to work.

\paragraph*{Computation of simplification maps}
We have presented an
algorithm for constructing simplification maps that works under the condition that
the source and target languages are finite. However,
we have no general method to compute simplification maps for infinite languages.
It seems conceivable that the computation of simplification maps is an undecidable
problem and proving this is an interesting research direction. 
From a general point of view, the problem which we want to address is
the following. Given a relation represented by a first-order formula over $\base$ (i.e., corresponding to a simplified constraint) we wish to decide whether it is possible to find a CSP$(\target)$ instance whose set of models coincides with this relation.
This problem is known in the literature as the {\em inverse constraint satisfaction problem} over a constraint language $\target$ (Inv-CSP$(\target,\base)$), and it may be defined as follows.
 
 \pbDef{Inv-CSP$(\target, \base)$}{A relation $R$ (represented by an fo-formula over $\base$).}
 {Can $R$ be defined as the set of models of a CSP$(\target)$ instance?}

We are thus interested in finding polynomial-time solvable cases of this problem, since this would imply the existence of an efficiently computable simplification map to $\target$ even if the source language is infinite.
The Inv-CSP problem has been fully classified for the Boolean domain~\cite{kavvadias98,lagerkvist2020e}, but little is known for arbitrary finite domains, and even less has been established for the infinite case. 
We suspect that obtaining such a complexity classification is a very hard problem
even for restricted language classes such as equality languages.
One of the reasons for this is the very liberal way that the input is represented.
If one changes the representation, then a complexity classification may be easier to obtain.
A plausible way of doing this is to
restrict ourselves to $\omega$-{\em categorical} base structures.
The concept of $\omega$-categoricity plays a key role in the study of complexity
aspects of CSPs~\cite{Bodirsky:InfDom}, but it is also important from an AI perspective~\cite{hirsch1996relation,Huang:kr2012,Jonsson:ai2018}.
Examples of such structures include all structures with a finite domain and many relevant infinite-domain structures
such as $({\mathbb N};=)$, $({\mathbb Q};<)$, and the standard structures underlying formalisms such as
Allen's algebra and RCC.
For $\omega$-categorical base structures $\base$, each fo-definable relation $R$ can be partitioned into a finite number of equivalence classes with respect to the automorphism group of $\base$, and 
this gives a much more restricted way of representing the input. We leave this as an interesting future research project.

There are several other interesting research directions concerning the computation of simplification maps.
Our definition of simplification maps requires them to be maximally defined, in the sense that the map must be defined if a simplification over the target language exists. 
This is not a problem when both the target and source languages are finite, but if e.g.\ the source language is infinite then the landscape becomes more complicated, in particular if the general
construction problem is undecidable. 
Thus, does it make sense to allow suboptimal simplification maps which are 
oblivious to certain types of constraints, but which can be computed more efficiently? Or simplification maps where not all entries are polynomial time accessible? 

\paragraph*{Computation of branching maps}

Branching is one of the basic techniques for designing exponential-time algorithms
for combinatorial problems such as the CSP~\cite{Morrison:etal:do2016,Woeginger:CO2003}. Typically, one divide the problem recursively into smaller subproblems, and these
smaller subproblems are often computed by rewriting a small neighbourhood around a variable or a constraint in two or more
ways. This process is then required to preserve solvability in the following way: the original instance is satisfiable if and only
if at least one of the branching instances is satisfiable.
The close connection with branching maps is quite obvious: the small neighbourhood is the induced subinstance with
a particular radius, the rewriting process is carried out with the aid of the branching map, and we require
that this is done in a solution-preserving way (which is a stronger condition than the one needed for branching algorithms).
Given that it is a major research area to find good branching algorithms for combinatorial problems, it
is safe to assume that the computation of branching maps with reasonable branching factors is a very difficult problem, and that reasonable branching maps are highly problem dependent. Thus, it would be interesting to compare the bounds obtained by sidedoors and backdoors to the current best algorithms for other types of CSPs.
For which instances is each approach particularly suitable?

\paragraph*{Integer programming} 
A challenging research direction is to use the backdoor approach for (mixed) integer linear programming ((M)ILP), which
can be seen as an example of infinite CSPs that are not based on a finite set of JEPD relations. One very prominent result in this direction is Lenstra's celebrated
result showing that MILP is fixed-parameter tractable parameterized by the number of integer variables~\cite{Lenstra83}, which can be seen as a (variable) backdoor to LP.
In this context it seems particularly promising to explore variable (or even constraint) backdoors to the recently introduced tractable classes using the treedepth of the primal and dual graph
of an MILP instance~\cite{BrandKO21,abs-1904-01361} as such backdoors have the potential to generalise Lenstra's algorithm. It seems 
unlikely, though, that a general approach (as the one introduced
in this article) will be fruitful for MILP: we believe it is more likely that the backdoor approach for MILP requires tailor-made solutions for each tractable fragment. However, we still believe that our backdoor approach may serve as inspiration for this research direction.
We also believe that sidedoors may be relevant in this context but this is a more speculative idea.

\paragraph*{Satisfiability modulo theories}

{\em Satisfiability modulo theories} (SMT) is a formalism where one is given a first-order formula (with respect to a background theory $\Phi$) and wants to determine whether the formula admits at least one model. 
The reader unfamiliar with this formalism may e.g. consult the introduction by Barrett et al. (2009) \nocite{Barrett:etal:SMT}.
Thus, for a theory $\Phi$ we let SMT$(\Phi)$ be the problem of determining whether a given
first-order $\Phi$-formula
is satisfiable. 
One may discuss what a suitable backdoor concept for SMT is but one plausible idea is
that given such a formula, we extend it with atomic formulas (corresponding to the function $\alpha$ in
the CSP backdoor approach) and rewrite the resulting formula into some tractable class of formulas (corresponding
to the simplification map).
However, this idea immediately makes it possible to view the CSP backdoor problem
for infinite language as a restricted version of the backdoor problem for SMT, and this even holds for the
empty theory which corresponds to equality languages. The consequence of this
is obvious: the hardness example in Section~\ref{sec:inf} leads to \Weft[2]-hardness since every relation
that is used can be computed in polynomial time.

If we look at sidedoors instead, then the difficulties become worse since we now want to rewrite
subformulas of the given input formula. If the formula is written in CNF, then we can most likely
use an approach similar to branching maps. However, requiring the formula to be in CNF is unrealisitc
since it is directly opposed to the idea of SMT being an efficient way of representing
computational problems: naturally, we can convert the input formula to CNF but
this is not efficient since there are DNF formulas whose CNF equivalents are exponentially
larger.

\bibliographystyle{abbrv}
\bibliography{references}

\end{document}